\documentclass[12pt]{article}

\usepackage{natbib}
\usepackage[a4paper, margin=1in]{geometry}
\usepackage[nodisplayskipstretch]{setspace}
\usepackage{authblk}

\usepackage{graphicx, tabularx, booktabs}
\usepackage{amsthm, amsmath, amsfonts, amssymb}
\usepackage{xr, hyperref}
\usepackage[capitalize]{cleveref}
\usepackage{MnSymbol}
\usepackage{bbold}
\usepackage{xcolor}
\usepackage{enumitem, multicol}
\usepackage[lined,boxed,commentsnumbered,ruled,linesnumbered,algosection]{algorithm2e}

\numberwithin{equation}{section}
\allowdisplaybreaks
\setlength{\abovedisplayskip}{0pt}
\setlength{\belowdisplayskip}{0pt}
\setlength{\abovedisplayshortskip}{0pt}
\setlength{\belowdisplayshortskip}{0pt}

\theoremstyle{plain}
\newtheorem{theorem}{Theorem}
\newtheorem{proposition}{Proposition}

\newtheorem{lemma}{Lemma}

\theoremstyle{definition}

\newtheorem{example}{Example}
\newtheorem{remark}{Remark}
\newtheorem{assumption}{Assumption}

\setlist[enumerate]{label={(\alph*)}, nosep}
\setlist[itemize]{nosep}

\definecolor{myblue}{rgb}{0.1, 0.1, 0.7}
\hypersetup{
	hidelinks,
	linktocpage,
	colorlinks=true,
	linkcolor=myblue,
	citecolor=myblue,
	filecolor=myblue,
	urlcolor=myblue,
}

\crefformat{section}{Section~#2#1#3}
\crefformat{subsection}{Section~#2#1#3}
\crefformat{subsubsection}{Section~#2#1#3}
\crefrangeformat{section}{Sections~#3#1#4 to~#5#2#6}
\crefmultiformat{section}{Sections~#2#1#3}{ and~#2#1#3}{, #2#1#3}{ and~#2#1#3}
\crefformat{equation}{Equation~#2#1#3}
\crefrangeformat{equation}{Equations~#3#1#4 to~#5#2#6}
\crefmultiformat{equation}{Equations~#2#1#3}{ and~#2#1#3}{, #2#1#3}{ and~#2#1#3}
\newcommand{\Supp}{Supplement}

\newcommand{\norm}[1]{\left\lVert #1 \right\rVert}

\newcommand{\T}{\intercal}
\newcommand{\argmin}{\mathop{\mathrm{argmin}}}
\newcommand{\diag}{\mathrm{diag}}
\newcommand{\tr}{\mathrm{tr}}

\newcommand{\Z}{\mathbb{Z}}
\newcommand{\R}{\mathbb{R}}

\newcommand{\pr}{\mathbb{P}}
\newcommand{\I}{\mathbb{1}}
\newcommand{\E}{\mathbb{E}}
\newcommand{\Var}{\mathrm{Var}}

\newcommand{\cin}[1]{\stackrel{\mathrm{#1}}{\to}}
\newcommand{\simIID}{\stackrel{\mathrm{iid}}{\sim}}
\newcommand{\Corr}{\mathrm{Corr}}

\newcommand{\Normal}{\mathrm{N}}

\newcommand{\GMM}{\mathsf{GMM}}
\newcommand{\OGMM}{\mathsf{OGMM}}
\newcommand{\SGMM}{\mathsf{SGMM}}
\newcommand{\GMWM}{\mathsf{GMWM}}

\newcommand{\TSLS}{\mathsf{2SLS}}
\newcommand{\LEQR}{\mathsf{LEQR}}

\newcommand{\restricted}{\mathrm{R}}
\newcommand{\unrestricted}{\mathrm{U}}
\newcommand{\full}{\mathrm{F}}

\newcommand{\RV}{\mathrm{RV}}
\newcommand{\RQ}{\mathrm{RQ}}

\date{}

\begin{document}

\title{\bf Online Generalized Method of Moments for Time Series}
\author[1]{Man Fung Leung \thanks{Correspondence email: \href{mailto:mfleung2@illinois.edu}{\nolinkurl{mfleung2@illinois.edu}}}}
\author[2]{Kin Wai Chan}
\author[3]{Xiaofeng Shao}
\affil[1]{Department of Statistics, University of Illinois at Urbana-Champaign}
\affil[2]{Department of Statistics, The Chinese University of Hong Kong}
\affil[3]{Department of Statistics and Data Science, and Department of Economics,  Washington University in St. Louis}
\maketitle

\begin{abstract}
Online learning has gained popularity in recent years due to the urgent need to analyse large-scale streaming data, which can be collected in perpetuity and serially dependent.
This motivates us to develop the online generalized method of moments ($\OGMM$), an explicitly updated estimation and inference framework in the time series setting.
The $\OGMM$ inherits many properties of offline $\GMM$, 
such as its broad applicability to many problems in econometrics and statistics, 
natural accommodation for over-identification, 
and achievement of semiparametric efficiency under temporal dependence.  
As an online method, the key gain relative to offline $\GMM$ is the vast improvement in time complexity and memory requirement. 

Building on the $\OGMM$ framework, we propose improved versions of online Sargan--Hansen and structural stability tests following recent work in econometrics and statistics.
Through Monte Carlo simulations, we observe encouraging finite-sample performance in online instrumental variables regression, online over-identifying restrictions test, online quantile regression, and online anomaly detection. 
Interesting applications of $\OGMM$ to stochastic volatility modelling and inertial sensor calibration are presented to demonstrate the effectiveness of $\OGMM$.
\end{abstract}

\noindent
{\it Keywords:}
Instrumental variable;
Online learning;
Quantile regression;
Recursive estimation;
Streaming data.

\vfill
\newpage
\setstretch{1.25}

\section{Introduction} \label{sec:introduction}

The generalized method of moments ($\GMM$, \citealp{hansen1982gmm}) is a fundamental estimation and inference framework in econometrics and statistics. 
It is widely used to analyse economic and financial data, and subsumes many popular statistical methods such as least squares, maximum likelihood, and instrumental variables regression \citep{hall2005gmm}.
However, despite being partly motivated by situations in which asymptotically efficient estimation (e.g., maximum likelihood) is computationally burdensome \citep{hansen1982gmm,hall2005gmm},
$\GMM$ does not scale to modern data sets with millions of data points. 
For large-scale streaming data, which can be collected in perpetuity and serially dependent, the classical (offline) $\GMM$ is too computationally expensive and may fail due to violation of memory constraints \citep{chen2023sgmm}.

With the emergence of streaming data collection techniques, it has become important to develop online estimation and inference procedures that can be computationally (and memory) efficient while preserving statistical properties of their offline counterparts.
In the context of $\GMM$, \citet{chen2023sgmm} adapted the stochastic approximation \citep{robbins1951sgd} to linear instrumental variables regression for independent data and provided plug-in and self-normalized inference procedures.
They also discussed learning rate selection, multi-epoch estimation, and nonlinear $\GMM$.
\citet{luo2020renew} developed renewable estimation and incremental inference for generalized linear models with independent and identically distributed (iid) data. 
Their proposed estimator is based on estimating equations, which correspond to $\GMM$ with exact identification. 
Subsequently, \citet{luo2022qif} extended renewable estimation to longitudinal data analysis using the quadratic inference function \citep{qu2000qif}, which is also a special case of $\GMM$. 
They further discussed anomaly detection based on a modified Sargan--Hansen test. 
In a related work, \citet{luo2023qif} focused on streaming longitudinal data with a fixed set of participants.
Under a first-order autoregressive working correlation structure, they proposed a new decomposition for the quadratic inference function that allows online updates. 
They also considered time-varying parameter estimation by introducing an exponential smoothing factor that dynamically adjusts the weights applied to historical data batches. 

The overarching goal of this article is to develop the online generalized method of moments ($\OGMM$), a general and efficient estimation and inference framework for streaming time series, which seems lacking in the literature. 
Using the first-order Taylor approximation and a lagged linearization (i.e., a linearization step with the $\OGMM$ estimator obtained at the previous step plugged in), 
we derive an explicitly updated estimator in contrast to implicitly updated renewable estimators in the literature \citep{schifano2016renew,luo2020renew,luo2022qif,luo2023qif}; 
see \citet{toulis2017implicit} for the difference between explicit and implicit updates in the context of stochastic approximation. 
To achieve semiparametric efficiency, we employ the optimal weighting matrix by inverting the online long-run variance estimator in \citet{rlrv}.
This allows us to perform efficient estimation and inference for the parameter in a fully online fashion.

Our major contributions include: 

\begin{enumerate}
	\item We develop an online $\GMM$ estimator in the time series setting and show that the $\OGMM$ estimator is consistent, asymptotically normal, and achieves the same asymptotic efficiency as offline $\GMM$ and implicitly updated renewable estimators;
	
	\item Our framework is autocorrelation-robust by handling the serial dependence in a nonparametric fashion. 
	The semiparametric efficiency is thus achieved through online estimation of the optimal weighting matrix. 
	Other than the user-chosen parameters involved in the online updates of the weighting matrix, our algorithm does not involve any tuning parameter such as the learning rate in stochastic approximation;
	
	\item Following the work of \citet{chen2023sgmm} and \citet{luo2022qif}, we propose improved versions of online Sargan--Hansen and structural stability tests, which do not suffer from considerable size distortion \citep{chen2023sgmm} nor require historical raw data \citep{luo2022qif}; 
	
	\item Owing to the broad framework of $\GMM$, we can include online one-step estimation, online least squares, online quantile regression and online instrumental variables regression as special cases.
	Numerical comparison with several existing methods shows that the finite-sample performance of our $\OGMM$ is very competitive. 
\end{enumerate}

The rest of this paper is organized as follows.
\cref{sec:review} reviews relevant literature and introduces some notation.
\cref{sec:method} formulates $\OGMM$ by solving three computational challenges faced by $\GMM$, and shows that $\OGMM$ covers many existing statistical methods as special cases.
\cref{sec:theory} develops the asymptotic theory of $\OGMM$.
\cref{sec:simulations} compares $\OGMM$ with some existing statistical methods in Monte Carlo experiments.
\cref{sec:applications} presents two applications in stochastic volatility modelling and inertial sensor calibration.
\cref{sec:discussion} discusses our findings and some future directions.
All experiments are performed on Red Hat Enterprise Linux 9.4 with an Intel Xeon Gold 6148 CPU and R version 4.2.3.
An R-package \texttt{ogmm} that implements our framework is available online.
All proofs and some additional results are also deferred to the online \Supp \citep{supp}.

\section{Literature review} \label{sec:review}

Consider stationary and ergodic time series data $\{ D_i = (x_{i,1}, x_{i,2}, \ldots, x_{i,n_i}) \}_{i=1}^b$ arriving in batches, 
where $x_{i,j} \in \R^d$ is the $j$-th vector of observations in the $i$-th batch of data with sample size $n_i$.
Let $\theta^* \in \Theta \subset \R^p$ be a vector of unknown parameters which are to be estimated, and
$g(\theta, x)$ be a $q \times 1$ vector of functions that satisfies the population moment condition
$\E\{ g(\theta^*, x_{i,j}) \} = 0$ for all $i,j$ with $q \ge p$.
To simplify the notation, write $G(\theta; D_i) = \sum_{j=1}^{n_i} g(\theta, x_{i,j})$ and $N_b = \sum_{i=1}^b n_i$.
The celebrated $\GMM$ estimator of $\theta^*$ is
\begin{equation} \label{eq:gmm}
	\tilde\theta_{b, \GMM} 
	= \argmin_{\theta \in \Theta} \bar{g}_b(\theta)^\T \tilde{W} \bar{g}_b(\theta),
\end{equation}
where $\bar{g}_b(\theta) = N_b^{-1} \sum_{i=1}^b G(\theta; D_i)$ is the sample moment and $\tilde{W}$ is a weighting matrix.

Under suitable conditions (\citealt{hansen1982gmm}), it is well known that \eqref{eq:gmm} is asymptotically optimal when $\tilde{W} = \tilde\Sigma^{-1}$, 
where $\tilde\Sigma$ is a consistent estimator of the long-run variance matrix 
\begin{equation} \label{eq:lrv}
	\Sigma = \lim_{N_b \to \infty} N_b \Var\{ \bar{g}_b(\theta^*) \}.
\end{equation}
In this case, $\sqrt{N_b}(\tilde\theta_{b, \GMM} - \theta^*) \cin{d} \Normal(0, (V^\T \Sigma^{-1} V)^{-1})$, where $V = \E\{ \nabla g(\theta^*, x_{i,j}) \}$ is the expectation of the gradient of $g(\theta^*, x_{i,j})$ and $\cin{d}$ denotes convergence in distribution.
However, as $\Sigma$ depends on the unknown $\theta^*$, substitutes of $\theta^*$ are necessary for efficient estimation.
Classical variants of $\GMM$ such as two-step, iterated and continuously updating differ in the way to substitute $\theta^*$; see \citet{hall2005gmm} and the references therein.

There is a rich literature on the estimation of \eqref{eq:lrv} because the long-run variance is an important quantity in time series analysis.
However, classical nonparametric estimators utilizing the overlapping batch means \citep{meketon1984obm} or kernels/lag windows \citep{parzen1957kernel,andrews1991kernel} cannot be updated online.
\citet{wu2009recursive} and \citet{rtavc,rtacm} proposed different online alternatives based on subsample selection rules, which replace the constant batch size of a batch means estimator with a sequence of batch sizes.
\citet{rlrv} developed a different approach by decomposing kernels, which led to estimators with higher statistical and computational efficiency.

To update $\GMM$ and obtain online estimates, the stochastic approximation \citep{robbins1951sgd} is a natural candidate.
By averaging the iterates \citep{polyak1992average}, one can further conduct inference using an online long-run variance estimator \citep{chen2020sgd,zhu2023sgd,rlrv} or a self-normalizer/random scaling \citep{lee2022sgd}.
\citet{chen2023sgmm} appears to be the first to extend the stochastic approximation to the $\GMM$ setting, and they developed the stochastic $\GMM$ ($\SGMM$) for independent data.
They also proposed an online Sargan--Hansen test statistic and state its asymptotic distribution under the null hypothesis.
Nevertheless, their premier algorithm assumes a fixed sample size and may not be efficient after the first epoch, which is why they recommend running over multiple epochs in practice \citep{chen2023sgmm}.
The computational cost or number of epochs to obtain an efficient estimate may vary, which makes $\SGMM$ unsuitable for streaming data collected in perpetuity.
On the other hand, their online Sargan--Hansen test may suffer from considerable size distortion in finite sample.
Since $\SGMM$ is based on explicit stochastic approximation, its performance can also be sensitive to the choice of initial learning rate \citep{toulis2017implicit}.
We will revisit each of these issues and show numerical evidence in \cref{sec:simulations}.

Another candidate for updating $\GMM$ is renewable estimation, a term coined in \citet{luo2020renew}.
Essentially, one stores a finite number of statistics after first-order Taylor approximations so that the estimating equation and its solution (i.e., the renewable estimator) only depends on the stored statistics and the latest batch of data \citep{schifano2016renew,luo2020renew}.
\citet{luo2022qif,luo2023qif} investigated extension to longitudinal data analysis using quadratic inference function, 
which allows for over-identification due to the inverse working correlation matrix being approximated by a linear combination of some basis matrices.
Renewable estimation is also widely applied in variants of online quantile regression \citep{jiang2022rdensity,sun2024rindicator,jiang2024runconditional}.
Nevertheless, renewable estimators involve solving fixed-point equations so their updates depend on the stopping criteria and are implicit in the sense of \citet{toulis2017implicit}.
Explicitly updated estimators with a lower computational cost remain largely unexplored in the literature.
Additionally, the anomaly detection statistic in \citet{luo2022qif} requires storing a reference data batch and solving an optimization problem for every new data batch.
A fully online detector remains to be developed.
We will discuss renewable estimation again in \cref{sec:method} as it is closely related to our proposal.
Table \ref{tab:properties} highlights some key differences between our proposal and some existing ones.

\begin{table}[!t]
	\centering
	\scalebox{0.99}{
		\begin{tabularx}{1.01\textwidth}{Xcccc}
			\toprule[2pt]
			Method & Data assumption & Over-identification & Update \\
			\midrule[1pt]
			$\SGMM$ & independent & allowed & explicit \\
			\citet{luo2020renew} & iid & not allowed & implicit \\
			\citet{luo2022qif} & independent & under working correlation & implicit \\
			\citet{luo2023qif} & first-order autoregressive & under working correlation & implicit \\
			$\OGMM$ (proposal) & stationary \& ergodic & allowed & explicit \\
			\bottomrule[2pt]
		\end{tabularx}
	}
	\caption{Summary of properties of different online estimation and inference framework.}
	\label{tab:properties}
\end{table}

Before ending this section, we introduce some notations. 
Denote the gradient operation with respect to $\theta$ by $\nabla_\theta$.
For any vector $a$ and matrix $A$, denote the $l^2$-vector norm and Frobenius norm by $\norm{a}_2 = \surd a^\T a$ and $\norm{A}_F = \surd \tr(A^\T A)$, respectively.
Denote the identity matrix of size $q$ by $I_q$.
We reserve the tilde, circumflex and check for offline, $\OGMM$ and other online methods (e.g., $\tilde\theta_{b, \GMM}$, $\hat\theta_{b, \OGMM}$ and $\check\theta_{k, \SGMM}$), respectively.
We may also suppress the subscript if the context is clear.

\section{Methodology} \label{sec:method}

In \crefrange{sec:online-moment}{sec:online-weighting-matrix}, we formulate $\OGMM$ by solving three computational challenges involved in adapting $\GMM$ to an online setting.
Then, we discuss implementation and some special cases in \cref{sec:implementation}.
Some online inference examples are given in \cref{sec:online-inference}.

\subsection{Challenges in moment computation} \label{sec:online-moment}

In general, the sample moment $\bar{g}_b(\theta)$ cannot be updated online when $g(\theta, x)$ is nonlinear in $\theta$. 
To see this, suppose we have $\bar{g}_{b-1}(\theta_1)$ from the previous step and would like to compute $\bar{g}_b(\theta_2)$.
We cannot apply the recursive formula for sample mean because it requires the evaluation of $\bar{g}_{b-1}(\theta_2)$.
However, we can derive an online moment using the first-order Taylor approximation. 
Let $\hat\theta_1$ be an initial estimate based on $D_1$ and
$\hat{W}$ be a weighting matrix.
For $b = 1, 2, \ldots$, define
\begin{align*}
	V_b = \frac{1}{N_b} \sum_{i=1}^b \nabla G(\hat\theta_i; D_i) \quad \text{and} \quad
	U_b = \frac{1}{N_b} \sum_{i=1}^b \left\{ G(\hat\theta_i; D_i) -\nabla G(\hat\theta_i; D_i) \hat\theta_i \right\},
\end{align*}
which are estimators of  $V = \E\{ \nabla g(\theta^*, x_{i,j}) \}$ and $U = \E\{ g(\theta^*, x_{i,j}) \} -V \theta^*$, respectively.
Since $V_b$ and $U_b$ are the latest online estimates after $\hat\theta_b$ is obtained,
$U_b +V_b \theta$ is an online approximation of the sample moment $\bar{g}_b(\theta)$.
When $b = 2$, we have
\begin{align*}
	N_2 \bar{g}_2(\theta)
	&= G(\hat\theta_1; D_1) +\nabla G(\hat\theta_1; D_1) (\theta-\hat\theta_1)
	+G(\theta; D_2)
	+O_p\left( n_1 \norm{\hat\theta_1-\theta}_2^2 \right) \\
	&= N_1 (U_1 +V_1 \theta) 
	+G(\theta; D_2)
	+O_p\left( n_1 \norm{\hat\theta_1-\theta}_2^2 \right).
\end{align*}
By storing $U_1$, $V_1$ and $N_1$ in advance, we can compute $N_1 (U_1 +V_1 \theta) +G(\theta; D_2)$ in $O(n_2 q)$ time without using $D_1$ explicitly.
If the error term is asymptotically negligible, a reasonable online estimator can be obtained from
\begin{equation} \label{eq:ogmm-optimization}
	\argmin_{\theta \in \Theta} \frac{1}{N_2^2} \left\{ N_1 (U_1 +V_1 \theta) +G(\theta; D_2) \right\}^\T \hat{W} \left\{ N_1 (U_1 +V_1 \theta) +G(\theta; D_2) \right\}.
\end{equation}

The derivation of \eqref{eq:ogmm-optimization} shares the same idea as renewable estimation, 
except that the literature (e.g., \citealt{luo2020renew}) consider solving the first-order optimality condition
\begin{equation} \label{eq:ogmm-foc}
	\frac{1}{N_2^2} \left\{ N_1 V_1 +\nabla G(\hat\theta_2; D_2) \right\}^\T \hat{W} \left\{ N_1 (U_1 +V_1 \hat\theta_2) +G(\hat\theta_2; D_2) \right\}
	= V_2^\T \hat{W} (U_2 +V_2 \hat\theta_2)
	= 0.
\end{equation}
Rewriting \eqref{eq:ogmm-foc} gives $\hat\theta_2 = -(V_2 \hat{W} V_2)^{-1} V_2^\T \hat{W} U_2$, which is a fixed-point equation because $U_2$ and $V_2$ depend on $\hat\theta_2$.
Therefore, the Newton--Raphson method is usually used to solve \eqref{eq:ogmm-foc} iteratively in the literature.

\subsection{Challenges in stable computation} \label{sec:linearization}

For time-sensitive applications, solving an optimization problem like \eqref{eq:ogmm-optimization} or a fixed-point equation like \eqref{eq:ogmm-foc} for every new data batch may be undesirable because the time to meet the stopping criteria may vary.
In light of it, we propose to estimate $V$ and $U$ by
\begin{align*}	
	\hat{V}_b &= \frac{1}{N_b} \sum_{i=1}^b \nabla G(\hat\theta_{\min(i, b-1)}; D_i) \\
	&= \frac{1}{N_b} \left\{ N_{b-1} V_{b-1} +\nabla G(\hat\theta_{b-1}; D_b) \right\} \quad \text{and} \\
	\hat{U}_b &= \frac{1}{N_b} \sum_{i=1}^b \left\{ G(\hat\theta_{\min(i, b-1)}; D_i) -\nabla G(\hat\theta_{\min(i, b-1)}; D_i) \hat\theta_{\min(i, b-1)} \right\} \\
	&= \frac{1}{N_b} \left\{ N_{b-1} U_{b-1} +G(\hat\theta_{b-1}; D_b) -\nabla G(\hat\theta_{b-1}; D_b) \hat\theta_{b-1} \right\},
\end{align*}
respectively.
Essentially, we replace $\hat\theta_b$ by $\hat\theta_{b-1}$ in $V_b$ and $U_b$ to obtain $\hat{V}_b$ and $\hat{U}_b$, respectively.
Then, we can stabilize the computational time by using the lagged $\hat\theta_1$ to linearize $G(\hat\theta_2; D_2)$ and replacing $\nabla G(\hat\theta_2; D_2)$ with $\nabla G(\hat\theta_1; D_2)$ in \eqref{eq:ogmm-foc}, which leads to
\[
\hat{V}_2^\T \hat{W} (\hat{U}_2 +\hat{V}_2 \hat\theta_2)
= 0. \tag{\ref*{eq:ogmm-foc}$'$}
\]
The solution $\hat\theta_2 = -(\hat{V}_2 \hat{W} \hat{V}_2)^{-1} \hat{V}_2^\T \hat{W} \hat{U}_2$ is explicit and does not involve any tuning parameter.
These are also true for $b > 2$ with
\begin{equation} \label{eq:ogmm-linear}
	\hat\theta_{b, \OGMM} = -(\hat{V}_b^\T \hat{W} \hat{V}_b)^{-1} \hat{V}_b^\T \hat{W} \hat{U}_b;
\end{equation}
see Algorithm \ref{algo:ogmm} for the recursive steps involved in the computation of $\hat\theta_{b, \OGMM}$. 

If $\hat{W} = I_q$ and $g(\theta, x)$ is the score function, \eqref{eq:ogmm-linear} is same as the implicit estimator in \citet{luo2020renew} obtained via the Newton--Raphson method with one iteration.
In other words, \eqref{eq:ogmm-linear} can be interpreted as an explicit estimator, which seems largely unexplored in the renewable estimation literature.
For stochastic gradient descent, \citet{toulis2017implicit} showed that implicit and explicit procedures have identical asymptotic performance, but explicit estimators are more sensitive to the learning rate. 
Interestingly, \eqref{eq:ogmm-linear} also achieves the same asymptotic efficiency as its offline and implicit online counterparts under an additional condition that $n_i = o(N_{i-1}^2)$ for $i = 2, \ldots, b$; see Theorem \ref{thm:normality-par} and Remark \ref{rmk:explicit} for a discussion. 
However, \eqref{eq:ogmm-linear} does not require learning rate selection.

\subsection{Challenges in weighting matrix computation} \label{sec:online-weighting-matrix}

The final challenge is the choice of $\hat{W}$, which is important as it affects the statistical efficiency. 
Computationally, there are at least three estimation approaches as discussed below.
(1) We compute $\hat\Sigma_1$ based on $D_1$ and fix $\hat{W} = \hat\Sigma_1^{-1}$.
Many classical estimators are available and the computational cost is minimized.
Nevertheless, this approach may be inefficient since it does not utilize the sequentially arrived data.
(2) We assume $g(\theta^*, x_{i,j})$ follows some parametric model (e.g., a vector autoregressive moving average model; see Section 3.5.2 of \citealt{hall2005gmm}) and update such model recursively.
Then, we can plug in the online parameter estimates to obtain $\hat\Sigma_b$ via the long-run variance estimator based on the parametric model and set $\hat{W} = \hat\Sigma_b^{-1}$.
This is similar to using a working correlation matrix in \citet{luo2022qif,luo2023qif}, which leads to a loss of statistical efficiency in case of model misspecification.
(3) We update $\hat{W}$ using an online nonparametric estimator $\hat\Sigma_b(K)$,
where $K$ is a kernel function that satisfies some conditions. 
This approach yields both statistical efficiency and autocorrelation robustness.
While it involves some smoothing parameter, there are automatic optimal selectors in the literature \citep{rtacm,rlrv} so that users only need to provide problem-specific information, e.g., the strength of serial dependence and the memory constraint.

In general, we recommend the third approach.
If users have a strict time budget or are certain that the data are independent, they can use a fixed weighting matrix or invert \citeauthor{welford1962recursive}'s \citeyearpar{welford1962recursive} sample variance estimator, which are implemented in our R-package \texttt{ogmm}.
However, the default choice is to invert the long-run variance estimator in \citet{rlrv}, whose implementation allows batch updates and includes a positive definiteness adjustment.
To be specific, their long-run variance estimator takes the form
\begin{equation} \label{eq:lrv-kernel}
	\hat\Sigma_b(K)
	= \frac{1}{N_b} \sum_{i=1}^{N_b} \sum_{j=1}^{N_b} K_{N_b}(i, j) (X_i -\bar{X}_{N_b}) (X_j -\bar{X}_{N_b})^\T,
\end{equation}
where $X_k$'s are the evaluated moment functions indexed by a single subscript in our setting, 
i.e., $X_k = g(\hat\theta_b, x_{b,j})$ if $k = \sum_{i=1}^{b-1} n_i +j$, and 
$\bar{X}_{N_b} = N_b^{-1} \sum_{i=1}^{N_b} X_i \equiv N_b^{-1} \sum_{j=1}^b G(\hat\theta_j; D_j)$.
Their kernel takes the form
\[
K_n(i, j) 
= \left( 1 -\frac{|i-j|^\lambda}{t_n^\lambda} \right) \I_{|i-j| \le s'_{i \lor j}},
\]
where $s_n = \min( \lfloor \Psi n^\psi \rfloor, n-1)$; 
$t_n = \min( \lceil \Xi n^\xi \rceil, n)$;
\[
s'_n = \left\{
\begin{array}{ll}
	s'_{n-1} +1, &\quad \text{if} \quad s_{n-1} \le s'_{n-1} +1 < \phi s_{n-1}; \\
	s_n, &\quad \text{if} \quad s'_{n-1} +1 \ge \phi s_{n-1}.
\end{array}
\right.
\]
$\Psi,\Xi \in \R^+$; 
$\psi,\xi \in (0,1)$; 
$\lambda \in \Z^+$; and
$\phi \in [1,\infty)$.
Their optimal parameters selector automatically handles $\Psi$, $\Xi$, $\psi$ and $\xi$ so users only need to choose $\lambda$ and $\phi$.
Following \citet{rlrv}, we recommend $\lambda=1$ ($\lambda=3$) if the serial dependence is strong (weak), and $\phi=1$ ($\phi=2$) if the memory is abundant (scarce).

\subsection{Implementation and special cases} \label{sec:implementation}

\begin{algorithm}[!t]
	\caption{Online Generalized Method of Moments} \label{algo:ogmm}
	\SetAlgoVlined
	\DontPrintSemicolon
	\SetNlSty{texttt}{[}{]}
	\small
	\textbf{initialization}: \;
	Set $b=1$ and $N_1 = n_1$ \;
	Compute $\hat\theta_1$ using $D_1$ and any reasonable method (e.g., $\GMM$) \;
	Initialize $\hat\Sigma_1$ using $\hat\theta_1$ and set $\hat{W} = \hat\Sigma_1^{-1}$ \;
	Set $U_1 = \bar{g}_1(\hat\theta_1) -\nabla \bar{g}_1(\hat\theta_1) \hat\theta_1$ and 
	$V_1 = \nabla \bar{g}_1(\hat\theta_1)$ \;
	\Begin{
		Set $b = b +1$ \;
		Receive $D_b$ and set $N_b = N_{b-1} +n_b$ \;
		Set $\hat{U}_b = N_b^{-1} \{ N_{b-1} U_{b-1} +G(\hat\theta_{b-1}; D_b) -\nabla G(\hat\theta_{b-1}; D_b) \hat\theta_{b-1} \}$ \;
		Set $\hat{V}_b = N_b^{-1} \{ N_{b-1} V_{b-1} +\nabla G(\hat\theta_{b-1}; D_b) \}$ \;
		Compute $\hat\theta_b = -(\hat{V}_b^\T \hat{W} \hat{V}_b)^{-1} \hat{V}_b^\T \hat{W} \hat{U}_b$ \;
		(Optional) update $\hat\Sigma_{b-1}$ using $\hat\theta_b$ and set $\hat{W} = \hat\Sigma_b^{-1}$ \;
		Set $U_b = N_b^{-1} \{ N_{b-1} U_{b-1} +G(\hat\theta_b; D_b) -\nabla G(\hat\theta_b; D_b) \hat\theta_b \}$ \;
		Set $V_b = N_b^{-1} \{ N_{b-1} V_{b-1} +\nabla G(\hat\theta_b; D_b) \}$ \;
	}
\end{algorithm}

Algorithm \ref{algo:ogmm} summarizes the key steps in $\OGMM$ estimation.
Alternatively, let $\hat\theta'_b = \hat\theta'_{b-1} -\{ (\hat{V}'_b)^\T \hat{W} \hat{V}'_b \}^{-1} (\hat{V}'_b)^\T \hat{W} \hat{U}'_b$, where
$V'_1 = \nabla \bar{g}_1(\hat\theta'_1)$, 
$V'_b = N_b^{-1} \{ N_{b-1} V'_{b-1} +\nabla G(\hat\theta'_b; D_b) \}$,
$\hat{V}'_b = N_b^{-1} \{ N_{b-1} V'_{b-1} +\nabla G(\hat\theta'_{b-1}; D_b) \}$, 
$U'_1 = \bar{g}_1(\hat\theta'_1)$, 
$U'_b = N_b^{-1} \{ N_{b-1} U'_{b-1} +N_{b-1} V'_{b-1} (\hat\theta'_b -\hat\theta'_{b-1}) +G(\hat\theta'_b; D_b) \}$, and
$\hat{U}'_b = N_b^{-1} \{ N_{b-1} U'_{b-1} +G(\hat\theta'_{b-1}; D_b) \}$.
Here, $U'_b$ and $\hat{U}'_b$ are estimators of $U +V \theta^* = \E\{ g(\theta^*, x_{i,j}) \}$.
The following proposition verifies that $\hat\theta'_b \equiv \hat\theta_b$ and $U'_b \equiv U_b +V_b \hat\theta_b$.
Since fewer arithmetic operations are involved in this alternative definition, we actually implement $\hat\theta'_b$ in our R-package \texttt{ogmm}.

\begin{proposition}[Validity of the telescoping-based formula] \label{prop:formula}
	If $\hat\theta'_1 = \hat\theta_1$,
	then $\hat\theta'_b \equiv \hat\theta_b$ and $U'_b \equiv U_b +V_b \hat\theta_b$ for $b = 1, 2, \ldots$.
\end{proposition}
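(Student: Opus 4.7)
The plan is to prove by induction on $b$ the three joint identities $\hat\theta'_b \equiv \hat\theta_b$, $V'_b \equiv V_b$, and $U'_b \equiv U_b + V_b \hat\theta_b$. Although the middle identity is implicit in the others, carrying it along explicitly in the inductive statement keeps the algebra compact and avoids reopening summations.

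The base case $b=1$ is direct: given $\hat\theta'_1 = \hat\theta_1$, the initialization in Algorithm~\ref{algo:ogmm} yields $V'_1 = \nabla \bar g_1(\hat\theta_1) = V_1$ and $U'_1 = \bar g_1(\hat\theta_1) = \{\bar g_1(\hat\theta_1) - \nabla \bar g_1(\hat\theta_1)\hat\theta_1\} + \nabla \bar g_1(\hat\theta_1)\hat\theta_1 = U_1 + V_1\hat\theta_1$.

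For the inductive step, I would first substitute the inductive hypothesis at $b-1$ into the definitions of $\hat V'_b$ and $\hat U'_b$. A one-line check gives $\hat V'_b = \hat V_b$ directly, and a brief computation yields
\[
\hat U'_b = \frac{1}{N_b}\{N_{b-1} U_{b-1} + N_{b-1} V_{b-1}\hat\theta_{b-1} + G(\hat\theta_{b-1}; D_b)\} = \hat U_b + \hat V_b\hat\theta_{b-1},
\]
where the last equality absorbs $N_{b-1} V_{b-1}\hat\theta_{b-1} + \nabla G(\hat\theta_{b-1}; D_b)\hat\theta_{b-1}$ into $N_b \hat V_b \hat\theta_{b-1}$ via the definition of $\hat V_b$. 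Feeding these into the primed update rule produces the decisive cancellation
\[
\hat\theta'_b = \hat\theta_{b-1} - (\hat V_b^\T \hat W \hat V_b)^{-1}\hat V_b^\T \hat W(\hat U_b + \hat V_b\hat\theta_{b-1}) = -(\hat V_b^\T \hat W \hat V_b)^{-1}\hat V_b^\T \hat W \hat U_b = \hat\theta_b,
\]
since $(\hat V_b^\T \hat W \hat V_b)^{-1}\hat V_b^\T \hat W \hat V_b = I_p$. With $\hat\theta'_b = \hat\theta_b$ in hand, $V'_b = V_b$ is immediate from comparing the two sum-recursions at $\hat\theta_b$, and expanding the defining formula of $U'_b$ under the inductive hypothesis reduces to $U_b + V_b\hat\theta_b$ after adding and subtracting $\nabla G(\hat\theta_b; D_b)\hat\theta_b$.

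There is no real technical obstacle; the proof is a careful bookkeeping exercise. Its conceptual content is that the correction term $N_{b-1} V'_{b-1}(\hat\theta'_b - \hat\theta'_{b-1})$ in the primed recursion for $U'_b$ encodes exactly the difference between linearising at the lagged versus the current iterate, which is precisely what makes the cancellation above go through and exhibits the telescoping formula as an arithmetically cheaper but mathematically equivalent reformulation of Algorithm~\ref{algo:ogmm}.
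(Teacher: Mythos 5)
Your proposal is correct and follows essentially the same inductive argument as the paper's own proof: in both, the decisive step is the identity $\hat{U}'_b = \hat{U}_b + \hat{V}_b \hat\theta_{b-1}$ followed by the cancellation of $\hat\theta_{b-1}$ in the primed update rule. The only cosmetic differences are that you anchor the induction at $b=1$ directly from the initializations (the paper starts at $b=2$ and invokes its earlier display for $\hat\theta_2$) and that you carry $V'_b \equiv V_b$ explicitly in the inductive hypothesis, which the paper leaves implicit.
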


Besides, it is not necessary to initialize $\hat\theta_1$ using $\GMM$.
Any $\sqrt{N_1}$-consistent $\hat\theta_1$ will be sufficient for $\hat\theta_2$ to achieve the same asymptotic efficiency as $\GMM$.
For example, one can compute $\hat\theta_1$ using the two-stage least squares ($\TSLS$) for linear instrumental variables estimation.
The consistency of $\hat\theta_1$ is important so users may want to combine several data batches to form a large $D_1$ in practice, which is also recommended in \citet{luo2022qif}.

We further analyse the computational complexities of Algorithm \ref{algo:ogmm} after initialization.
For lines \texttt{7}--\texttt{11} and \texttt{13}--\texttt{14}, the total time and space complexities are $O(n_b p q +p q^2)$ and $O(n_b q +q^2)$, respectively.
For line \texttt{12} with the long-run variance estimator in \citet{rlrv}, the time complexity is $O(n_b q^2 +q^3)$, and
the space complexity is $O(N_b^{1/(1+2\lambda)} q^2 \I_{\phi<2} +n_b q +q^2)$.
For comparison, the time complexity of offline $\GMM$ is at least $O(N_b q)$, which can be greater depending on the optimization method and how the weighting matrix is chosen, and
the space complexity is $O(N_b q +q^2)$.
As for general $\SGMM$, the time complexity to update for $n_b$ iterations is $O(n_b p q^2)$, and
the space complexity is $O(n_b q +q^2)$ (or $O(N_b q +q^2)$ if multiple epochs are needed). We will compare the computational cost of $\GMM$, $\SGMM$ and $\OGMM$ in simulation studies later. 

Same as $\GMM$, $\OGMM$ covers many statistical methods as special cases.
We give a few examples below to demonstrate the generality of Algorithm \ref{algo:ogmm}.

\begin{example}[One-step estimation] \label{eg:lecam}
	Suppose $\hat{W} = I_q$ and $g(\theta, x)$ is the score function.
	Let $\hat\theta_1$ be a $\sqrt{N_1}$-consistent estimator and $D_2$ be empty.
	Then, 
	$
	\hat\theta_2 = -(\hat{V}_2 \hat{W} \hat{V}_2)^{-1} \hat{V}_2^\T \hat{W} \hat{U}_2
	= \hat\theta_1 -(V_1^\T I_q V_1)^{-1} V_1^\T I_q \bar{g}_1(\hat\theta_1)
	= \hat\theta_1 -V_1^{-1} \bar{g}_1(\hat\theta_1)
	$
	performs \citeauthor{lecam1956asymptotic}'s \citeyearpar{lecam1956asymptotic} one-step estimation provided that $V_1$ is a consistent estimator for the Fisher information matrix.
	In this regard, $\OGMM$ can be viewed as a one-step estimator for $\GMM$ with online data.
\end{example}

\begin{example}[Online least squares] \label{eg:ls}
	Let $Y_i = (y_{i,1}, \ldots, y_{i,n_i})^\T$ and $X_i = (x_{i,1}, \ldots, x_{i,n_i})^\T$ be the $i$-th batch of responses and predictors, respectively.
	The ordinary least squares is equivalent to $\GMM$ with $\hat{W} = I_q$ and $g(\theta, \{x, y\}) = x (y -x^\T \theta)$.
	We can verify that $\nabla g(\theta, \{x, y\}) = -x x^\T$, $\hat{V}_b = -N_b^{-1} \sum_{i=1}^b X_i^\T X_i$ and $\hat{U}_b = N_b^{-1} \sum_{i=1}^b X_i^\T Y_i$.
	Therefore, $\hat\theta_b = (\sum_{i=1}^b X_i^\T X_i)^{-1} (\sum_{i=1}^b X_i^\T Y_i)$ exactly recovers the ordinary least squares.
\end{example}

\begin{example}[Online quantile regression] \label{eg:quant}
	To simplify the discussion here, suppose we only use an initial estimate $\hat\theta_{\tau, 1}$ to update the summary statistics of $\OGMM$ and the online linear estimator for quantile regression ($\LEQR$) in \citet{chen2019rindicator},
	where $\tau$ is a quantile level of interest.
	We will compare the actual estimators in \cref{sec:online-quantile-regression}.
	Let $\check{U}_{b, \LEQR} = N_b^{-1} \sum_{i=1}^b  X_i^\intercal [H \{ (Y_i -X_i \hat\theta_{\tau, 1})/h_i \} +\tau -1 +(Y_i/h_i) H \{ (Y_i -X_i \hat\theta_{\tau, 1})/h_i \}]$,
	where $H(\cdot)$ is a smooth approximation of the indicator function $\I_{\cdot > 0}$,
	$\nabla H(\cdot)$ is the gradient of $H(\cdot)$, and
	$\{ h_i \}$ is a sequence of bandwidths.
	The smoothed quantile regression \citep{decastro2019iv} is equivalent to $\GMM$ with $\hat{W} = I_q$ and $g(\theta_\tau, \{x, y, h\}) = x [H\{ (y -x^\T \theta_\tau)/h \} +\tau -1]$.
	We can verify that $\nabla g(\theta_\tau, \{x, y, h\}) = -(x x^\T/h) \nabla H \{ (y -x^\T \theta_\tau)/h \}$, 
	$\hat{V}_b = -N_b^{-1} \sum_{i=1}^b (X_i^\intercal X_i/h_i) \nabla H \{ (Y_i -X_i \hat\theta_{\tau, 1})/h_i \}$ and 
	$\hat{U}_b = N_b^{-1} \sum_{i=1}^b  X_i^\intercal [H \{ (Y_i -X_i \hat\theta_{\tau, 1})/h_i \} +\tau -1 +(X_i \hat\theta_{\tau, 1}/h_i) H \{ (Y_i -X_i \hat\theta_{\tau, 1})/h_i \}]$.
	The resulting estimator $\hat\theta_b = \hat{V}_b^{-1} \hat{U}_b$ is almost same as $\check\theta_{b, \LEQR} = \hat{V}_b^{-1} \check{U}_{b, \LEQR}$, except that $X_i \hat\theta_{\tau, 1}/h_i$ in $\hat{U}_b$ is replaced with $Y_i/h_i$ in $\check{U}_{b, \LEQR}$.
\end{example}

\subsection{Online inference and applications} \label{sec:online-inference}

To conduct inference under the $\OGMM$ framework, we can simply  replace offline statistics in the $\GMM$ literature with online counterparts from Algorithm \ref{algo:ogmm}.
Consequently, we can improve the computational efficiency while preserving statistical properties of many classical inferential procedures.

\begin{example}[Over-identifying restrictions testing and confidence region estimation] \label{eg:overident}
	Let $\chi_{p, \alpha}^2$ be the $\alpha$-quantile of a chi-squared distribution with $p$ degrees of freedom.
	The offline Sargan--Hansen test statistic is 
	$N_b \bar{g}_b(\tilde\theta_{b, \GMM})^\T \tilde\Sigma^{-1} \bar{g}_b(\tilde\theta_{b, \GMM})$, 
	where $\tilde\Sigma$ is an offline estimator of $\Sigma$.
	Since $U_b +V_b \theta$ is an online approximation of $\bar{g}_b(\theta)$, a natural online test statistic is
	$T = N_b (U_b +V_b \hat\theta_{b, \OGMM})^\T \hat\Sigma_b^{-1} (U_b +V_b \hat\theta_{b, \OGMM})$.
	We reject the null hypothesis of $\E\{g(\theta^*, x_{i,j})\} = 0$ at $100\alpha\%$ nominal level if $T > \chi_{q-p, 1-\alpha}^2$.
	Similarly, an online $100(1-\alpha)\%$ confidence region for $\theta^*$ is
	$\{\theta: N_b (\hat\theta_{b, \OGMM}-\theta)^\T (V_b^\T \hat\Sigma_b^{-1} V_b)^{-1} (\hat\theta_{b, \OGMM}-\theta) \le \chi_{p, 1-\alpha}^2 \}$.
\end{example}

\begin{example}[Anomaly detection] \label{eg:anomaly}
	Consider the abnormal data batch detection setting in \citet{luo2022qif}, 
	where $D_1$ is the normal reference and $D_b$ is possibly abnormal.
	We are interested in testing $H_0: \E\{ g(\theta^*, x_{1,j}) \} = \E\{ g(\theta^*, x_{b,h}) \} = 0$ against $H_1: \E\{ g(\theta^*, x_{1,j}) \} = 0$, but $\E\{ g(\theta^*, x_{b,h}) \} \ne 0$ for $j \in \{1, \ldots, n_1\}$ and $h \in \{1, \ldots, n_b\}$.
	\citet{luo2022qif} proposed
	\begin{align} 
		T_\restricted
		&= n_1^{-1} G(\tilde{\theta}_{b, \restricted}; D_1)^\T \tilde{C}_1^{-1} G(\tilde{\theta}_{b, \restricted}; D_1) 
		+n_b^{-1} G(\tilde{\theta}_{b, \restricted}; D_b)^\T \tilde{C}_b^{-1} G(\tilde{\theta}_{b, \restricted}; D_b), \quad \text{where} \label{eq:stable-restricted} \\
		\tilde{\theta}_{b, \restricted} 
		&= \argmin_{\theta \in \Theta} \left\{ n_1^{-1} G(\theta; D_1)^\T \tilde{C}_1(\theta)^{-1} G(\theta; D_1) 
		+n_b^{-1} G(\theta; D_b)^\T \tilde{C}_b(\theta)^{-1} G(\theta; D_b) \right\} \label{eq:gmm-restricted}
	\end{align}
	is a restricted estimator, and
	$\tilde{C}_i(\theta)$ is the sample variance that depends on $\theta$ and $D_i$ for $i \in \{1,b\}$.
	Same as \eqref{eq:gmm}, one can obtain \eqref{eq:gmm-restricted} using two-step, iterated or continuously updating $\GMM$.
	The test statistic in \eqref{eq:stable-restricted} is not autocorrelation-robust though and its computational complexities depend on $D_1$.
	If the normal reference contains all previous data batches, computing \eqref{eq:stable-restricted} will be very costly.
	Therefore, we propose
	\begin{equation} \label{eq:stable-full}
		T_\full 
		= n_1 (U_1 +V_1 \hat\theta_{b, \full})^\T \hat\Sigma_1^{-1} (U_1 +V_1 \hat\theta_{b, \full}) 
		+n_b^{-1} G(\hat\theta_{b, \full}; D_b)^\T \hat\Sigma_1^{-1} G(\hat\theta_{b, \full}; D_b),
	\end{equation}
	where $\hat\theta_{b, \full}$ is the full-sample estimator that can be obtained by an $\OGMM$ update with $D_b$, and
	$\hat\Sigma_1$ is the estimated long-run variance before receiving $D_b$
	(assuming the serial dependence structure does not change over time).
	Under $H_0$, both \eqref{eq:stable-restricted} and \eqref{eq:stable-full} converge in distribution to $\chi_{2q-p}^2$.
	However, \eqref{eq:stable-full} does not require solving another optimization problem nor storing previous data batches.
\end{example}

\begin{example}[Structural stability testing] \label{eg:stability}
	Abnormality in Example \ref{eg:anomaly} corresponds to a more general behaviour termed structural instability with a known break point in the context of $\GMM$.
	Indeed, replacing the restricted estimator in \eqref{eq:stable-restricted} with the full-sample estimator in \eqref{eq:stable-full} is inspired by a common practice in structural stability testing; 
	see p.173--174 of \citet{hall2005gmm} for additional references and discussion.
	Now, for $i \in \{1, b\}$, let $\Sigma_i(\theta)$ and $P_i(\theta)$ be the subsample analogues of $\Sigma(\theta) = \lim_{N_b \to \infty} N_b \Var\{ \bar{g}_b(\theta) \}$ and $P(\theta) = \Sigma(\theta)^{-1/2} V(\theta) \{V(\theta)^\T \Sigma(\theta)^{-1} V(\theta)\}^{-1} V(\theta)^\T \Sigma(\theta)^{-1/2}$ for $D_i$, respectively, where $V(\theta) = \E\{ \nabla g(\theta, x_{i,j}) \}$.
	\citet{hall1999stability} decomposed $H_0$ into
	\begin{align*}
		H_{0, \text{I}}: & P_1(\theta^*) \Sigma_1(\theta^*)^{-1/2} \E\{ g(\theta^*, x_{1,j}) \} 
		= P_b(\theta^*) \Sigma_b(\theta^*)^{-1/2} \E\{ g(\theta^*, x_{b,j}) \} = 0 \quad \text{and} \\
		H_{0, \text{O}}: & \{I_q -P_1(\theta^*)\} \Sigma_1(\theta^*)^{-1/2} \E\{ g(\theta^*, x_{1,j}) \} 
		= \{I_q -P_b(\theta^*)\} \Sigma_b(\theta^*)^{-1/2} \E\{ g(\theta^*, x_{b,j}) \} = 0.
	\end{align*}
	If $H_0$ is rejected and $\E\{ g(\theta^*, x_{b,h}) \} \ne 0$ for $h \in \{1, \ldots, n_b\}$, there are two possibilities.
	(1) There is some unique $\theta' \in \Theta \setminus \{\theta^*\}$ that satisfies $\E\{ g(\theta', x_{b,h}) \} = 0$, which means that the instability is caused by a change point in parameter values.
	In this case, only $H_{0, \text{I}}$ is violated.
	(2) There is no $\theta \in \Theta$ that satisfies $\E\{ g(\theta, x_{b,h}) \} = 0$, which means that the instability is caused by a more fundamental misspecification instead of a change point in parameter value.
	In this case, $H_{0, \text{O}}$ and likely $H_{0, \text{I}}$ are both violated.
	The test statistics in Example \ref{eg:anomaly} are not able to distinguish between these two cases.
	Therefore, based on \citet{hall1999stability}, we propose
	\begin{equation} \label{eq:stable-unrestricted}
		T_\unrestricted
		= n_1 (U_1 +V_1 \hat\theta_1)^\T \hat\Sigma_1^{-1} (U_1 +V_1 \hat\theta_1) 
		+n_b^{-1} G(\tilde\theta_{b, \unrestricted}; D_b)^\T \tilde\Sigma_{b, \unrestricted}^{-1} G(\tilde\theta_{b, \unrestricted}; D_b),
	\end{equation}
	where $\tilde\theta_{b, \unrestricted}$ and $\tilde\Sigma_{b, \unrestricted}$ are the (unrestricted) $\GMM$ and long-run variance estimator computed with $D_b$ only.
	Since $U_1$, $V_1$, $\hat\theta_1$ and $\hat\Sigma_1$ are available before receiving $D_b$, \eqref{eq:stable-unrestricted} is computationally efficient.
	Under the null hypothesis, \eqref{eq:stable-unrestricted} converges in distribution to $\chi_{2(q-p)}^2$ and is asymptotically independent of \eqref{eq:stable-restricted} and \eqref{eq:stable-full}.
	Therefore, given that \eqref{eq:stable-restricted} or \eqref{eq:stable-full} is rejected, we may believe that there is a change point in parameter value if \eqref{eq:stable-unrestricted} is not rejected, and the model is misspecified if \eqref{eq:stable-unrestricted} is rejected.
	The unknown break point case seems more challenging and is left for future research.
\end{example}

\section{Theory} \label{sec:theory}

To compare with $\GMM$ on the same theoretical basis, we develop the asymptotic theory of $\OGMM$ based on the assumptions in \citet{hall2005gmm}.
Two different cases that correspond to the scenario (S2) in \citet{luo2022qif} are considered:
(a) $\min(n_1, \ldots, n_b) \to \infty$;
(b) $n_1 \to \infty$ and no restrictions on $n_2, n_3, \ldots$.
We allow $b$ to be any finite positive integer but require $n_1 \to \infty$ to ensure the initial estimate is consistent, which is implicitly assumed in \citet{luo2020renew} and many subsequent works.

\subsection{Consistency} \label{sec:consistency}

\begin{assumption}[Strict stationarity] \label{asum:stationarity}
	The time series data $\{x_{1,1}, \ldots, x_{1,n_1}, x_{2,1}, \ldots, x_{2,n_2}, \ldots\}$ form a strictly stationary process with sample space $\mathcal{X} \subset \R^d$.
\end{assumption}

\begin{assumption}[Regularity conditions for $g(\theta, x)$] \label{asum:moment-regularity}
	The function $g(\theta, x)$ satisfies:
	\begin{enumerate}
		\item it is continuous on $\Theta$ for each $x \in \mathcal{X}$;
		\item $\E\{g(\theta, x_{i,j})\}$ exists and is finite for every $\theta \in \Theta$;
		\item $\E\{g(\theta, x_{i,j})\}$ is continuous on $\Theta$.
	\end{enumerate}
\end{assumption}

\begin{assumption}[Population moment condition] \label{asum:moment-population}
	The data $\{x_{i,j}\}$ and the parameter $\theta^*$ satisfy the population moment condition $\E\{g(\theta^*, x_{i,j})\} = 0$.
\end{assumption}

\begin{assumption}[Global identification] \label{asum:moment-global}
	$\E\{g(\theta', x_{i,j})\} \ne 0$ for all $\theta' \in \Theta \setminus \{\theta^*\}$.
\end{assumption}

\begin{assumption}[Regularity conditions for $\nabla g(\theta, x)$] \label{asum:gradient-regularity}
	\hfill
	\begin{enumerate}
		\item $\nabla g(\theta, x)$ exists and is continuous on $\Theta$ for each $x \in \mathcal{X}$;
		\item $\theta^*$ is an interior point of $\Theta$;
		\item $\E\{\nabla g(\theta^*, x_{i,j})\}$ exists and is finite.
	\end{enumerate}
\end{assumption}

\begin{assumption}[Weighting matrix] \label{asum:weight}
	$\hat{W}$ is a positive semi-definite matrix which converges in probability to a positive definite constant matrix $W$.
\end{assumption}

\begin{assumption}[Ergodicity] \label{asum:ergodicity}
	The random process $\{x_{i,j}\}$ is ergodic.
\end{assumption}

\begin{assumption}[Compactness] \label{asum:compactness}
	$\Theta$ is a compact set.
\end{assumption}

\begin{assumption}[Domination of $g(\theta, x)$] \label{asum:moment-dominance}
	$\E\{\sup_{\theta \in \Theta} \norm{g(\theta, x_{i,j})}_2 \} < \infty$.
\end{assumption}

\begin{assumption}[Long-run variance] \label{asum:lrv}
	\hfill
	\begin{enumerate}
		\item $\E\{ g(\theta^*, x_{i,j}) g(\theta^*, x_{i,j})^\T \}$ exists and is finite;
		\item $\Sigma$ defined in \eqref{eq:lrv} exists and is a finite positive definite matrix.
	\end{enumerate}
\end{assumption}

\begin{assumption}[Local Lipschitz continuity of $\nabla g(\theta, x)$] \label{asum:gradient-lipschitz}
	Let $B_\epsilon = \{\theta \in \Theta: \norm{\theta-\theta^*}_2 \le \epsilon\}$.
	\begin{enumerate}
		\item For some $\epsilon > 0$, there exists $L > 0$ such that $\norm{ \E\{\nabla g(\theta, x_{i,j})\} -\E\{\nabla g(\theta^*, x_{i,j})\} }_F \le L \norm{\theta-\theta^*}_2$ for all $\theta \in B_\epsilon$.
		\item For some $\epsilon > 0$, there exists $L > 0$ such that $\norm{ \nabla g(\theta, x) -\nabla g(\theta^*, x) }_F \le L \norm{\theta-\theta^*}_2$ for all $\theta \in B_\epsilon$ and $x \in \mathcal{X}$.
	\end{enumerate}
\end{assumption}

\begin{assumption}[Local uniform convergence to $\E\{\nabla g(\theta, x_{i,j})\}$] \label{asum:gradient-uniform}
	\hfill
	\begin{enumerate}
		\item For some $\epsilon > 0$ and $i=1, \ldots, b$, as 
		$n_i \to \infty$,
		\begin{enumerate}[label=(\roman*)]
			\item $\sup_{\theta \in B_\epsilon} \norm{n_i^{-1} \nabla G(\theta; D_i) -\E\{\nabla g(\theta, x_{i,j})\} }_F = o_p(1)$, which is implied by
			\item $\sup_{\theta \in B_\epsilon} \norm{n_i^{-1} \nabla G(\theta; D_i) -\E\{\nabla g(\theta, x_{i,j})\} }_F = O_p(n_i^{-1/2})$.
		\end{enumerate}
		\item For some $\epsilon > 0$, as $N_b \to \infty$,
		\begin{enumerate}[label=(\roman*)]
			\item $\norm{N_b^{-1} \sum_{k=1}^b \nabla G(\theta^*; D_k) -\E\{\nabla g(\theta^*, x_{i,j})\}}_F = o_p(1)$, which is implied by
			\item $\sup_{\theta \in B_\epsilon} \norm{N_b^{-1} \sum_{k=1}^b \nabla G(\theta; D_k) -\E\{\nabla g(\theta, x_{i,j})\}}_F = o_p(1)$.
		\end{enumerate}
	\end{enumerate}
\end{assumption}

Assumptions \ref{asum:stationarity}--\ref{asum:lrv} are identical to Assumptions 3.1--3.5 and 3.7--3.11 in \citet{hall2005gmm}.
Assumptions \ref{asum:gradient-lipschitz} and \ref{asum:gradient-uniform} are required to hold for some neighbourhood $B_\epsilon$ of $\theta^*$ only so both are local assumptions.
Although Assumptions \ref{asum:gradient-lipschitz}(a) and (b) (local Lipschitz continuity) are stronger than Assumption 3.12 (continuity) in \citet{hall2005gmm}, they are standard in the renewable estimation literature to bound the Taylor approximation error.
Compared with Condition 3 in \citet{luo2020renew}, Assumption \ref{asum:gradient-lipschitz}(b) is slightly weaker because only local smoothness is required, 
and Assumption \ref{asum:gradient-lipschitz}(a) is implied by Assumption \ref{asum:gradient-lipschitz}(b).
Note that Assumption \ref{asum:gradient-uniform}(a)(ii) is stronger than Assumption \ref{asum:gradient-uniform}(b)(ii), 
where the latter is identical to Assumption 3.13 in \citet{hall2005gmm}.

\begin{theorem}[Consistency of $\hat{\theta}_b$] \label{thm:consistency}
	Suppose Assumptions \ref{asum:stationarity}--\ref{asum:moment-dominance} hold.
	\begin{enumerate}
		\item Under Assumptions \ref{asum:gradient-lipschitz}(a) and \ref{asum:gradient-uniform}(a)(i), 
		$\hat\theta_b \cin{p} \theta^*$ as
		$\min(n_1, \ldots, n_b) \to \infty$.
		\item Under Assumptions \ref{asum:gradient-lipschitz}(b) and \ref{asum:gradient-uniform}(b)(i), 
		$\hat\theta_b \cin{p} \theta^*$ as
		$n_1 \to \infty$.
	\end{enumerate}
\end{theorem}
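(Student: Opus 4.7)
The plan is to view $\hat\theta_b = -(\hat{V}_b^\T \hat{W} \hat{V}_b)^{-1}\hat{V}_b^\T \hat{W}\hat{U}_b$ as a continuous functional of $(\hat{V}_b, \hat{U}_b, \hat{W})$ and to reduce the problem to proving $\hat{V}_b \cin{p} V$ and $\hat{U}_b \cin{p} U = -V\theta^*$, where the identity $U = -V\theta^*$ uses \cref{asum:moment-population}. Given \cref{asum:weight}, the continuous mapping theorem then yields $\hat\theta_b \cin{p} -(V^\T W V)^{-1}V^\T W(-V\theta^*) = \theta^*$, provided $V^\T W V$ is nonsingular; this rank condition follows from the positive definiteness of $W$ and the full column rank of $V$ (implicit in the identification and regularity hypotheses). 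The base case $\hat\theta_1 \cin{p} \theta^*$ comes from the classical offline \GMM{} consistency result (e.g., Theorem 3.1 of \citealp{hall2005gmm}) applied to $D_1$, whose required conditions coincide with \crefrange{asum:stationarity}{asum:moment-dominance} and for which $n_1 \to \infty$ in both regimes.

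For case (a), I would induct on $b$. Assuming $\hat\theta_i \cin{p} \theta^*$ for $i = 1, \ldots, b-1$, write
\[
\hat{V}_b - V = \sum_{i=1}^{b-1}\frac{n_i}{N_b}\bigl\{n_i^{-1}\nabla G(\hat\theta_i; D_i) - V\bigr\} + \frac{n_b}{N_b}\bigl\{n_b^{-1}\nabla G(\hat\theta_{b-1}; D_b) - V\bigr\}.
\]
Each bracket is $o_p(1)$: \cref{asum:gradient-uniform}(a)(i) delivers local uniform convergence on $B_\epsilon$, and the continuity of $\theta \mapsto \E\{\nabla g(\theta, x)\}$ granted by \cref{asum:gradient-regularity} lets the limit pass to $V$, once the induction hypothesis places $\hat\theta_i \in B_\epsilon$ with probability tending to $1$. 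Because the weights $n_i/N_b$ are bounded by $1$ and sum to $1$, this gives $\hat{V}_b \cin{p} V$. For $\hat{U}_b$, apply the mean value theorem batchwise to obtain, for some $\tilde\theta_i$ on the segment from $\hat\theta_i$ to $\theta^*$,
\[
n_i^{-1}\bigl\{G(\hat\theta_i; D_i) - \nabla G(\hat\theta_i; D_i)\hat\theta_i\bigr\} = n_i^{-1} G(\theta^*; D_i) - n_i^{-1}\nabla G(\tilde\theta_i; D_i)\theta^* + n_i^{-1}\bigl\{\nabla G(\tilde\theta_i; D_i) - \nabla G(\hat\theta_i; D_i)\bigr\}\hat\theta_i.
\]
The first summand vanishes by \cref{asum:ergodicity} and \cref{asum:moment-population}; the second converges to $-V\theta^*$ by the argument above applied at $\tilde\theta_i$; the third is $o_p(1)$ by the local Lipschitz bound in \cref{asum:gradient-lipschitz}(b), which controls the bracket by $L\|\hat\theta_i - \theta^*\|_2$, together with $\hat\theta_i$ lying in the compact $\Theta$. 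Averaging with weights $n_i/N_b$ yields $\hat{U}_b \cin{p} -V\theta^*$.

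Case (b) requires a different argument because, with $n_2, \ldots, n_b$ possibly bounded, the intermediate estimators $\hat\theta_2, \ldots, \hat\theta_{b-1}$ need not be consistent and induction is unavailable. The key observation is that $n_1/N_b \to 1$ as $n_1 \to \infty$ (with $b$ fixed), so the batch-$1$ contribution dominates: $n_1^{-1}\nabla G(\hat\theta_1; D_1) \cin{p} V$ by \cref{asum:gradient-uniform}(a)(i) at $i = 1$ combined with base-case consistency, while every remaining batch contributes a term of the form $N_b^{-1}\nabla G(\hat\theta_{\min(i,b-1)}; D_i)$, a finite sum of random vectors divided by the diverging normalizer $N_b$. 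Under \cref{asum:compactness} and the continuity in \cref{asum:gradient-regularity}(a) these terms vanish in probability, and the analogous decomposition for $\hat{U}_b$, leveraging \cref{asum:moment-dominance} to handle the $G$-pieces uniformly on $\Theta$, gives $\hat{U}_b \cin{p} -V\theta^*$.

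The hardest step will be rigorising case (b): the stated assumptions do not include a global envelope for $\sup_\theta \|\nabla g(\theta, x)\|_2$, so the stochastic boundedness of $\nabla G(\hat\theta_{\min(i, b-1)}; D_i)$ on the bounded-$n_i$ batches has to be squeezed out of compactness of $\Theta$, continuity of $\nabla g(\cdot, x)$, and the ergodic control available at $\theta^*$, most likely via a uniform continuity argument on a compact neighbourhood of $\theta^*$ intersected with a high-probability event.
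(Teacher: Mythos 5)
Your reduction of $\hat\theta_b$ to a continuous function of $(\hat V_b,\hat U_b,\hat W)$, the identification $U=-V\theta^*$, and the induction in case (a) follow the paper's route exactly (the paper isolates $\hat V_b\cin{p}V$, $\hat U_b\cin{p}-V\theta^*$ as a separate lemma and runs the induction in the theorem's proof). One slip in case (a): to kill the term $n_i^{-1}\{\nabla G(\tilde\theta_i;D_i)-\nabla G(\hat\theta_i;D_i)\}\hat\theta_i$ you invoke \cref{asum:gradient-lipschitz}(b), which is not assumed in that case; under \cref{asum:gradient-lipschitz}(a) you must pass through the expectation, i.e.\ bound the difference by $2\sup_{\theta\in B_\epsilon}\norm{n_i^{-1}\nabla G(\theta;D_i)-\E\{\nabla g(\theta,x_{i,j})\}}_F+L\norm{\tilde\theta_i-\hat\theta_i}_2$ using \cref{asum:gradient-uniform}(a)(i), exactly as you already did for $\hat V_b$. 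That is a local repair.

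Case (b) is where the proposal genuinely breaks. The premise that ``the intermediate estimators $\hat\theta_2,\ldots,\hat\theta_{b-1}$ need not be consistent and induction is unavailable'' is false, and the replacement strategy fails in the stated generality. First, ``no restrictions on $n_2,\ldots,n_b$'' includes the possibility that these batch sizes also diverge (say $n_2=n_1$), in which case $n_1/N_b\not\to1$ and the batch-one-dominates argument collapses. Second, even when the later batches are bounded, the tightness of $\nabla G(\hat\theta_{\min(i,b-1)};D_i)$ cannot be squeezed out of compactness of $\Theta$ and continuity of $\nabla g(\cdot,x)$: continuity for each fixed $x$ provides no integrable envelope for $\sup_{\theta}\norm{\nabla g(\theta,x)}_F$, and \cref{asum:moment-dominance} dominates $g$, not $\nabla g$. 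The missing idea is that \cref{asum:gradient-lipschitz}(b) is a \emph{pathwise} Lipschitz bound, uniform in $x$: on the event $\{\hat\theta_{i-1}\in B_\epsilon\}$ it gives $\norm{\nabla G(\hat\theta_{\min(i,b-1)};D_i)-\nabla G(\theta^*;D_i)}_F\le Ln_i\norm{\hat\theta_{\min(i,b-1)}-\theta^*}_2$ deterministically, whence
\[
\norm{\hat V_b-V}_F
\le L\sum_{i=1}^{b}\frac{n_i}{N_b}\norm{\hat\theta_{\min(i,b-1)}-\theta^*}_2
+\norm{\frac{1}{N_b}\sum_{i=1}^{b}\nabla G(\theta^*;D_i)-V}_F .
\]
The second term is $o_p(1)$ by \cref{asum:gradient-uniform}(b)(i) as soon as $N_b\to\infty$, which $n_1\to\infty$ alone guarantees, and the first term is $o_p(1)$ as soon as the previous estimators are consistent, with no requirement that any $n_i$, $i\ge2$, diverge. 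Consequently the induction from case (a) goes through verbatim in case (b): $\hat\theta_2$ is consistent because $\hat V_2$ and $\hat U_2$ converge as $N_2\to\infty$, and so on. That is the paper's proof; it is both simpler and strictly more general than the dominance argument you propose.
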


When $b = 1$, Theorem \ref{thm:consistency} is exactly Theorem 3.1 in \citet{hall2005gmm}, which states the consistency of $\hat\theta_1$ if it is computed using $\GMM$. 
Then, for $b \ge 2$, Assumptions \ref{asum:gradient-lipschitz} and \ref{asum:gradient-uniform}(i) ensure the Taylor approximation error is negligible so that $\hat\theta_b$ remains consistent.

\subsection{Asymptotic normality} \label{sec:asymptotic-normality}

\begin{theorem}[Asymptotic normality of $\hat\theta_b$] \label{thm:normality-par}
	Recall that $V = \E\{\nabla g(\theta^*, x_{i,j})\}$.
	Let $H = (V^\T W V)^{-1} V^\T W$.
	Suppose Assumptions \ref{asum:stationarity}--\ref{asum:lrv} hold and $n_i = o(N_{i-1}^2)$ for $i = 2, \ldots, b$.
	\begin{enumerate}
		\item Under Assumptions \ref{asum:gradient-lipschitz}(a) and \ref{asum:gradient-uniform}(a)(ii), 
		$\sqrt{N_b} (\hat\theta_b -\theta^*) \cin{d} \Normal(0, H \Sigma H^\T)$ as \linebreak
		$\min(n_1, \ldots, n_b) \to \infty$.
		\item Under Assumptions \ref{asum:gradient-lipschitz}(b) and \ref{asum:gradient-uniform}(b)(ii), 
		$\sqrt{N_b} (\hat\theta_b -\theta^*) \cin{d} \Normal(0, H \Sigma H^\T)$ as
		$n_1 \to \infty$.
	\end{enumerate}
\end{theorem}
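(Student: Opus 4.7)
The plan is to start from the closed-form representation of the $\OGMM$ update, expand the summands by a Taylor argument, and reduce the statement to the asymptotic normality of a stationary ergodic partial sum. From \eqref{eq:ogmm-linear},
\[
\sqrt{N_b}(\hat\theta_b - \theta^*) = -(\hat V_b^\T \hat W \hat V_b)^{-1} \hat V_b^\T \hat W \sqrt{N_b}\,(\hat U_b + \hat V_b \theta^*),
\]
and for each $i \ge 2$, the mean value theorem applied to $G(\,\cdot\,; D_i)$ yields
\[
G(\hat\theta_{i-1}; D_i) - \nabla G(\hat\theta_{i-1}; D_i)(\hat\theta_{i-1} - \theta^*) = G(\theta^*; D_i) + \{\nabla G(\tilde\theta_i; D_i) - \nabla G(\hat\theta_{i-1}; D_i)\}(\hat\theta_{i-1} - \theta^*)
\]
for some $\tilde\theta_i$ between $\theta^*$ and $\hat\theta_{i-1}$. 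Summing, dividing by $N_b$, and using the population moment condition (Assumption~\ref{asum:moment-population}), I would rewrite $\hat U_b + \hat V_b \theta^* = \bar g_b(\theta^*) + R_b$, where the remainder is controlled via Assumption~\ref{asum:gradient-lipschitz} as $\norm{R_b}_2 \le L N_b^{-1} \sum_{i=2}^b n_i \norm{\hat\theta_{i-1} - \theta^*}_2^2$ (using the pointwise-in-$x$ version in case~(b); case~(a) uses the integrated version together with Assumption~\ref{asum:gradient-uniform}(a)(ii) to bound the stochastic fluctuation on the event of consistency).

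The result is then proved by induction on $b$, with strengthened inductive hypothesis $\norm{\hat\theta_i - \theta^*}_2 = O_p(N_i^{-1/2})$ for $i \le b-1$. The base case $b = 1$ is exactly Theorem~3.2 of \citet{hall2005gmm}. In the inductive step, the hypothesis gives
\[
\sqrt{N_b}\,\norm{R_b}_2 = O_p\!\left(N_b^{-1/2} \sum_{i=2}^b \tfrac{n_i}{N_{i-1}}\right),
\]
and since $N_b \ge n_i$ the $i$-th summand is bounded by $\sqrt{n_i}/N_{i-1}$, which is $o(1)$ under the rate condition $n_i = o(N_{i-1}^2)$; as $b$ is finite, we conclude $\sqrt{N_b}\,R_b = o_p(1)$. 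For the leading term, strict stationarity (Assumption~\ref{asum:stationarity}), ergodicity (Assumption~\ref{asum:ergodicity}), and the long-run variance condition (Assumption~\ref{asum:lrv}) deliver $\sqrt{N_b}\,\bar g_b(\theta^*) \cin{d} \Normal(0, \Sigma)$ via the CLT for stationary ergodic sequences.

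To finish, I would show $\hat V_b \cin{p} V$ by combining the uniform convergence Assumption~\ref{asum:gradient-uniform} (applied per batch in case~(a) and globally in $N_b$ in case~(b)) with Assumption~\ref{asum:gradient-lipschitz} to absorb the plug-in of the consistent $\hat\theta_{i-1}$, and then apply Slutsky's lemma and Assumption~\ref{asum:weight} to obtain
\[
\sqrt{N_b}(\hat\theta_b - \theta^*) \cin{d} -(V^\T W V)^{-1} V^\T W \cdot \Normal(0, \Sigma) = \Normal(0, H \Sigma H^\T).
\]
The main obstacle will be the bookkeeping inside the induction: simultaneously upgrading Theorem~\ref{thm:consistency} to the $\sqrt{N_i}$-consistency of $\hat\theta_i$ for every $i \le b$ and verifying that the second-order Taylor remainder is $o_p(N_b^{-1/2})$ under $n_i = o(N_{i-1}^2)$, since $\hat U_b$ and $\hat V_b$ both recycle earlier estimators $\hat\theta_{i-1}$ whose errors must not accumulate across batches. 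The only differences between parts~(a) and~(b) lie in which variants of Assumptions~\ref{asum:gradient-lipschitz} and~\ref{asum:gradient-uniform} are invoked to bound $R_b$ and to verify $\hat V_b \cin{p} V$; the rest of the argument is identical.
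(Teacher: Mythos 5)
Your proposal is correct, and it reaches the conclusion by a genuinely more direct decomposition than the paper's. The paper applies the mean value theorem to $\bar{g}_b(\hat\theta_b)$ around $\theta^*$, multiplies by $\hat{V}_b^\T \hat{W}$, and splits $\sqrt{N_b}(\hat\theta_b-\theta^*)$ into $J_{b,1}+J_{b,2}$ as in \eqref{eq:decomp-normality-b}; annihilating $J_{b,1}$ then requires the first-order condition $\hat{V}_b^\T \hat{W}(\hat{U}_b+\hat{V}_b\hat\theta_b)=0$ together with Taylor errors expressed through $\norm{\hat\theta_i-\hat\theta_b}$, which in turn forces the auxiliary estimate $\norm{\hat\theta_{b-1}-\hat\theta_b}=O_p\bigl(n_b/(N_b\sqrt{N_{b-1}})\bigr)$ in \eqref{eq:difference-par-b}, the summation bounds of Lemma~\ref{lem:bound-n}, and the extra convergence $\nabla\bar{g}_b(\bar\theta_b)\cin{p}V$. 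You instead use the exact algebraic identity $\hat\theta_b-\theta^*=-(\hat{V}_b^\T \hat{W}\hat{V}_b)^{-1}\hat{V}_b^\T \hat{W}(\hat{U}_b+\hat{V}_b\theta^*)$ and expand $\hat{U}_b+\hat{V}_b\theta^*$ around $\theta^*$ only; that expansion is exactly what the paper already establishes in the proof of Lemma~\ref{lem:convergence-uv} (equations \eqref{eq:moment-error-a} and \eqref{eq:moment-error-b}), so your inductive step needs only the $\sqrt{N_i}$-consistency hypothesis, the remainder rate under $n_i=o(N_{i-1}^2)$, and the CLT plus Slutsky --- a visibly shorter argument that never has to control $\norm{\hat\theta_i-\hat\theta_b}$. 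Two bookkeeping caveats, neither fatal: the batch-$i$ statistics are linearized at $\hat\theta_{\min(i,b-1)}$ rather than $\hat\theta_{i-1}$, which only replaces $N_{i-1}$ by $N_i$ for $i<b$ in your bound and makes it smaller; and in case (a) your displayed bound for $R_b$ shows only the quadratic term, whereas Assumption~\ref{asum:gradient-lipschitz}(a) controls only the expected gradient, so an additional fluctuation term of order $N_b^{-1}\sum_i n_i^{1/2}\norm{\hat\theta_{\min(i,b-1)}-\theta^*}$ appears (cf.\ \eqref{eq:moment-error-a}); you flag its existence, and after scaling by $\sqrt{N_b}$ it is $O_p(\max_i N_{\min(i,b-1)}^{-1/2})=o_p(1)$, but it should be carried explicitly in the bound.
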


When $b = 1$, Theorem \ref{thm:normality-par} is exactly Theorem 3.2 in \citet{hall2005gmm}, which states the asymptotic normality of $\hat\theta_1$ if it is computed using $\GMM$.
The asymptotic variance is minimized at $(V^\T \Sigma^{-1} V)^{-1}$ when $W = \Sigma^{-1}$.
For $b \ge 2$, $\hat\theta_b$ achieves the same asymptotic efficiency regardless of whether the initial estimator is $\GMM$.
As briefly discussed in Example \ref{eg:lecam}, any $\sqrt{N_1}$-consistent $\hat\theta_1$ will be sufficient for $\hat\theta_2$ to achieve the same asymptotic efficiency as $\GMM$.
Therefore, users have some flexibility in choosing $\hat\theta_1$ in Algorithm \ref{algo:ogmm}.

Compared with renewable estimators in the literature, $\OGMM$ also achieves the same asymptotic efficiency as its offline counterpart in Theorem \ref{thm:normality-par}.
The condition $n_i = o(N_{i-1}^2)$ for $i = 2, \ldots, b$ appears to be new and mild, and is needed to control the Taylor approximation error of explicit updates in $\OGMM$.  
It is easily seen to be satisfied when the batch size is constant, and can be fulfilled in practice by splitting a large batch of new data into smaller batches. Other assumptions are similar to the standard conditions in the $\GMM$ or renewable estimation literature.

\begin{remark}[Explicit renewable estimators] \label{rmk:explicit}
	The renewable estimators (4.5) and (4.12) for unconditional quantile regression in \citet{jiang2024runconditional} are also explicit.
	Their additional condition is $n_i = O(n_1)$ for $i = 2, \ldots, b$, which implies our condition.
\end{remark}

\begin{remark}[Verification of Assumptions \ref{asum:gradient-lipschitz}(b) and \ref{asum:gradient-uniform}(b)(ii)] \label{rmk:gradient}
	In practice, case (b) is usually of broader interest because there are no restrictions on $n_2,n_3,\ldots$.
	Therefore, we give some examples where we can verify Assumptions \ref{asum:gradient-lipschitz}(b) and \ref{asum:gradient-uniform}(b)(ii).
	\begin{enumerate}
		\item For the ordinary least squares in Example \ref{eg:ls}, Assumption \ref{asum:gradient-lipschitz}(b) holds because $\nabla g(\theta, \{x,y\}) = -x x^\T$ does not depend on $\theta$.
		By the law of large numbers, Assumption \ref{asum:gradient-uniform}(b)(ii) also holds if $\E\{ \norm{x x^\T}_F \} < \infty$.
		\item The instrumental variables estimation is equivalent to $\GMM$ with $g(\theta, \{x, y, z\}) = z (y -x^\T \theta)$, where $z$ is a vector of instrumental variables.
		Similar to the ordinary least squares, Assumption \ref{asum:gradient-lipschitz}(b) holds because $\nabla g(\theta, \{x, y, z\}) = -z x^\T$ does not depend on $\theta$, and	
		Assumption \ref{asum:gradient-uniform}(b)(ii) holds if $\E\{ \norm{z x^\T}_F \} < \infty$.
		\item The maximum likelihood estimation is equivalent to $\GMM$ with $g(\theta, x) = \nabla \ell(\theta, x)$, where $\ell(\theta, x)$ is the log-likelihood function.
		If $\ell(\theta, x)$ is twice continuously differentiable for all $\theta \in \Theta$ (Condition 3 in \citealt{luo2020renew}), Assumption \ref{asum:gradient-lipschitz}(b) holds by the mean value theorem.
		If $\Theta$ is compact and $\E\{\sup_{\theta \in \Theta} \norm{\nabla^2 \ell(\theta, x)}_F \} < \infty$, then Assumption \ref{asum:gradient-uniform}(b)(ii) also holds by Lemma 2.4 in \citet{newey1994asymptotic}.
	\end{enumerate}
\end{remark}

\subsection{Over-identifying restrictions test} \label{sec:over-identifying-restrictions-test}

\begin{theorem}[Asymptotic distributions of the online sample moment and online Sargan--Hansen test statistic] \label{thm:normality-moment}
	If the conditions in Theorem \ref{thm:normality-par}(a) or (b) hold, then \linebreak
	$\hat{W}^{1/2} \sqrt{N_b} (U_b +V_b \hat\theta_b) \cin{d} \Normal(0, R W^{1/2} \Sigma (W^{1/2})^\T R^\T)$ as (a) $\min(n_1, \ldots, n_b) \to \infty$; or (b) $n_1 \to \infty$,
	where $R = I_q -W^{1/2} V (V^\T W V)^{-1} V^\T (W^{1/2})^\T$.
	Suppose further that $\hat\Sigma$ is positive semi-definite and converges in probability to $\Sigma$ in Assumption \ref{asum:lrv}.
	Then $N_b (U_b +V_b \hat\theta_b)^\T \hat\Sigma^{-1} (U_b +V_b \hat\theta_b) \cin{d} \chi_{q-p}^2$ as (a) $\min(n_1, \ldots, n_b) \to \infty$; or (b) $n_1 \to \infty$.
\end{theorem}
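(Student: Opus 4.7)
The plan is to reduce the online sample moment $U_b + V_b\hat\theta_b$ to the offline sample moment $\bar{g}_b(\hat\theta_b)$ and then execute the classical Sargan--Hansen argument. Starting from the identity
\[
N_b(U_b + V_b \hat\theta_b) = \sum_{i=1}^b G(\hat\theta_i; D_i) + \sum_{i=1}^b \nabla G(\hat\theta_i; D_i)(\hat\theta_b - \hat\theta_i),
\]
I would Taylor-expand each $G(\hat\theta_b; D_i)$ around $\hat\theta_i$ in integral form and bound the $i$-th remainder by a constant multiple of $n_i \|\hat\theta_b - \hat\theta_i\|_2^2$ using the local Lipschitz condition in Assumption \ref{asum:gradient-lipschitz}(b). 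Applying Theorem \ref{thm:normality-par} at each intermediate index gives $\|\hat\theta_i - \theta^*\|_2 = O_p(N_i^{-1/2})$, so each remainder is $O_p(n_i/N_i) = O_p(1)$; summing over the fixed number $b$ of batches then yields $\sqrt{N_b}(U_b + V_b \hat\theta_b) = \sqrt{N_b}\bar{g}_b(\hat\theta_b) + o_p(1)$.

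Next I would Taylor-expand $\bar{g}_b(\hat\theta_b)$ around $\theta^*$, using the local uniform convergence of $N_b^{-1}\sum_{i=1}^b \nabla G(\cdot; D_i)$ to $V$ (Assumption \ref{asum:gradient-uniform}(b)(ii)) together with the consistency of $\hat\theta_b$ (Theorem \ref{thm:consistency}) to obtain
\[
\sqrt{N_b}\bar{g}_b(\hat\theta_b) = \sqrt{N_b}\bar{g}_b(\theta^*) + V\sqrt{N_b}(\hat\theta_b - \theta^*) + o_p(1).
\]
Combining this with the Bahadur-type representation $\sqrt{N_b}(\hat\theta_b - \theta^*) = -(V^\T W V)^{-1} V^\T W \sqrt{N_b}\bar{g}_b(\theta^*) + o_p(1)$ that is implicit in the proof of Theorem \ref{thm:normality-par}, and pre-multiplying by $\hat{W}^{1/2} \cin{p} W^{1/2}$ via the continuous mapping theorem,
\[
\hat{W}^{1/2}\sqrt{N_b}(U_b + V_b \hat\theta_b) = W^{1/2}\bigl\{I_q - V(V^\T W V)^{-1} V^\T W\bigr\}\sqrt{N_b}\bar{g}_b(\theta^*) + o_p(1).
\]
The symmetry of $W^{1/2}$ rewrites the matrix coefficient as $R W^{1/2}$; the central limit theorem for stationary ergodic sequences (valid under Assumption \ref{asum:lrv}) gives $\sqrt{N_b}\bar{g}_b(\theta^*) \cin{d} \Normal(0, \Sigma)$, and Slutsky's theorem delivers the first claim.

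For the chi-squared conclusion I would specialize to the efficient weighting $\hat{W} \cin{p} \Sigma^{-1}$ underlying the Sargan--Hansen test (the default recommended in \cref{sec:online-weighting-matrix}). Under $W = \Sigma^{-1}$, $R = I_q - \Sigma^{-1/2} V(V^\T\Sigma^{-1}V)^{-1} V^\T \Sigma^{-1/2}$ is symmetric idempotent of rank $q-p$. Pre-multiplying the preceding linearization by $\hat\Sigma^{-1/2} \cin{p} \Sigma^{-1/2}$ yields $\hat\Sigma^{-1/2}\sqrt{N_b}(U_b + V_b \hat\theta_b) \cin{d} R Z$ with $Z \sim \Normal(0, I_q)$, and taking squared Euclidean norms with $R^\T R = R$ gives
\[
N_b(U_b + V_b \hat\theta_b)^\T \hat\Sigma^{-1}(U_b + V_b \hat\theta_b) \cin{d} Z^\T R Z \sim \chi_{q-p}^2
\]
by the standard spectral argument for quadratic forms in Gaussians (Cochran's theorem).

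The main obstacle is the bookkeeping in the first reduction step: because the lagged linearization in $U_b$ and $V_b$ encodes a different $\hat\theta_i$ in each batch, the cumulative Taylor remainder must be tracked batch by batch and controlled via the $\sqrt{N_i}$-consistency of every intermediate estimator, which in turn requires inductively invoking Theorem \ref{thm:normality-par} and the batch-size condition $n_i = o(N_{i-1}^2)$ inherited from it.
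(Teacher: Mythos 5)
Your proposal is correct and follows essentially the same route as the paper's proof: reduce $\sqrt{N_b}(U_b+V_b\hat\theta_b)$ to $\sqrt{N_b}\bar{g}_b(\hat\theta_b)$ by bounding the batchwise Taylor remainders via the $\sqrt{N_i}$-consistency of the intermediate estimators, then expand around $\theta^*$, insert the Bahadur representation from Theorem \ref{thm:normality-par} to obtain the projection form $RW^{1/2}\sqrt{N_b}\bar{g}_b(\theta^*)+o_p(1)$, and conclude the $\chi^2_{q-p}$ limit from the idempotency of $R$ under $W=\Sigma^{-1}$ exactly as in Hall's Theorem 5.1. The only cosmetic difference is that you bound the cumulative remainder crudely as $O_p(b/\sqrt{N_b})$ for fixed $b$, whereas the paper routes the same sums through its Lemma \ref{lem:bound-n}; both suffice.
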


When $b = 1$, Theorem \ref{thm:normality-moment} is equivalent to Theorems 3.3 and 5.1 in \citet{hall2005gmm} if $\hat\theta_1$ is computed using $\GMM$ because $U_1 +V_1 \hat\theta_1 \equiv \bar{g}_1(\hat\theta_1)$.
For $b \ge 2$, the online sample moment and online Sargan--Hansen test statistic differ from their offline version as more than one $\hat\theta_i$'s are plugged in.
Therefore, we have an additional first-order correction term $N_b^{-1} \sum_{i=1}^b \nabla G(\hat\theta_i; D_i) (\hat\theta_b -\hat\theta_i)$ compared with \citet{chen2023sgmm}.
We remark that the condition on $\hat\Sigma$ is satisfied by the online long-run variance estimator in \citet{rlrv} under standard conditions.

\section{Simulation results} \label{sec:simulations}

In this section, we examine the finite-sample performance of $\OGMM$, as compared to several competitors in the literature. 
Specifically, we report the results for online instrumental variables regression in \cref{sec:online-instrumental-variables-regression}, for online Sargan-Hansen test in \cref{sec:online-sargan-hansen-test}, for online quantile regression in \cref{sec:online-quantile-regression} and for online anomaly detection in \cref{sec:online-anomaly-detection}. 

Throughout \cref{sec:simulations,sec:applications}, we may lighten the notation by using a single subscript to index variables, e.g., $x_k = x_{b, j}$ with $k = \sum_{i=1}^{b-1} n_i +j$.
Unless otherwise stated, we use the R-package \texttt{momentfit} 0.5 for offline two-step $\GMM$ and $\TSLS$, \texttt{sandwich} 3.1.1 for the Bartlett kernel estimator with a bandwidth selected by fitting a first-order autoregressive model \citep{andrews1991kernel} and \texttt{prewhite=FALSE}, \texttt{rlaser} 0.1.1 for \citeauthor{rlrv}'s \citeyearpar{rlrv} recursive long-run variance estimator with $\lambda=1$, $\phi=1$, \texttt{pilot} equals to $n_1$ and \texttt{warm=FALSE}, and \texttt{ogmm} 0.0.1 for $\OGMM$, $\SGMM$ and $\LEQR$.
Non-default arguments will be discussed whenever they are modified.
All simulations are repeated for $1000$ times.

\subsection{Online instrumental variables regression} \label{sec:online-instrumental-variables-regression}

Before we investigate the finite-sample performance of $\OGMM$, we describe $\SGMM$ \citep{chen2023sgmm}, another framework for computing $\GMM$ with online updates.
Let $\eta_k = \eta_0 k^{-a}$ be the learning rate at the $k$-th iteration for some $\eta_0 > 0$ and $a \in (1/2, 1)$.
The general $\SGMM$ algorithm considers
$
\check\theta_{k+1} 
= \check\theta_k -\eta_{k+1} (\check{V}_k^\T \check{W}_k \check{V}_k)^{-1} \check{V}_k^\T \check{W}_k g(\check\theta_k, x_{k+1}),
$
where $\check{V}_k$ is an online average that estimate $V = \E\{ \nabla g(\theta^*, x_{i,j}) \}$, and
$\check{W}_k$ is the inverse of the second sample moment updated using the Sherman--Morrison--Woodbury formula.
By updating the average $\bar\theta_N = N^{-1} \sum_{k=1}^N \check\theta_k$,
inference on $\theta^*$ can be performed using a plug-in approach.
To conduct the Sargan--Hansen test, \citet{chen2023sgmm} proposed the test statistic
\begin{equation} \label{eq:sgmm-overident}
	N \check{g}_N^\T \check{W}_N \check{g}_N,
	\quad \text{where} \quad
	\check{g}_N = \frac{1}{N} \sum_{k=1}^N g(\bar\theta_k, x_k).
\end{equation}
For the initial learning rate, consider using an initialization data batch $D_0$ to compute $\check\theta_0$, $\check{V}_0$ and $\check{W}_0$.
Then, \citet{chen2023sgmm} proposed to use $\eta_0 = 1/\Psi_0(\kappa)$, where
\begin{equation} \label{eq:sgmm-lr}
	\Psi_0(\kappa) 
	= \mathrm{quantile}_\kappa \left\{ p^{-1} \norm{(\check{V}_0^\T \check{W}_0 \check{V}_0)^{-1} \check{V}_0^\T \check{W}_0 \nabla g(\check\theta_0, x)}_S: x \in D_0 \right\},
\end{equation}
$\kappa$ is a predetermined quantile level and $\norm{\cdot}_S$ is the spectral norm in \eqref{eq:sgmm-lr}.

Now, consider the following linear instrumental variables regression models:
\begin{enumerate}[series=enum-model]
	\item \label{enum:iv-het} Independent: let
	$
	y_k = x_k^\T \theta^* +5 \exp(z_{k,q}) (\nu_k +\varepsilon_k),
	$
	where $\nu_k, \varepsilon_k \simIID \Normal(0,1)$,
	$z_k \simIID \Normal_q(0, \mathcal{V})$ with $\mathcal{V}_{i,j} = \rho^{|i-j|}$,
	$x_{k,1} = 0.1 \sum_{j=2}^p x_{k,j} +0.5 \sum_{j=p}^q z_{k,j} +\nu_k$, and
	$x_{k,j} = z_{k,j-1}$ for $j = 2, \ldots, p$.
	We exactly replicate a simulation model in \citet{chen2023sgmm} by setting $\theta^* = (1, \ldots, 1)^\T$, $p = 5$, $q = 20$, and $\rho = 0.5$.
	\item \label{enum:iv-arma} Dependent: let
	$
	y_k = \theta_1^* y_{k-1} +\theta_2^* y_{k-2} +\phi_1^* \varepsilon_{k-1} +\phi_2^* \varepsilon_{k-2} +\varepsilon_k,
	$
	where $\varepsilon_k \simIID \Normal(0,1)$.
	We exactly replicate a simulation model in \citet{pierre2010gmm} by setting $\theta^* = (1.4, -0.6)^\T$ and $\phi^* = (0.6, -0.3)^\T$.
	The selected instruments are $z_k = (y_{k-3}, \ldots, y_{k-6})^\T$.
\end{enumerate}
We are interested in estimating $\theta^*$ at $N_b = 2000b$ for $b = 1, 2, \ldots, 100$ using
\begin{enumerate}
	\item $\TSLS$: the two-stage least squares as the offline benchmark and initial estimator for model \ref{enum:iv-het}.
	The Bartlett kernel estimator is used for inference.
	\item $\GMM$: the two-step $\GMM$ as the offline benchmark and initial estimator for model \ref{enum:iv-arma}.
	The Bartlett kernel estimator is used for inference and the weighting matrix.
	We also use $\tilde\theta_{b-1}$ as the initial value when we optimize for $\tilde\theta_b$, and $\tilde\theta_0 = (0, 0)^\T$.
	\item $\SGMM$: for model \ref{enum:iv-het}, we consider different initial learning rates selected at $\kappa \in \{0.5, 0.7, 0.9\}$.
	For model \ref{enum:iv-arma}, we only consider $\kappa=0.5$ but further try \citeauthor{rlrv}'s \citeyearpar{rlrv} recursive long-run variance estimator for both inference and the weighting matrix.
	For both models, we set $a = 0.501$ in the learning rate $\eta_k$.
	However, the stochastic $\TSLS$ algorithm is not used for warm up as in \citet{chen2023sgmm} because it requires a randomly partitioned subsample with size that scales with $N_b$,
	which is not feasible in our online setting where $N_b$ is increasing.
	Similarly, we do not consider the multi-epoch approach in \citet{chen2023sgmm}.
	\item $\OGMM$: both \citeauthor{welford1962recursive}'s \citeyearpar{welford1962recursive} and \citeauthor{rlrv}'s \citeyearpar{rlrv} estimators are considered for inference and the weighting matrix.
	We further compare explicit and implicit updates with the stopping criteria in \citet{luo2022qif}.
	Specifically, we stop when the number of Newton--Raphson iterations reaches $50$ or \linebreak 
	$\{ (\hat{V}_b^{(r)})^\T \hat{W} \hat{U}_b^{(r)} \}^\T \{ (\hat{V}_b^{(r)})^\T \hat{W} \hat{V}_b^{(r)} \}^{-1} (\hat{V}_b^{(r)})^\T \hat{W} \hat{U}_b^{(r)} < 10^{-6}$,
	where \linebreak
	$\hat{V}_b^{(r)} = N_b^{-1} \{ N_{b-1} V'_{b-1} +\nabla G(\hat\theta_b^{(r-1)}; D_b) \}$, 
	$\hat{U}_b^{(r)} = N_b^{-1} \{ N_{b-1} U'_{b-1} +N_{b-1} V'_{b-1} (\hat\theta_b^{(r-1)} -\hat\theta'_{b-1}) +G(\hat\theta_b^{(r-1)}; D_b) \}$, and 
	$\hat\theta_b^{(r)} = \hat\theta_b^{(r-1)} -\{ (\hat{V}_b^{(r)})^\T \hat{W} \hat{V}_b^{(r)} \}^{-1} (\hat{V}_b^{(r)})^\T \hat{W} \hat{U}_b^{(r)}$.
\end{enumerate}

\begin{figure}[!t]
	\centering
	\includegraphics[width=\textwidth]{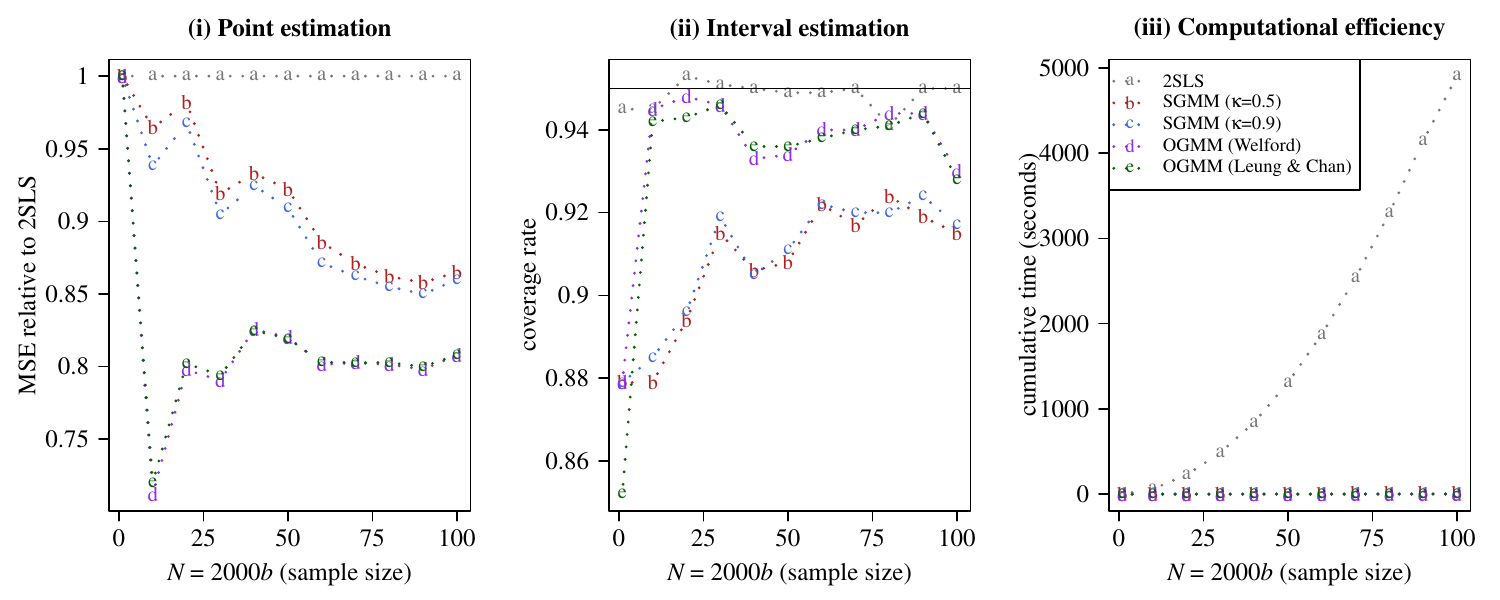}
	\caption{Online instrumental variables estimation of $\theta^*_1$ in the independent model \ref{enum:iv-het}.
		A better estimator has a lower mean squared error (MSE), a coverage rate closer to $95\%$ and a shorter computation time.}
	\label{fig:iv-het}
\end{figure}

\begin{figure}[!t]
	\centering
	\includegraphics[width=\textwidth]{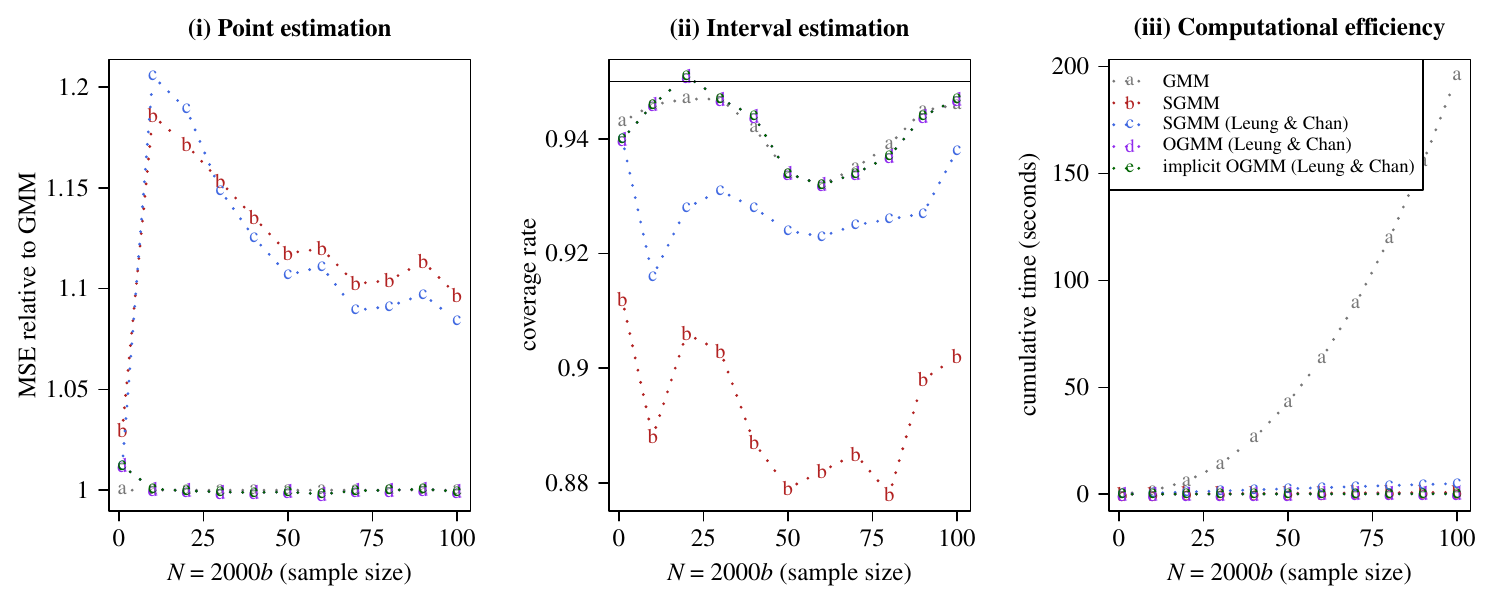}
	\caption{Online instrumental variables estimation of $\theta^*_1$ in the dependent model \ref{enum:iv-arma}.
		The caption of Figure \ref{fig:iv-het} also applies here.}
	\label{fig:iv-arma}
\end{figure}

In Figures \ref{fig:iv-het} \& \ref{fig:iv-arma}, we compare mean squared errors of the point estimates, coverage rates of the  $95\%$ confidence intervals for $\theta_1^*$, and computational time in seconds.
Figure \ref{fig:iv-het} confirms that suboptimal $\sqrt{N_1}$-consistent $\hat\theta_1$ (such as $\TSLS$) is still sufficient for $\OGMM$ to be asymptotically optimal.
Despite the fact that data are independent under model \ref{enum:iv-het}, using a recursive long-run variance estimator for autocorrelation robustness does not affect the performance of $\OGMM$ much.
On the other hand, $\SGMM$ is not sensitive to the choice of $\kappa$ here.
Its mean squared errors relative to $\TSLS$ and coverage rates are similar to those reported in \citet{chen2023sgmm}.
Nevertheless, $\SGMM$ is not as efficient as $\OGMM$ in a moderately large sample ($N_b \le 2 \times 10^5$).

Figure \ref{fig:iv-arma} verifies the asymptotic efficiency of $\OGMM$ relative to $\GMM$.
It also shows that $\OGMM$ and $\SGMM$ greatly reduce the computational cost of $\GMM$.
Since data are dependent under model \ref{enum:iv-arma}, recursive long-run variance estimation improves the coverage rates of $\SGMM$, which was originally designed for independent data.
However, $\OGMM$ still outperforms $\SGMM$ when the same long-run variance estimator is used.
It does not gain efficiency through implicit updates though because the moment is linear in $\theta$; see Remark \ref{rmk:gradient}.
Additional results with the same findings are deferred to the \Supp.

\subsection{Online Sargan--Hansen test} \label{sec:online-sargan-hansen-test}

In \cref{sec:over-identifying-restrictions-test}, we observe that there is a first-order correction term in our online Sargan--Hansen test statistic because we are plugging in a sequence of estimates rather than a single estimate in the offline setting.
On the other hand, there is no first-order correction term in \eqref{eq:sgmm-overident}.
To investigate the influence of this term, consider the following models:

\begin{enumerate}[resume*=enum-model]
	\item \label{enum:overident-ind} Independent: let
	$y_k = \theta_1^* x_k +\theta_2^* z_{k,1} +\varepsilon_k$ and
	$x_k = z_{k,1} +z_{k,2} +\nu_k$, where
	$(z_{k,1}, z_{k,2}, \nu_k, \varepsilon_k)^\T \simIID \Normal_4(0, \mathcal{V})$ with $\mathcal{V}_{i,j} = \I_{i=j} +0.5 \I_{(i,j) \in \{(1,2),(2,1),(3,4),(4,3)\}}$.
	This is the simulation model in \citet{hall2000overident}, and we consider $\theta_1^* = 1$ and $\theta_2^* \in \{0, 0.05, \ldots, 0.5\}$.
	\item \label{enum:overident-dep} Dependent: same as model \ref{enum:overident-ind} but $(z_{k,1}, z_{k,2}, \nu_k, \varepsilon_k)^\T$ follows a vector autoregressive model of order $1$ with mean $(0, \ldots, 0)^\T$, coefficient matrix $\diag(0.5, \ldots, 0.5)$ and Gaussian noise covariance matrix $\mathcal{V}$ in model \ref{enum:overident-ind}.
\end{enumerate}
We estimate $\theta_1^*$ at $N_b = (100) 2^b$ for $b = 1, 2, \ldots, 5$ under the assumption that $\E\{ z_k(y_k -\theta_1^* x_k)\} = 0$.
Clearly, misspecification happens when $\theta_2^* \ne 0$.
The method settings are same as those for model \ref{enum:iv-arma}.
We continue to use two-step $\GMM$ as the offline benchmark and initial estimator, and 
$\SGMM$ with $\kappa=0.5$ and $a=0.501$.

\begin{figure}[!t]
	\centering
	\includegraphics[width=\textwidth]{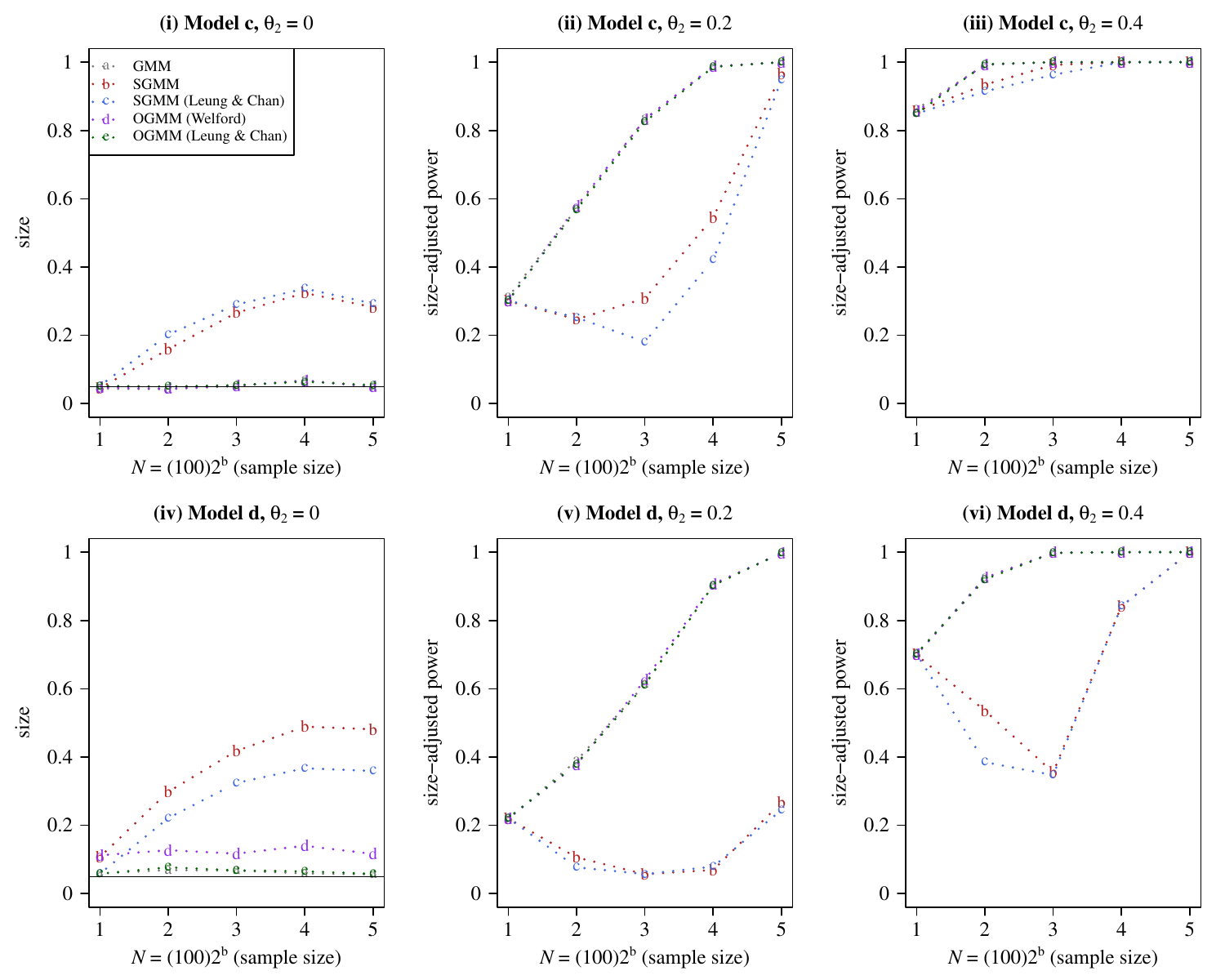}
	\caption{Online Sargan--Hansen test at $5\%$ nominal level under the independent model \ref{enum:overident-ind} (upper panel) and dependent model \ref{enum:overident-dep} (lower panel).
		When $\theta_2^* = 0$, the model is well-specified. A size closer to $5\%$ means that the test controls the type I error better.
		When $\theta_2^* > 0$, the model is misspecified. A high size-adjusted power means the test controls the type II error better.}
	\label{fig:overident}
\end{figure}

Figure \ref{fig:overident} reports the empirical size and size-adjusted power of different Sargan--Hansen tests for $\theta_2^* \in \{0, 0.2, 0.4\}$.
Overall, the performance of $\OGMM$ is very encouraging.
Although the batch size $n_b = N_{b-1}$ is increasing, the statistical efficiency of $\OGMM$ remains comparable to that of $\GMM$, which corroborates with our theory in Theorems  \ref{thm:normality-par} and \ref{thm:normality-moment}.
Recursive long-run variance estimation does not cause size distortion under model \ref{enum:overident-ind} and allows autocorrelation-robust online testing under model \ref{enum:overident-dep}.
On the other hand, $\SGMM$ suffers from considerable size distortion and exhibits non-monotonic size-adjusted power (with respect to the sample size) that seems counter-intuitive and cannot be explained by the variance estimator.
This confirms the validity and necessity of the first-order correction term in Theorem \ref{thm:normality-moment} when a sequence of estimates is plugged in.
The results for other values of $\theta_2^*$ are similar and deferred to the \Supp.

\subsection{Online quantile regression} \label{sec:online-quantile-regression}

As illustrated in Example \ref{eg:quant}, online methods for quantile regression are often based on smoothing the indicator function in the quantile loss
\citep{chen2019rindicator,sun2024rindicator,jiang2024runconditional}.
There is another class of online methods based on a different kind of smoothing  \citep{jiang2022rdensity}.
However, we continue with Example \ref{eg:quant} to compare with \citet{chen2019rindicator} because other estimators are based on implicit updates.
Suppose observations are divided into intervals by their indices.
Define $b_0 = 0$, $c_0 = -\infty$, and $c_{2k-1} = 2^{k-1} +1/2$ and $c_{2k} = 2^{k-1} +3/4$ for $k \ge 1$.
Let $m$ be the memory constraint and $b_l = \lfloor m^{c_{l-1}} \rfloor +1$ be the first index in the $l$-th interval.
\citet{chen2019rindicator} elaborated that $\{b_l\}$ was chosen in this way because there would be no improvement of online $\LEQR$ after $n^2$ fresh observations if the previous estimate was based on $n$ observations.
This is similar to our condition $n_i = o(N_{i-1}^2)$ because $n_i$ is the number of fresh observations and $N_{i-1}$ is previous sample size.
However, apart from the minor difference in Example \ref{eg:quant}, $\OGMM$ is methodologically different from $\LEQR$ in two ways.
First, $\OGMM$ includes all data in the summary statistics, whereas $\LEQR$ mainly includes data from the previous and current intervals; see Algorithm 2 in \citet{chen2019rindicator}.
Second, $\OGMM$ may utilize the memory constraint of size $m$ so that $\hat\theta_{b-1}$ is updated frequently, whereas $\LEQR$ only updates the estimate that is used to compute summary statistics at the start of a new interval.
Now, consider the following models:
\begin{enumerate}[resume*=enum-model]
	\item \label{enum:quant-ind} Independent: let
	$
	y_k = x_k^\T \theta^* +\varepsilon_k,
	$
	where $x_{k,1} = 1$, 
	$x_{k,2}, \ldots, x_{k,p}$ follow a uniform distribution on $[0,1]$ with $\Corr(x_{k,i}, x_{k,j}) = 0.5^{|i-j|}$, and
	$\varepsilon_k \simIID \Normal(0,1)$.
	We set $\theta^* = (1, \ldots, 1)^\T$ and $p=10$, which replicates a simulation model in \citet{chen2019rindicator}.
	\item \label{enum:quant-dep} Dependent: same as model \ref{enum:quant-ind} but $(\varepsilon_k, x_{k,2}, \ldots, x_{k,p})^\T$ follows a vector autoregressive model of order $1$ with mean $(0, \ldots, 0)^\T$, coefficient matrix $\diag(0.5, \ldots, 0.5)$ and Gaussian noise covariance matrix $\mathcal{V}_{i,j} = 0.5^{|i-j|} \I_{i \ne 1} \I_{j \ne 1} +\I_{i = 1} \I_{j = 1}$ for $1 \le i,j \le p$.
\end{enumerate}
We are interested in estimating $\theta_\tau^*$ at $N_b = 2000b$ for $b = 1, \ldots, 250$.
Recall that $\theta_\tau^*$ can be computed by shifting $\varepsilon_k$ such that $\pr(\varepsilon_k \le 0 \mid x_k) = \tau$ \citep{chen2019rindicator}.
For $\tau = 0.1$, the intercept $\theta_{\tau 1}^*$ in models \ref{enum:quant-ind} and \ref{enum:quant-dep} are approximately $-0.28$ and $-0.48$, respectively.
The method settings are close to those for model \ref{enum:iv-het}.
The differences are that we use the default modified Barrodale and Roberts algorithm in the R-package \texttt{quantreg} \citep{koenker2005quant} for initial estimation, 
$\LEQR$ as the online benchmark, and
$\SGMM$ with $\kappa \in \{0.97, 0.98, 0.99\}$ and $a=0.501$.
For the smoothing in Example \ref{eg:quant}, we follow \citet{chen2019rindicator} to use $H(x) = \{1/2 +(15/16)(x -2x^3/3 +x^5/5)\} \I_{|x| < 1} +\I_{x \ge 1}$ and $h_b = \sqrt{p/N_{b-1}}$, where $b$ is the current number of intervals ($\LEQR$) or batches ($\SGMM$/$\OGMM$).

\begin{figure}[!t]
	\centering
	\includegraphics[width=\textwidth]{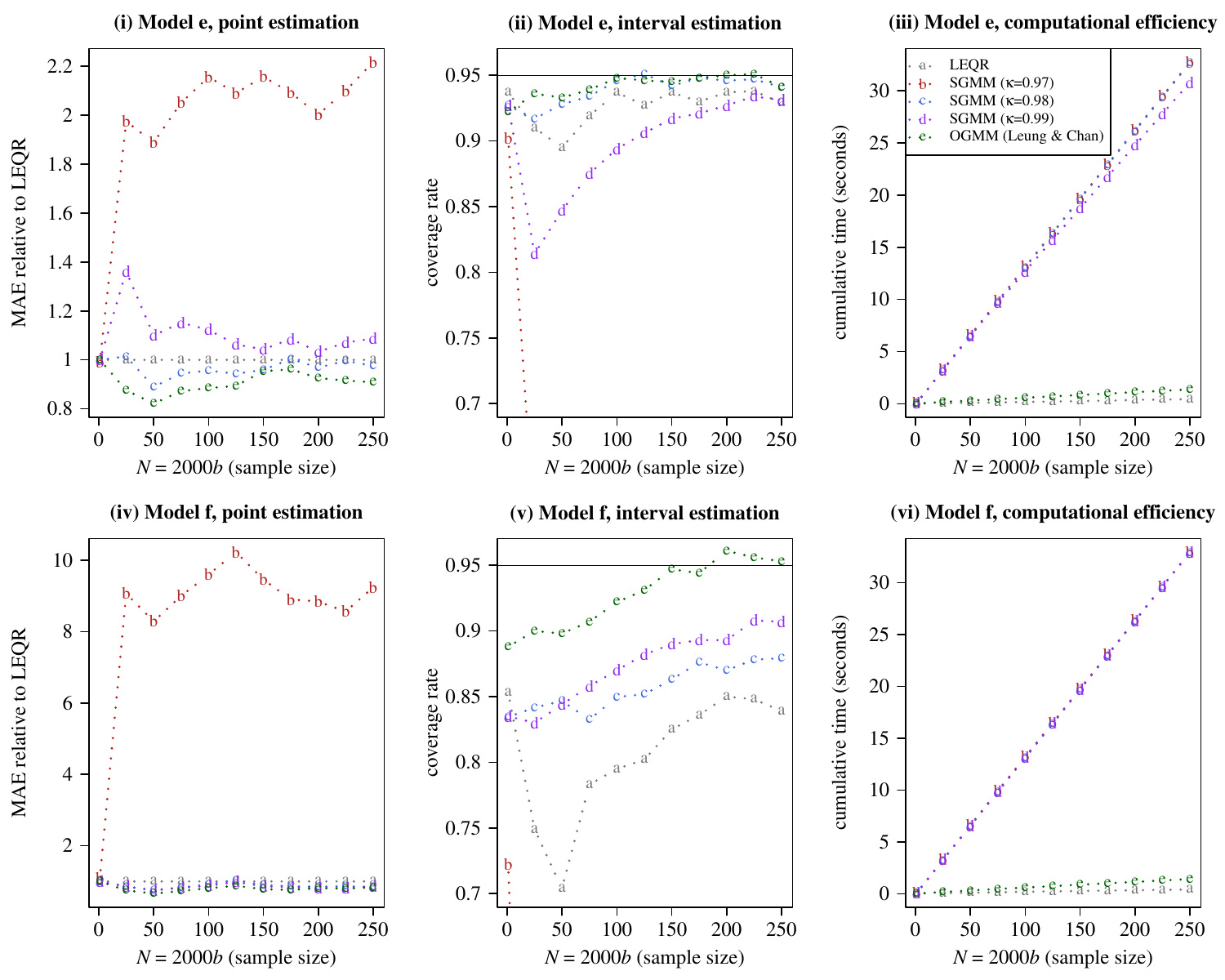}
	\caption{Online quantile regression under the independent model \ref{enum:quant-ind} (upper panel) and dependent model \ref{enum:quant-dep} (lower panel).
		A better estimator has a lower median absolute error (MAE), a coverage rate closer to $95\%$ and a shorter computation time.}
	\label{fig:reg-quant}
\end{figure}

In Figure \ref{fig:reg-quant}, we compare the median absolute errors of point estimates, the computational time in seconds, and  the coverage rates of  $95\%$ confidence intervals for the summed coefficient $J^\T \theta_\tau^*$, 
where $J = (1, \ldots, 1)^\T$ and $\tau=0.1$.
The median absolute error is considered here because $\SGMM$ with $\kappa \in \{0.97, 0.98\}$ failed to initialize or gave unreasonable estimates in some replications.
Figure \ref{fig:reg-quant} illustrates that $\OGMM$'s methodological differences from $\LEQR$ are beneficial.
The computational cost of $\OGMM$ is slightly higher due to recursive long-run variance estimation, but it enables autocorrelation-robust inference in model \ref{enum:quant-dep}.
On the other hand, $\SGMM$ is very sensitive to the choice of $\kappa$ here because a lot of $\nabla g(\check\theta_0, x)$ in \eqref{eq:sgmm-lr} are zero or very small.
It is also computationally inefficient because we can only apply the general algorithm instead of the one specialized for instrumental variables in \citet{chen2023sgmm}. 
The latter one includes a Sherman--Morrison--Woodbury formula for $(\check{V}_k^\T \check{W}_k \check{V}_k)^{-1}$, which is not available in the general version.

\subsection{Online anomaly detection} \label{sec:online-anomaly-detection}

Recall from Examples \ref{eg:anomaly} and \ref{eg:stability} that there are two types of abnormality or structural instability caused by a change point in parameter value and a misspecified model, respectively.
Now, we extend the experiments in \cref{sec:online-sargan-hansen-test} by modifying the models there:

\begin{enumerate}[resume*=enum-model]
	\item \label{enum:stable-ind} Independent: let
	$y_k = (\theta_1^*+\theta_2^* \I_{k \in \Lambda_\text{I}}) x_k +\theta_2^* \I_{k \in \Lambda_\text{O}} z_{k,1} +\varepsilon_k$ and
	$x_k = z_{k,1} +z_{k,2} +\nu_k$, where
	$(z_{k,1}, z_{k,2}, \nu_k, \varepsilon_k)^\T \simIID \Normal_4(0, \mathcal{V})$ with $\mathcal{V}_{i,j} = \I_{i=j} +0.5 \I_{(i,j) \in \{(1,2),(2,1),(3,4),(4,3)\}}$,
	$\Lambda_\text{I} = \{2001, \ldots, 2500, 6001, \ldots, 6500 \}$ and
	$\Lambda_\text{O} = \{4001, \ldots, 4500, 8001, \ldots, 8500 \}$.
	Same as model \ref{enum:overident-ind}, we consider $\theta_1^* = 1$ and $\theta_2^* \in \{0, 0.05, \ldots, 0.5\}$.
	\item \label{enum:stable-dep} Dependent: same as model \ref{enum:stable-ind} but $(z_{k,1}, z_{k,2}, \nu_k, \varepsilon_k)^\T$ follows the vector autoregressive model in model \ref{enum:overident-dep}.
\end{enumerate}

We estimate $\theta_1^*$ at $N_b = 500 b$ for $b = 1, 2, \ldots, 20$ under the assumption that $\E\{ z_k(y_k -\theta_1^* x_k)\} = 0$.
There are abnormalities caused by a change point in parameter value and a misspecified model for $b \in \{5,13\}$ and $b \in \{9,17\}$, respectively.
To detect them, consider

\begin{enumerate}
	\item \citeauthor{luo2022qif}: the offline detection statistic in \eqref{eq:stable-restricted} with $\TSLS$ and two-step $\GMM$ as the initial estimator and the restricted estimator in \eqref{eq:gmm-restricted}, respectively.	
	The Bartlett kernel estimator is used for inference and the weighting matrix.
	\item $\OGMM$ I: the online detection statistic in \eqref{eq:stable-full} with $\TSLS$ as the initial estimator. 
	The $\OGMM$ summary statistics are based on $D_1$ only (initial) or updated with data batches where abnormality is not detected (cumulative).
	\citeauthor{rlrv}'s \citeyearpar{rlrv} estimator is used for inference and the weighting matrix.
	\item $\OGMM$ O: same as $\OGMM$ I but \eqref{eq:stable-unrestricted} is used.
	We expect that $\OGMM$ O should be sensitive to model misspecification only.
\end{enumerate}

\begin{figure}[!t]
	\centering
	\includegraphics[width=\textwidth]{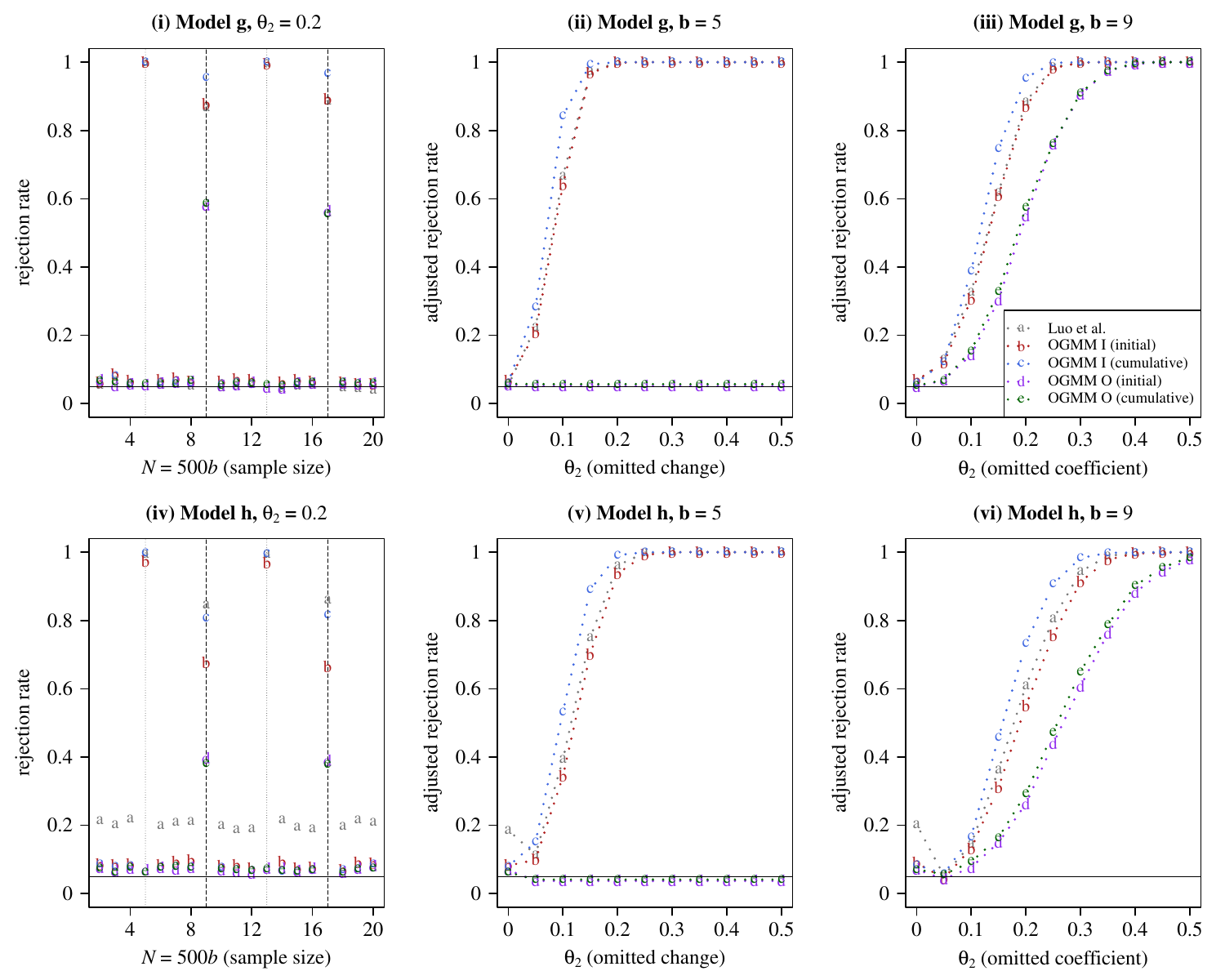}
	\caption{Online anomaly detection at $5\%$ nominal level under the independent model \ref{enum:stable-ind} (upper panel) and dependent model \ref{enum:stable-dep} (lower panel).
		The dotted and dashed lines in plots (i) and (iv) are the locations of abnormality caused by a change point in parameter value and a misspecified model, respectively. 
		For $\theta_2^* = 0$ or $b \protect\not\in \{5, 9, 13, 17\}$, there is no abnormality so the rejection rates are ideally close to $5\%$.
		For $\theta_2^* > 0$ and $b \in \{5, 9, 13, 17\}$, there is abnormality so the tests are expected to reject.}
	\label{fig:stable}
\end{figure}

We compare the empirical rejection rates and size-adjusted rejection rates of different proposals.
Figure \ref{fig:stable} reports the results for $\theta_2^* = 0.2$ and for $b \in \{5,9\}$.
It confirms that $\OGMM$ is able to preserve the statistical properties when we replace offline statistics with online counterparts in our framework.
For example, $\OGMM$ O only rejects when $b=9$ but not when $b=5$, which helps identify the two possible causes of structural instability.
We can also use more normal data to improve the sensitivity of $\OGMM$ I without worrying about the computational cost.
Under model \ref{enum:stable-ind}, there are $38\%$ (initial) and $22.6\%$ (cumulative) reductions in averaged time compared with \citet{luo2022qif}.
Under model \ref{enum:stable-dep}, the reductions change to $26.7\%$ (initial) and $10\%$ (cumulative) due to long-run variance estimation, but it allows autocorrelation-robust online testing.
The results for other values of $\theta_2^*$ or $b$ are similar and deferred to the \Supp.

\section{Applications} \label{sec:applications}

In \cref{sec:sv}, we consider the rolling approach of stochastic volatility modelling \citep{pascalau2023sv} and find that our $\OGMM$ can be a handy addition to a forecaster's toolbox.
In \cref{sec:sensor}, we show that $\OGMM$ can be extended to the frequency domain and offers some alternative insights into the inertial sensor calibration problem.

\subsection{Stochastic volatility modelling} \label{sec:sv}

Forecasting is a central topic in econometrics and statistics.
With wider availability of high frequency data, practitioners are interested in updating models quickly to generate real-time prediction.
However, traditional stochastic volatility models are usually trained with daily aggregated data.
In light of it, \citet{pascalau2023sv} proposed a rolling approach to incorporate intraday information.
We consider different online estimation methods via $\OGMM$ here.
Our data set consists of $7976478$ one-minute bar euro dollar rate (EUR/USD) from 2000 to 2023 (UTC-5), which are freely available on \href{https://www.histdata.com/}{HistData.com}.
We exclude $1681$ observations in the weekend (UTC+2) and compute the close-to-close log 5-minute return $r_{t,h,m}$ for $m = 1,\ldots, 12$ in the $h$-th hour on day $t$.
Define the rolling daily realized variance and realized quarticity
\[
\RV_{t,h}
= \sum_{i=h-23}^{h} \sum_{j=1}^{12} r_{t,i,j}^2 \quad \text{and} \quad
\RQ_{t,h}
= \frac{24 \times 12}{3} \sum_{i=h-23}^{h} \sum_{j=1}^{12} r_{t,i,j}^4,
\]
where $r_{t,i-\ell,j} = r_{t-1,i+24-\ell,j}$ if $i-\ell \in (-24, 0]$, and $r_{t,i+\ell,j} = r_{t+1,i-24+\ell,j}$ if $i+\ell \in (24, 48]$.
We forecast the one-day-ahead daily realized variance $\RV_{t+1,24}$ using the following approaches with all available observations to date.

\begin{enumerate}
	\item Traditional: the quarticity-adjusted heterogeneous autoregressive model
	\begin{equation} \label{eq:harq}
		\RV_{t,24} = \theta_0^* +(\theta_1^* +\theta_2^* \sqrt{\RQ_{t-1,24}}) \RV_{t-1,24} +\theta_3^* \sum_{\ell=1}^5 \frac{\RV_{t-\ell,24}}{5} +\theta_4^* \sum_{\ell=1}^{21} \frac{\RV_{t-\ell,24}}{21} +\varepsilon_{t,24}
	\end{equation}
	in \citet{bollerslev2016sv} is estimated using ordinary least squares and updated daily.
	Following \citet{pascalau2023sv}, the initial sample size is $1000$.
	\item Rolling: the model in \eqref{eq:harq} is modified to
	\begin{equation} \label{eq:harq-roll}
		\RV_{t,h} = \theta_0^* +(\theta_1^* +\theta_2^* \sqrt{\RQ_{t-1,h}}) \RV_{t-1,h} +\theta_3^* \sum_{\ell=1}^5 \frac{\RV_{t-\ell,h}}{5} +\theta_4^* \sum_{\ell=1}^{21} \frac{\RV_{t-\ell,h}}{21} +\varepsilon_{t,h}
	\end{equation}
	using the rolling approach \citep{pascalau2023sv}.
	Therefore, the initial sample size becomes $1000 \times 24$.	
	Other settings are same as Traditional and we focus on forecasting $\RV_{t+1,24}$ at day $t$'s end.
	\item $\OGMM$: consider \eqref{eq:harq-roll} with different moment conditions.
	(LS) least squares; see Example \ref{eg:ls}.
	(QR) quantile regression with $\tau = 0.5$; see \cref{sec:online-quantile-regression}.
	(IV) instrumental variables regression with $1$, $\sum_{\ell=1}^5 \RV_{t-\ell,h}/5$, $\sum_{\ell=1}^{21} \RV_{t-\ell,h}/21$ and $\{ \sqrt{\RQ_{t-i,h}} \}_{i=k}^{21}$ as instruments, and $\TSLS$ as the initial estimator; see \cref{sec:online-instrumental-variables-regression}.
	As \citet{bollerslev2016sv} suggested that the measurement error in $\RV_{t-1,h}$ is most serious, we try to use $\{ \sqrt{\RQ_{t-i,h}} \}_{i=k}^{21}$ as instruments for terms that involve $\RV_{t-1,h}$.
	The weighting matrix is inverted from \citeauthor{welford1962recursive}'s \citeyearpar{welford1962recursive} estimator for (LS) and (QR), and \citeauthor{rlrv}'s \citeyearpar{rlrv} estimator for (IV).
	Other settings are same as Rolling.
	\item $\SGMM$: same as $\OGMM$.
	The learning rate is chosen by \eqref{eq:sgmm-lr} with $\kappa=0.5$ and $a=0.501$.
\end{enumerate}

\begin{table}[!t]
	\centering	
	\scalebox{1}{
		\begin{tabular}{lcccccc}
			\toprule[2pt]
			& \multicolumn{3}{c}{MSE relative to Traditional} & \multicolumn{3}{c}{Time relative to Traditional} \\
			\cline{2-4} \cline{5-7}
			& Rolling 	& $\OGMM$ 	& $\SGMM$  & Rolling 	& $\OGMM$ 	& $\SGMM$  \\
			\midrule[0.5pt]
			LS 		 & 0.997	& 0.997	& 1.088	& 19.221 	& 0.118	& 0.368 \\
			QR 		 & --		& 0.994	& 1.076	& --		& 0.251	& 0.661 \\
			IV ($k=1$) & --		& 0.990 	& 1.015 & --		& 0.289 & 0.791 \\
			IV ($k=2$) & --		& 0.990 	& 1.019 & --		& 0.270  & 0.789 \\
			IV ($k=3$) & --		& 0.980 	& 1.013 & --		& 0.275 & 0.780  \\
			IV ($k=4$) & --		& 0.970 	& 1.008 & --		& 0.250  & 0.793 \\
			IV ($k=5$) & --		& 0.960 	& 1.004 & --		& 0.249 & 0.766 \\
			\bottomrule[2pt]
		\end{tabular}
	}
	\caption{One-day-ahead forecasting performances of \eqref{eq:harq-roll} estimated using different methods. 
		The mean squared errors and computation times are relative to \eqref{eq:harq}.}
	\label{tab:sv}
\end{table}

Table \ref{tab:sv} reports the mean squared error and computation time relative to the traditional approach to update models and generate predictions.
It confirms that intraday information can be used to deliver more accurate forecasts, which is consistent with the findings in \citet{pascalau2023sv}.
Furthermore, $\OGMM$ LS vastly improves the computational efficiency of the rolling approach
by reducing its computing time by approximately 160 times.
Meanwhile, $\OGMM$ QR and IV open the door to statistical refinements
by further reducing the relative mean squared error from 0.997 to 0.994 and then to 0.96 (with $k=5$).
$\SGMM$ is unable to achieve them
as its mean squared error is higher than that of the traditional method and 
its computational cost is approximately 2.5 to 3 times that of $\OGMM$.
Moreover, $\SGMM$ is affected by the choice of learning rate.
In practice, $\OGMM$ allows us to conduct online estimation and inference easily given any moment condition. 
Users can maintain several sets of $\OGMM$ summary statistics and ensemble forecasts in real time. 
The efficiency and flexibility of $\OGMM$ are thus valuable in the analysis of large-scale streaming data. 

\subsection{Inertial sensor calibration} \label{sec:sensor}

Inertial sensors are very common in modern vehicles and mobile robots.
However, they may be corrupted by stochastic errors of complex structure, e.g.,
\begin{equation} \label{eq:sensor}
	y_t = \sum_{i=1}^2 z_t^{(i)} +u_t^{(3)},
\end{equation}
where $\{ y_t \}$ are the observed error signals, $z_t^{(i)} = \rho_i z_{t-1}^{(i)} +u_t^{(i)}$ follow latent first-order autoregressive processes, and $u_1^{(i)},u_2^{(i)},\ldots \simIID \Normal(0, \sigma_i^2)$ and are independent across $i$ \citep{stebler2014gmwm}.
To estimate $\theta^* = (\rho_1, \sigma_1^2, \rho_2, \sigma_2^2, \sigma_3^2)^\T$, 
we can apply the generalized method of wavelet moments ($\GMWM$, \citealp{guerrier2013gmwm}).
Specifically, let $w_{t,j}$ be the maximal overlap discrete wavelet transform of $y_t$ at scale $\tau_j = 2^{j-1}$ using the Haar filter.
Then, the $j$-th element of the moment function is
\[
w_{t,j}^2 -\sum_{i=1}^2 \frac{(0.5\tau_j -3\rho_i -0.5\tau_j \rho_i^2 +4 \rho_i^{1 +0.5\tau_j} -\rho_i^{1+\tau_j}) \sigma_i^2}{0.5 \tau_j^2 (1-\rho_i)^2 (1-\rho_i^2)} -\frac{\sigma_3^2}{\tau_j}, \quad \text{for} \quad
j = 1, \ldots, q;
\]
see the R-package \href{https://github.com/SMAC-Group/gmwm}{\texttt{gmwm}} and Table 3 in \citet{guerrier2013gmwm}.
When $q > 5$, the Sargan--Hansen test can be used to assess the goodness of fit of \eqref{eq:sensor}.

Now, consider the y-axis gyroscope data set available through \texttt{imar.gyro} in the R-package \href{https://github.com/SMAC-Group/imudata}{\texttt{imudata}}.
5760000 error signals were collected at 400 Hz in 4 hours.
We divide them into $b=48$ equal batches (each of size 120000) and fit \eqref{eq:sensor} using $q = 10$ and

\begin{enumerate}
	\item $\GMWM$: the two-step $\GMWM$ implemented in the R-package \href{https://github.com/SMAC-Group/gmwm}{\texttt{gmwm}} with full data.
	The default first-step weighting matrix is diagonal, which may not be optimal.
	Therefore, parametric bootstraps of size $100$ are used for the second-step weighting matrix in estimation and the asymptotic variance in inference separately.
	\item $\OGMM$: $\OGMM$ with normal data batches.
	Using the two-step $\GMWM$ as the initial estimator, we update $\OGMM$ summary statistics if \eqref{eq:stable-full} and \eqref{eq:stable-unrestricted} are not rejected; see the cumulative strategy in \cref{sec:online-anomaly-detection}.
\end{enumerate}

\begin{table}[!t]
	\centering
	\scalebox{1}{
		\begin{tabularx}{0.89\textwidth}{Xcc}
			\toprule[2pt]
			& $\GMWM$ & $\OGMM$ \\
			\midrule[1pt]
			$\rho_1 \times 10^{-1}$ & 9.592 (9.592, 9.592) & 9.910 (9.910, 9.910)  \\
			$\sigma_1^2 \times 10^{-12}$ & 23.217 (17.620, 28.814) & 4.191 (3.174, 5.209) \\
			$\rho_2 \times 10^{-2}$ & 7.529 (7.529, 7.529) & 7.733 (7.733, 7.733) \\
			$\sigma_2^2 \times 10^{-7}$ & 3.445 (3.409, 3.482) & 3.350 (3.281, 3.419) \\
			$\sigma_3^2 \times 10^{-8}$ & 3.234 (2.864, 3.605) & 4.177 (3.499, 4.855) \\
			Estimation time (seconds) & 298.46 & 11.19 \\
			Total time (seconds) & 594.33 & 175.59 \\
			Sargan--Hansen test ($p$-value) & 0.000585 & 0.107 \\
			\bottomrule[2pt]
		\end{tabularx}
	}
	\caption{Estimated parameters with associated 95\% confidence intervals for \eqref{eq:sensor}.}
	\label{tab:sensor}
\end{table}

Table \ref{tab:sensor} reports the estimation results.
For $\GMWM$, the Sargan–Hansen test suggests that there are abnormal data batches, or \eqref{eq:sensor} is a poorly fitted model.
To investigate the latter possibility, we also estimate (\ref*{eq:sensor}$-$) $y_t = z_t^{(1)} +u_t^{(2)}$ and (\ref*{eq:sensor}$+$) $y_t = \sum_{i=1}^3 z_t^{(i)} +u_t^{(4)}$, which are considered in \citet{guerrier2013gmwm} and \citet{stebler2014gmwm}.
The Sargan--Hansen test $p$-values are (\ref*{eq:sensor}$-$) $1.021 \times 10^{-14}$ and (\ref*{eq:sensor}$+$) $8.822 \times 10^{-10}$, which indicate worse fit.
In contrast, applying $\GMWM$ and $\OGMM$ to the first data batch leads to $p$-values of $0.494$ and $0.173$, respectively.
This evidence supports structural stability testing in $\OGMM$.
Using \eqref{eq:stable-full} and \eqref{eq:stable-unrestricted}, we find that $17$ batches of data (number 10, 11, 13, 14, 15, 16, 17, 19, 21, 24, 25, 29, 36, 39, 41, 45, 46) are normal, and most abnormalities are due to model misspecification because \eqref{eq:stable-unrestricted} is rejected for $28$ batches of data.
The resulting $\OGMM$ estimate of $\sigma_1^2$ is much smaller than $\GMWM$, while estimates of other parameters are closer.
The $\OGMM$ $p$-value $0.107$ suggests better fit than $\GMWM$, although it is not adjusted for possible post-screening issues.
In terms of total time (estimation and testing), \eqref{eq:stable-unrestricted} requires solving an optimization problem for each data batch.
Nevertheless, $\OGMM$ offers online estimates and remains faster than $\GMWM$ because it does not perform bootstrap for optimal estimation and inference.
Overall, this application illustrates that the $\OGMM$ framework is very general and offers an alternative approach to estimation and inference for time series models formulated in wavelet domains.

\section{Discussion} \label{sec:discussion}

Time series data are inherently serially dependent and sequential. 
Motivated by the need to analyse large streaming time series data, we develop an online version of the classical $\GMM$, 
which is robust to weak serial dependence, preserves the asymptotic efficiency of offline $\GMM$, and vastly reduces the time complexity and memory requirements. 
Our methodological development covers $\OGMM$ point estimation, confidence region construction, over-identifying restrictions testing and anomaly detection with rigorous theoretical justification. 
Our numerical simulation and data illustration convincingly demonstrate the favourable performance of $\OGMM$-based estimation and inference as compared to several existing procedures in terms of computational time and statistical errors.

To conclude, we mention a few future research topics. 
Throughout this paper the dimension of the parameter is assumed fixed. 
It would be interesting to develop an online version of $\GMM$ that can accommodate high dimensionality of the parameter; see \citet{belloni2018hd} for regularized $\GMM$ in the offline setting. 
Additionally, $\GMM$ has been extended to handle the estimation of a finite dimensional parameter when a nuisance function is present; see \cite{ackerberg2014semi} and \citet{chen2015semi}. 
Developing an online version of two-step $\GMM$ to deal with this semiparametric setting would be interesting. 
Finally, it is of interest to allow for time-varying parameter \citep{luo2023qif} and/or local stationarity of streaming time series, as stationarity and time-invariant parameter might be too strong to be satisfied for a very long streaming time series. 

\section*{Acknowledgments}

Chan was supported in part by grants
GRF-14306421 and GRF-14307922
provided by the Research Grants Council of HKSAR.
Shao would like to acknowledge partial financial support from National Science Foundation DMS-2210002 and DMS-2412833.

\bibliographystyle{rss}
\bibliography{bib/data,bib/gmm,bib/lrv,bib/quant,bib/renew,bib/sgd,bib/sn,bib/sv}

\begin{thebibliography}{41}
\expandafter\ifx\csname natexlab\endcsname\relax\def\natexlab#1{#1}\fi
\expandafter\ifx\csname url\endcsname\relax
  \def\url#1{\texttt{#1}}\fi
\expandafter\ifx\csname urlprefix\endcsname\relax\def\urlprefix{URL: }\fi

\bibitem[{Ackerberg et~al.(2014)Ackerberg, Chen, Hahn and
  Liao}]{ackerberg2014semi}
Ackerberg, D., Chen, X., Hahn, J. and Liao, Z. (2014) Asymptotic efficiency of
  semiparametric two-step {GMM}.
\newblock \textit{The Review of Economic Studies}, \textbf{81}, 919--943.

\bibitem[{Andrews(1991)}]{andrews1991kernel}
Andrews, D. W.~K. (1991) Heteroskedasticity and autocorrelation consistent
  covariance matrix estimation.
\newblock \textit{Econometrica}, \textbf{59}, 817--858.

\bibitem[{Belloni et~al.(2018)Belloni, Chernozhukov, Chetverikov, Hansen and
  Kato}]{belloni2018hd}
Belloni, A., Chernozhukov, V., Chetverikov, D., Hansen, C. and Kato, K. (2018)
  High-dimensional econometrics and regularized {GMM}.
\newblock Manuscript.

\bibitem[{Bollerslev et~al.(2016)Bollerslev, Patton and
  Quaedvlieg}]{bollerslev2016sv}
Bollerslev, T., Patton, A.~J. and Quaedvlieg, R. (2016) Exploiting the errors:
  A simple approach for improved volatility forecasting.
\newblock \textit{Journal of Econometrics}, \textbf{192}, 1--18.

\bibitem[{de~Castro et~al.(2019)de~Castro, Galvao, Kaplan and
  Liu}]{decastro2019iv}
de~Castro, L., Galvao, A.~F., Kaplan, D.~M. and Liu, X. (2019) Smoothed {GMM}
  for quantile models.
\newblock \textit{Journal of Econometrics}, \textbf{213}, 121--144.

\bibitem[{Chan and Yau(2016)}]{rtavc}
Chan, K.~W. and Yau, C.~Y. (2016) New recursive estimators of the time-average
  variance constant.
\newblock \textit{Statistics and Computing}, \textbf{26}, 609--627.

\bibitem[{Chan and Yau(2017)}]{rtacm}
--- (2017) Automatic optimal batch size selection for recursive estimators of
  time-average covariance matrix.
\newblock \textit{Journal of the American Statistical Association},
  \textbf{112}, 1076--1089.

\bibitem[{Chauss{\'e}(2010)}]{pierre2010gmm}
Chauss{\'e}, P. (2010) Computing generalized method of moments and generalized
  empirical likelihood with {R}.
\newblock \textit{Journal of Statistical Software}, \textbf{34}, 1--35.

\bibitem[{Chen et~al.(2020)Chen, Lee, Tong and Zhang}]{chen2020sgd}
Chen, X., Lee, J.~D., Tong, X.~T. and Zhang, Y. (2020) Statistical inference
  for model parameters in stochastic gradient descent.
\newblock \textit{The Annals of Statistics}, \textbf{48}, 251--273.

\bibitem[{Chen et~al.(2023)Chen, Lee, Liao, Seo, Shin and Song}]{chen2023sgmm}
Chen, X., Lee, S., Liao, Y., Seo, M.~H., Shin, Y. and Song, M. (2023) {SGMM}:
  Stochastic approximation to generalized method of moments.
\newblock \textit{Journal of Financial Econometrics}.

\bibitem[{Chen and Liao(2015)}]{chen2015semi}
Chen, X. and Liao, Z. (2015) Sieve semiparametric two-step {GMM} under weak
  dependence.
\newblock \textit{Journal of Econometrics}, \textbf{189}, 163--186.

\bibitem[{Chen et~al.(2019)Chen, Liu and Zhang}]{chen2019rindicator}
Chen, X., Liu, W. and Zhang, Y. (2019) Quantile regression under memory
  constraint.
\newblock \textit{The Annals of Statistics}, \textbf{47}, 3244--3273.

\bibitem[{Guerrier et~al.(2013)Guerrier, Skaloud, Stebler and
  Victoria-Feser}]{guerrier2013gmwm}
Guerrier, S., Skaloud, J., Stebler, Y. and Victoria-Feser, M.-P. (2013)
  Wavelet-variance-based estimation for composite stochastic processes.
\newblock \textit{Journal of the American Statistical Association},
  \textbf{108}, 1021--1030.

\bibitem[{Hall(2000)}]{hall2000overident}
Hall, A.~R. (2000) Covariance matrix estimation and the power of the
  overidentifying restrictions test.
\newblock \textit{Econometrica}, \textbf{68}, 1517--1527.

\bibitem[{Hall(2005)}]{hall2005gmm}
--- (2005) \textit{Generalized Method of Moments}.
\newblock Oxford University Press.

\bibitem[{Hall and Sen(1999)}]{hall1999stability}
Hall, A.~R. and Sen, A. (1999) Structural stability testing in models estimated
  by generalized method of moments.
\newblock \textit{Journal of Business \& Economic Statistics}, \textbf{17},
  335--348.

\bibitem[{Hansen(1982)}]{hansen1982gmm}
Hansen, L.~P. (1982) Large sample properties of generalized method of moments
  estimators.
\newblock \textit{Econometrica}, \textbf{50}, 1029--1054.

\bibitem[{Jiang and Yu(2022)}]{jiang2022rdensity}
Jiang, R. and Yu, K. (2022) Renewable quantile regression for streaming data
  sets.
\newblock \textit{Neurocomputing}, \textbf{508}, 208--224.

\bibitem[{Jiang and Yu(2024)}]{jiang2024runconditional}
--- (2024) Unconditional quantile regression for streaming datasets.
\newblock \textit{Journal of Business \& Economic Statistics}, to appear.

\bibitem[{Koenker(2005)}]{koenker2005quant}
Koenker, R. (2005) \textit{Quantile Regression}.
\newblock Cambridge University Press.

\bibitem[{Le~Cam(1956)}]{lecam1956asymptotic}
Le~Cam, L. (1956) On the asymptotic theory of estimation and testing
  hypotheses.
\newblock In \textit{Proceedings of the Berkeley Symposium on Mathematical
  Statistics and Probability}, vol.~1, 129--156. University of California
  Press.

\bibitem[{Lee et~al.(2022)Lee, Liao, Seo and Shin}]{lee2022sgd}
Lee, S., Liao, Y., Seo, M.~H. and Shin, Y. (2022) Fast and robust online
  inference with stochastic gradient descent via random scaling.
\newblock \textit{Proceedings of the {AAAI} Conference on Artificial
  Intelligence}, \textbf{36}, 7381--7389.

\bibitem[{Leung and Chan(2025)}]{rlrv}
Leung, M.~F. and Chan, K.~W. (2025) Principles of statistical inference in
  online problems.
\newblock Manuscript available at \url{https://arxiv.org/abs/2209.05399}.

\bibitem[{Leung et~al.(2025)Leung, Chan and Shao}]{supp}
Leung, M.~F., Chan, K.~W. and Shao, X. (2025) Supplement to ``online
  generalized method of moments for time series".

\bibitem[{Luo and Song(2020)}]{luo2020renew}
Luo, L. and Song, P. X.-K. (2020) Renewable estimation and incremental
  inference in generalized linear models with streaming data sets.
\newblock \textit{Journal of the Royal Statistical Society Series B:
  Statistical Methodology}, \textbf{82}, 69--97.

\bibitem[{Luo et~al.(2023)Luo, Wang and Hector}]{luo2023qif}
Luo, L., Wang, J. and Hector, E.~C. (2023) Statistical inference for streamed
  longitudinal data.
\newblock \textit{Biometrika}, \textbf{110}, 841--858.

\bibitem[{Luo et~al.(2022)Luo, Zhou and Song}]{luo2022qif}
Luo, L., Zhou, L. and Song, P. X.-K. (2022) Real-time regression analysis of
  streaming clustered data with possible abnormal data batches.
\newblock \textit{Journal of the American Statistical Association},
  \textbf{118}, 2029--2044.

\bibitem[{Meketon and Schmeiser(1984)}]{meketon1984obm}
Meketon, M.~S. and Schmeiser, B. (1984) Overlapping batch means: something for
  nothing?
\newblock In \textit{Proceedings of the Winter Simulation Conference}, WSC '84,
  226--230.

\bibitem[{Newey and McFadden(1994)}]{newey1994asymptotic}
Newey, W.~K. and McFadden, D. (1994) Large sample estimation and hypothesis
  testing.
\newblock vol.~4 of \textit{Handbook of Econometrics}, chap.~36, 2111--2245.
  Elsevier.

\bibitem[{Parzen(1957)}]{parzen1957kernel}
Parzen, E. (1957) On consistent estimates of the spectrum of a stationary time
  series.
\newblock \textit{The Annals of Mathematical Statistics}, \textbf{28},
  329--348.

\bibitem[{Pascalau and Poirier(2023)}]{pascalau2023sv}
Pascalau, R. and Poirier, R. (2023) Increasing the information content of
  realized volatility forecasts.
\newblock \textit{Journal of Financial Econometrics}, \textbf{21}, 1064--1098.

\bibitem[{Polyak and Juditsky(1992)}]{polyak1992average}
Polyak, B.~T. and Juditsky, A.~B. (1992) Acceleration of stochastic
  approximation by averaging.
\newblock \textit{{SIAM} Journal on Control and Optimization}, \textbf{30},
  838--855.

\bibitem[{Qu et~al.(2000)Qu, Lindsay and Li}]{qu2000qif}
Qu, A., Lindsay, B.~G. and Li, B. (2000) Improving generalised estimating
  equations using quadratic inference functions.
\newblock \textit{Biometrika}, \textbf{87}, 823--836.

\bibitem[{Robbins and Monro(1951)}]{robbins1951sgd}
Robbins, H. and Monro, S. (1951) A stochastic approximation method.
\newblock \textit{The Annals of Mathematical Statistics}, \textbf{22},
  400--407.

\bibitem[{Schifano et~al.(2016)Schifano, Wu, Wang, Yan and
  Chen}]{schifano2016renew}
Schifano, E.~D., Wu, J., Wang, C., Yan, J. and Chen, M.-H. (2016) Online
  updating of statistical inference in the big data setting.
\newblock \textit{Technometrics}, \textbf{58}, 393--403.

\bibitem[{Stebler et~al.(2014)Stebler, Guerrier, Skaloud and
  Victoria-Feser}]{stebler2014gmwm}
Stebler, Y., Guerrier, S., Skaloud, J. and Victoria-Feser, M.-P. (2014)
  Generalized method of wavelet moments for inertial navigation filter design.
\newblock \textit{{IEEE} Transactions on Aerospace and Electronic Systems},
  \textbf{50}, 2269--2283.

\bibitem[{Sun et~al.(2024)Sun, Wang, Cai, Yao and Wang}]{sun2024rindicator}
Sun, X., Wang, H., Cai, C., Yao, M. and Wang, K. (2024) Online renewable smooth
  quantile regression.
\newblock \textit{Computational Statistics \& Data Analysis}, \textbf{185},
  107781.

\bibitem[{Toulis and Airoldi(2017)}]{toulis2017implicit}
Toulis, P. and Airoldi, E.~M. (2017) Asymptotic and finite-sample properties of
  estimators based on stochastic gradients.
\newblock \textit{The Annals of Statistics}, \textbf{45}, 1694--1727.

\bibitem[{Welford(1962)}]{welford1962recursive}
Welford, B.~P. (1962) Note on a method for calculating corrected sums of
  squares and products.
\newblock \textit{Technometrics}, \textbf{4}, 419--420.

\bibitem[{Wu(2009)}]{wu2009recursive}
Wu, W.~B. (2009) Recursive estimation of time-average variance constants.
\newblock \textit{The Annals of Applied Probability}, \textbf{19}, 1529--1552.

\bibitem[{Zhu et~al.(2023)Zhu, Chen and Wu}]{zhu2023sgd}
Zhu, W., Chen, X. and Wu, W.~B. (2023) Online covariance matrix estimation in
  stochastic gradient descent.
\newblock \textit{Journal of the American Statistical Association},
  \textbf{118}, 393--404.

\end{thebibliography}
\addcontentsline{toc}{section}{\refname}

\clearpage

\appendix

\section{Proof of theorems} \label{sec:proof-of-theorems}
\subsection{\texorpdfstring{Proof of Theorem \ref{thm:consistency}}{
		Proof of Theorem 1}} \label{sec:proof-consistency} 

Theorem 3.1 in \citet{hall2005gmm} states that $\hat\theta_1 \cin{p} \theta^*$ under Assumptions \ref{asum:stationarity}--\ref{asum:moment-global} and \ref{asum:weight}--\ref{asum:moment-dominance}.
We do not separate the remaining proof by cases (a) and (b) because they only differ by the use of Lemma \ref{lem:convergence-uv}(a) and (b).

To prove by induction, we first consider $b=2$.
Recall that $\hat{W} \cin{p} W$ under Assumption \ref{asum:weight}, and
$\hat{V}_2 \cin{p} V$ and $\hat{U}_2 \cin{p} -V \theta^*$ by Lemma \ref{lem:convergence-uv}.
Applying Slutsky’s theorem gives
\[
\hat\theta_2
= -(\hat{V}_2^\T \hat{W} \hat{V}_2)^{-1} \hat{V}_2^\T \hat{W} \hat{U}_2
\cin{p} (V^\T W V)^{-1} V^\T W V \theta^*
= \theta^*.
\]
Now, suppose that $\norm{\hat\theta_i -\theta^*} = o_p(1)$ and $\hat\theta_i \in B_\epsilon$ for $i = 1, \ldots, b-1$.
Then, we have $\hat{V}_b \cin{p} V$ and $\hat{U}_b \cin{p} -V \theta^*$ by Lemma \ref{lem:convergence-uv} again.
By the same argument,
\[
\hat\theta_b
= -(\hat{V}_b^\T \hat{W} \hat{V}_b)^{-1} \hat{V}_b^\T \hat{W} \hat{U}_b
\cin{p} (V^\T W V)^{-1} V^\T W V \theta^*
= \theta^*.
\]

In the above argument, we assume that $\hat\theta_i \in B_\epsilon$ for $i=1,\cdots,b-1$ to simplify the proof. 
Technically speaking, we can partition the probability space into $\Omega_{b-1}=\cup_{i=1}^{b-1} \{\hat\theta_i\notin B_\epsilon\} $ and its complement $\Omega_{b-1}^c$. On $\Omega_{b-1}^c$, all the above discussion can go through under the assumptions stated for a local neighborhood $B_\epsilon$. 
Hence all statements that hold in probability are still valid since $P(\Omega_{b-1})\le \sum_{i=1}^{b-1} P(\hat\theta_i \notin B_\epsilon)\rightarrow 0$; 
see Section 3.5 of \citet{newey1994asymptotic} for an example with $\hat\theta_1$ only.
In the following proofs, we shall not repeat this argument but instead assume $\hat\theta_{i} \in B_\epsilon$ for $i=1,\cdots,b-1$ to simplify the presentation.
\qed

\subsection{\texorpdfstring{Proof of Theorem \ref{thm:normality-par}}{
		Proof of Theorem 2}} \label{sec:proof-normality-par} 

Recall that Assumptions \ref{asum:stationarity}--\ref{asum:lrv} are identical to Assumptions 3.1--3.5 and 3.7--3.11 in \citet{hall2005gmm},
and both cases (a) and (b) in Assumptions \ref{asum:gradient-lipschitz} and \ref{asum:gradient-uniform}(ii) imply Assumptions 3.12 and 3.13 in \citet{hall2005gmm}, respectively.
By Theorem 3.2 in \citet{hall2005gmm}, we have $\sqrt{N_1} (\hat\theta_1 -\theta^*) \cin{d} \Normal(0, H \Sigma H^\T)$.
However, an asymptotically efficient $\hat\theta_1$ is not necessary in the remaining proof.
Any $\sqrt{N_1}$-consistent $\hat\theta_1$ will suffice.

(Case a). To prove by induction, we first consider $b=2$.
By the mean value theorem,
\[
\bar{g}_2(\hat\theta_2)
= \bar{g}_2(\theta^*) +\nabla \bar{g}_2(\bar\theta_2) (\hat\theta_2 -\theta^*),
\]
where $\bar\theta_b$ is a point between $\hat\theta_b$ and $\theta^*$.
Multiplying by $\sqrt{N_2} \hat{V}_2^\T \hat{W}$ and rearranging, 
\begin{align}
	\sqrt{N_2} (\hat\theta_2 -\theta^*)
	={}& \{\hat{V}_2^\T \hat{W} \nabla \bar{g}_2(\bar\theta_2)\}^{-1} \hat{V}_2^\T \hat{W} \sqrt{N_2} \bar{g}_2(\hat\theta_2) \nonumber \\
	& -\{\hat{V}_2^\T \hat{W} \nabla \bar{g}_2(\bar\theta_2)\}^{-1} \hat{V}_2^\T \hat{W} \sqrt{N_2} \bar{g}_2(\theta^*) \nonumber \\
	={}& J_{2,1} +J_{2,2}. \label{eq:decomp-normality-2}
\end{align}
Using \eqref{eq:moment-error-a} and the condition that $\hat{V}_2^\T \hat{W} (\hat{U}_2 +\hat{V}_2 \hat\theta_2) = 0$, the first term simplifies to
\begin{align*}
	J_{2,1} 
	={}& \sqrt{N_2} \{\hat{V}_2^\T \hat{W} \nabla \bar{g}_2(\bar\theta_2)\}^{-1} \hat{V}_2^\T \hat{W} \left\{ \hat{U}_2 +\hat{V}_2 \hat\theta_2 +O_p\bigg( \frac{\sqrt{n_1}+\sqrt{n_2}}{N_2} \norm{\hat\theta_1-\hat\theta_2} +\norm{\hat\theta_1-\hat\theta_2}^2 \bigg) \right\} \\
	={}& \{\hat{V}_2^\T \hat{W} \nabla \bar{g}_2(\bar\theta_2)\}^{-1} \hat{V}_2^\T \hat{W} O_p\bigg( \frac{\sqrt{n_1}+\sqrt{n_2}}{\sqrt{N_2}} \norm{\hat\theta_1-\hat\theta_2} +\sqrt{N_2} \norm{\hat\theta_1-\hat\theta_2}^2 \bigg).
\end{align*}
Check that
\begin{equation} \label{eq:decomp-par-2}
	\hat\theta_2 = -(\hat{V}_2^\T \hat{W} \hat{V}_2)^{-1} \hat{V}_2^\T \hat{W} \hat{U}_2
	= \hat\theta_1 -(\hat{V}_2^\T \hat{W} \hat{V}_2)^{-1} \hat{V}_2^\T \hat{W} \frac{1}{N_2} \left\{ G(\hat\theta_1; D_1) +G(\hat\theta_1; D_2) \right\}.
\end{equation}
Using the same trick in \eqref{eq:taylor-error-2nd-a} and Assumption \ref{asum:gradient-regularity}(c), we have
\begin{align}
	G(\hat\theta_1; D_1)
	&= G(\theta^*; D_1) +\nabla G(\theta^*; D_1) (\hat\theta_1 -\theta^*) +O_p\left( \sqrt{n_1} \norm{\hat\theta_1 -\theta^*} +n_1 \norm{\hat\theta_1 -\theta^*}^2 \right) \nonumber \\
	&= G(\theta^*; D_1) +n_1 \E\{\nabla g(\theta^*, x_{i,j})\} (\hat\theta_1 -\theta^*) +O_p\left( \sqrt{n_1} \norm{\hat\theta_1 -\theta^*} +n_1 \norm{\hat\theta_1 -\theta^*}^2 \right) \nonumber \\
	&= G(\theta^*; D_1) +O_p\left( n_1 \norm{\hat\theta_1 -\theta^*} \right). \label{eq:taylor-error-1st}
\end{align}
Therefore, by the central limit theorem for $\bar{g}_2(\theta^*)$ (see, e.g., Lemma 3.2 in \citealt{hall2005gmm}), and 
$\norm{\hat\theta_1 -\theta^*} = O_p(N_1^{-1/2})$,
\begin{align}
	\hat\theta_2 -\hat\theta_1
	&= -(\hat{V}_2^\T \hat{W} \hat{V}_2)^{-1} \hat{V}_2^\T \hat{W} \frac{1}{N_2} \left\{ G(\hat\theta_1; D_1) +G(\hat\theta_1; D_2) \right\} \nonumber \\
	&= -(\hat{V}_2^\T \hat{W} \hat{V}_2)^{-1} \hat{V}_2^\T \hat{W} \bar{g}_2(\theta^*) +O_p(\norm{\hat\theta_1 -\theta^*}) \nonumber \\
	&= O_p(N_2^{-1/2} +N_1^{-1/2})
	= O_p(N_1^{-1/2}). \label{eq:difference-par-2}
\end{align}
Using the same procedure that leads to \eqref{eq:convergence-v-a}, we have $\nabla \bar{g}_2(\bar\theta_2) \cin{p} V$.
Moreover, recall that $\hat{W} \cin{p} W$ under Assumption \ref{asum:weight} and
$\hat{V}_2 \cin{p} V$ by Lemma \ref{lem:convergence-uv}(a).
Therefore, $J_{2,1} = o_p(1)$ if and only if $\sqrt{N_2} = o(N_1) \Leftrightarrow n_2 = o(N_1^2)$.
For $J_{2,2}$, the central limit theorem for $\bar{g}_2(\theta^*)$ and Slutsky's theorem gives
$J_{2,2} \cin{d} \Normal(0, H \Sigma H^\T)$.
Combining the results with Slutsky's theorem, we have $\sqrt{N_2} (\hat\theta_2 -\theta^*) \cin{d} \Normal(0, H \Sigma H^\T)$.
Now, suppose that 
$\norm{\hat\theta_i -\theta^*} = O_p(N_i^{-1/2})$
for $i = 1, \ldots, b-1$.
Using the same procedure that leads to \eqref{eq:decomp-normality-2}, we have
\begin{align}
	\sqrt{N_b} (\hat\theta_b -\theta^*)
	={}& \{\hat{V}_b^\T \hat{W} \nabla \bar{g}_b(\bar\theta_b)\}^{-1} \hat{V}_b^\T \hat{W} \sqrt{N_b} \bar{g}_b(\hat\theta_b) \nonumber \\
	& -\{\hat{V}_b^\T \hat{W} \nabla \bar{g}_b(\bar\theta_b)\}^{-1} \hat{V}_b^\T \hat{W} \sqrt{N_b} \bar{g}_b(\theta^*) \nonumber \\
	={}& J_{b,1} +J_{b,2}. \label{eq:decomp-normality-b}
\end{align}
By \eqref{eq:moment-error-a} and the condition that $\hat{V}_b^\T \hat{W} (\hat{U}_b +\hat{V}_b \hat\theta_b) = 0$,
\begin{align*}
	J_{b,1} 
	={}& \{\hat{V}_b^\T \hat{W} \nabla \bar{g}_b(\bar\theta_b)\}^{-1} \hat{V}_b^\T \hat{W} 
	O_p\left( \frac{\sqrt{n_b}}{\sqrt{N_b}} \norm{\hat\theta_{b-1}-\hat\theta_b} +\sum_{i=1}^{b-1} \frac{\sqrt{n_i}}{\sqrt{N_b}} \norm{\hat\theta_i-\hat\theta_b} \right) \\
	& +\{\hat{V}_b^\T \hat{W} \nabla \bar{g}_b(\bar\theta_b)\}^{-1} \hat{V}_b^\T \hat{W} 
	O_p\left( \frac{n_b}{\sqrt{N_b}} \norm{\hat\theta_{b-1}-\hat\theta_b}^2 +\sum_{i=1}^{b-1} \frac{n_i}{\sqrt{N_b}} \norm{\hat\theta_i-\hat\theta_b}^2 \right).
\end{align*}
For $\norm{\hat\theta_{b-1}-\hat\theta_b}$, there is no direct formula like \eqref{eq:decomp-par-2}.
Therefore, we bound the difference in another way.
Using the trick in \eqref{eq:taylor-error-2nd-a} and \eqref{eq:taylor-error-1st},
\begin{align*}
	\nabla G(\hat\theta_i; D_i) 
	&= \nabla G(\theta^*; D_i) +\nabla G(\hat\theta_i; D_i) -\nabla G(\theta^*; D_i) \\
	&= n_i \E\{\nabla g(\theta^*, x_{i,j})\} +O_p(n_i^{1/2} +\norm{\hat\theta_i-\theta^*})
	= O_p(n_i).
\end{align*}
By the triangle inequality and $\norm{\hat\theta_i -\theta^*} = O_p(N_i^{-1/2})$ for $i = 1, \ldots, b-1$,
\begin{align*}
	& \norm{\hat{U}_b -\left[ \frac{1}{N_b} \left\{ \sum_{i=1}^{b-1} G(\hat\theta_i; D_i) +G(\hat\theta_{b-1}; D_b)  \right\} -\hat{V}_b \hat\theta_{b-1} \right]} \\
	={}& \norm{\frac{1}{N_b} \sum_{i=1}^{b-2} \nabla G(\hat\theta_i; D_i) (\hat\theta_{b-1} -\hat\theta_i)} \\
	\le{}& \norm{\frac{1}{N_b} \sum_{i=1}^{b-2} \nabla G(\hat\theta_i; D_i) (\hat\theta_i -\theta^*)} 
	+\norm{\frac{1}{N_b} \sum_{i=1}^{b-2} \nabla G(\hat\theta_i; D_i) (\hat\theta_{b-1} -\theta^*)} \\
	={}& O_p\left( \sum_{i=1}^{b-2} \frac{n_i}{N_b} \norm{\hat\theta_i -\theta^*} \right)
	= O_p\left( \sum_{i=1}^{b-2} \frac{n_i}{N_b \sqrt{N_i}} \right).
\end{align*}
Applying the trick in \eqref{eq:taylor-error-1st} again and the central limit theorem for $\bar{g}_b(\theta^*)$,
\begin{align*}
	\hat\theta_b -\hat\theta_{b-1}
	&= -(\hat{V}_b^\T \hat{W} \hat{V}_b)^{-1} \hat{V}_b^\T \hat{W} \hat{U}_b 
	-(\hat{V}_b^\T \hat{W} \hat{V}_b)^{-1} \hat{V}_b^\T \hat{W} \hat{V}_b \hat\theta_{b-1} \\
	&= -(\hat{V}_b^\T \hat{W} \hat{V}_b)^{-1} \hat{V}_b^\T \hat{W} \frac{1}{N_b} \left\{ \sum_{i=1}^{b-1} G(\hat\theta_i; D_i) +G(\hat\theta_{b-1}; D_b) \right\} 
	+O_p\left( \sum_{i=1}^{b-2} \frac{n_i}{N_b \sqrt{N_i}} \right) \\
	&= -(\hat{V}_b^\T \hat{W} \hat{V}_b)^{-1} \hat{V}_b^\T \hat{W} \bar{g}_b(\theta^*)
	+O_p\left( \frac{n_b}{N_b \sqrt{N_{b-1}}} +\sum_{i=1}^{b-1} \frac{n_i}{N_b \sqrt{N_i}} \right) \\
	&= O_p\left( \frac{1}{\sqrt{N_b}} +\frac{n_b}{N_b \sqrt{N_{b-1}}} +\sum_{i=1}^{b-1} \frac{n_i}{N_b \sqrt{N_i}} \right).
\end{align*}
By Lemma \ref{lem:bound-n},
\[
\frac{1}{\sqrt{N_b}} +\frac{n_b}{N_b \sqrt{N_{b-1}}} +\sum_{i=1}^{b-1} \frac{n_i}{N_b \sqrt{N_i}}
\le \frac{1}{\sqrt{N_b}} +\frac{n_b}{N_b \sqrt{N_{b-1}}} +\frac{2\sqrt{N_{b-1}}}{N_b}
\le \frac{3}{\sqrt{N_b}} +\frac{n_b}{N_b \sqrt{N_{b-1}}}.
\]
Therefore,
\begin{equation} \label{eq:difference-par-b}
	\norm{\hat\theta_{b-1}-\hat\theta_b}
	= O_p\left( \frac{n_b}{N_b \sqrt{N_{b-1}}} \right).
\end{equation}
Using $\norm{\hat\theta_i -\theta^*} = O_p(N_i^{-1/2})$ for $i = 1, \ldots, b-1$, a direct consequence is that
\[
\norm{\hat\theta_i-\hat\theta_b}
\le \norm{\hat\theta_i-\theta^*} +\norm{\theta^*-\hat\theta_{b-1}} +\norm{\hat\theta_{b-1}-\hat\theta_b}
= O_p(N_i^{-1/2}).
\]
We are ready to bound $J_{b,1}$. 
By the triangle inequality, \eqref{eq:difference-par-b} and Lemma \ref{lem:bound-n},
\begin{align*}
	\frac{\sqrt{n_b}}{\sqrt{N_b}} \norm{\hat\theta_{b-1}-\hat\theta_b} +\sum_{i=1}^{b-1} \frac{\sqrt{n_i}}{\sqrt{N_b}} \norm{\hat\theta_i-\hat\theta_b}
	&= O_p\left( \frac{1}{\sqrt{N_{b-1}}} +\sum_{i=1}^{b-1} \frac{\sqrt{n_i}}{\sqrt{N_i N_b}} \right) \\
	&= O_p\left\{ \frac{1}{\sqrt{N_{b-1}}} +\frac{1}{\sqrt{\min(n_1, \ldots, n_b)}} \right\}
	= o_p(1).
\end{align*}
Similarly, we can apply the triangle inequality to obtain
\begin{align*}
	& \frac{n_b}{\sqrt{N_b}} \norm{\hat\theta_{b-1}-\hat\theta_b}^2 
	+\sum_{i=1}^{b-1} \frac{n_i}{\sqrt{N_b}} \norm{\hat\theta_i-\hat\theta_b}^2 \\
	\le&{} \sqrt{N_b} \norm{\hat\theta_{b-1}-\hat\theta_b}^2
	+2 \norm{\hat\theta_{b-1}-\hat\theta_b} \sum_{i=1}^{b-1} \frac{n_i}{\sqrt{N_b}} \norm{\hat\theta_i-\hat\theta_{b-1}}
	+\sum_{i=1}^{b-1} \frac{n_i}{\sqrt{N_b}} \norm{\hat\theta_i-\hat\theta_{b-1}}^2.
\end{align*}
By \eqref{eq:difference-par-b} and Lemma \ref{lem:bound-n}, we have
\[
\norm{\hat\theta_{b-1}-\hat\theta_b} \sum_{i=1}^{b-1} \frac{n_i}{\sqrt{N_b}} \norm{\hat\theta_i-\hat\theta_{b-1}}
= o_p(1) O_p\left( \sum_{i=1}^{b-1} \frac{n_i}{\sqrt{N_i N_b}} \right)
= o_p(1) O_p\left( \frac{2 \sqrt{N_b}}{\sqrt{N_b}} \right)
= o_p(1)
\]
and
\[
\sum_{i=1}^{b-1} \frac{n_i}{\sqrt{N_b}} \norm{\hat\theta_i-\hat\theta_{b-1}}^2 
= O_p\left( \sum_{i=1}^{b-1} \frac{n_i}{N_i \sqrt{N_b}} \right)
= O_p\left( \frac{1 +\log N_b -\log n_1}{\sqrt{N_b}} \right)
= o_p(1).
\]
The conclusion then follows by the same argument for $b=2$ as we can verify that
\[
J_{b,1} = O_p\left( \sqrt{N_b} \norm{\hat\theta_{b-1}-\hat\theta_b}^2 \right)
= O_p\left( \frac{n_b^2}{N_b^{3/2} N_{b-1}} \right)
= o_p(1)
\]
when $n_b = o(N_{b-1}^2)$.

(Case b). The proof is very similar to case (a) so we only highlight the difference.
When $b=2$, we still have \eqref{eq:decomp-normality-2}.
Applying \eqref{eq:moment-error-b} and the condition that $\hat{V}_2^\T \hat{W} (\hat{U}_2 +\hat{V}_2 \hat\theta_2) = 0$ to $J_{2,1}$, we have
\[
J_{2,1} = \{\hat{V}_2^\T \hat{W} \nabla \bar{g}_2(\bar\theta_2)\}^{-1} \hat{V}_2^\T \hat{W} O_p\left( \sqrt{N_2} \norm{\hat\theta_1-\hat\theta_2}^2 \right).
\]
The formula in \eqref{eq:decomp-par-2} and conclusion in \eqref{eq:difference-par-2}  remain unchanged.
However, the argument for \eqref{eq:taylor-error-1st} is slightly different.
By \eqref{eq:taylor-error-2nd-b}, Assumptions \ref{asum:gradient-regularity}(c) and \ref{asum:gradient-uniform}(b)(i), we have
\begin{align*}
	\frac{1}{N_2} \sum_{i=1}^2 G(\hat\theta_1; D_i)
	&= \bar{g}_2(\theta^*) +\nabla \bar{g}_2(\theta^*) (\hat\theta_1 -\theta^*) +O_p\left( \norm{\hat\theta_1 -\theta^*}^2 \right) \\
	&= \bar{g}_2(\theta^*) +\E\{\nabla g(\theta^*, x_{i,j})\} (\hat\theta_1 -\theta^*)
	+O_p\left( \norm{\hat\theta_1 -\theta^*} +\norm{\hat\theta_1 -\theta^*}^2 \right) \\
	&= \bar{g}_2(\theta^*) +O_p\left( \norm{\hat\theta_1 -\theta^*} \right).
\end{align*}
Using the same procedure that leads to \eqref{eq:convergence-v-b}, we have $\nabla \bar{g}_2(\bar\theta_2) \cin{p} V$.
The remaining arguments are the same.
For $b>2$, check that
\[
J_{b,1} 
=\{\hat{V}_b^\T \hat{W} \nabla \bar{g}_b(\bar\theta_b)\}^{-1} \hat{V}_b^\T \hat{W} 
O_p\left( \frac{n_b}{\sqrt{N_b}} \norm{\hat\theta_{b-1}-\hat\theta_b}^2 +\sum_{i=1}^{b-1} \frac{n_i}{\sqrt{N_b}} \norm{\hat\theta_i-\hat\theta_b}^2 \right).
\]
The other arguments are same as case (a) or those in the above for $b=2$.
\hfill$\qed$

\subsection{\texorpdfstring{Proof of Theorem \ref{thm:normality-moment}}{
		Proof of Theorem 3}} \label{sec:proof-normality-moment} 

The proof of Theorem \ref{thm:normality-moment} is based on the results in the proof of Theorem \ref{thm:normality-par}.
Therefore, we do not separate the proof by cases (a) and (b) but refer to the relevant results instead.
By Theorem 3.3 in \citet{hall2005gmm}, we have $\hat{W}^{1/2} \sqrt{N_1} (U_1 +V_1 \hat\theta_1) = \hat{W}^{1/2} \sqrt{N_1} \bar{g}_1(\hat\theta_1) \cin{d} \Normal(0, R W^{1/2} \Sigma (W^{1/2})^\T R^\T)$,
where the square root of a positive semi-definite matrix $A$ is defined as the unique positive semi-definite matrix $B$ such that $BB = B^\T B = A$.
For $b=2$, by the trick in \eqref{eq:taylor-error-2nd-a} and the result in \eqref{eq:difference-par-2},
\begin{align*}
	\hat{W}^{1/2} \sqrt{N_2} (U_2 +V_2 \hat\theta_2)
	&= \hat{W}^{1/2} \sqrt{N_2} \bar{g}_2(\hat\theta_2) +O_p\left( \frac{n_1}{\sqrt{N_2}} \norm{\hat\theta_1-\hat\theta_2}^2 \right) \\
	&= \hat{W}^{1/2} \sqrt{N_2} \bar{g}_2(\hat\theta_2) +o_p(1).
\end{align*}
It follows from the mean value theorem and \eqref{eq:decomp-normality-2} that
\begin{align*}
	& \hat{W}^{1/2} \sqrt{N_2} (U_2 +V_2 \hat\theta_2) \\
	={}& \hat{W}^{1/2} \sqrt{N_2} \bar{g}_2(\theta^*) +\hat{W}^{1/2} \nabla \bar{g}_2(\bar\theta_2) \sqrt{N_2} (\hat\theta_2 -\theta^*) +o_p(1) \\
	={}& \left[ I_q -\hat{W}^{1/2} \nabla \bar{g}_2(\bar\theta_2) \{\hat{V}_2^\T \hat{W} \nabla \bar{g}_2(\bar\theta_2)\}^{-1} \hat{V}_2^\T (\hat{W}^{1/2})^\T \right] \hat{W}^{1/2} \sqrt{N_2} \bar{g}_2(\theta^*) +o_p(1).
\end{align*}
Recall that $\hat{W} \cin{p} W$, $\hat{V}_2 \cin{p} V$ and $\nabla \bar{g}_2(\bar\theta_2) \cin{p} V$ in the proof of Theorem \ref{thm:normality-par}.
Therefore, the central limit theorem for $\bar{g}_2(\theta^*)$ and Slutsky's theorem gives
$\hat{W}^{1/2} \sqrt{N_2} (U_2 +V_2 \hat\theta_2) \cin{d} \Normal(0, R W^{1/2} \Sigma (W^{1/2})^\T R^\T)$.
For $b > 2$, by the trick in \eqref{eq:taylor-error-2nd-a} and the result in \eqref{eq:difference-par-b},
\begin{align*}
	\hat{W}^{1/2} \sqrt{N_b} (U_b +V_b \hat\theta_b)
	&= \hat{W}^{1/2} \sqrt{N_b} \bar{g}_b(\hat\theta_b) +O_p\left( \sum_{i=1}^{b-1} \frac{n_i}{\sqrt{N_b}} \norm{\hat\theta_i-\hat\theta_b}^2 \right) \\
	&= \hat{W}^{1/2} \sqrt{N_b} \bar{g}_b(\hat\theta_b) +o_p(1).
\end{align*}
By the mean value theorem and \eqref{eq:decomp-normality-b},
\begin{align*}
	& \hat{W}^{1/2} \sqrt{N_b} (U_b +V_b \hat\theta_b) \\
	={}& \hat{W}^{1/2} \sqrt{N_b} \bar{g}_b(\theta^*) +\hat{W}^{1/2} \nabla \bar{g}_b(\bar\theta_b) \sqrt{N_b} (\hat\theta_b -\theta^*) +o_p(1) \\
	={}& \left[ I_q -\hat{W}^{1/2} \nabla \bar{g}_b(\bar\theta_b) \{\hat{V}_b^\T \hat{W} \nabla \bar{g}_b(\bar\theta_b)\}^{-1} \hat{V}_b^\T (\hat{W}^{1/2})^\T \right] \hat{W}^{1/2} \sqrt{N_b} \bar{g}_b(\theta^*) +o_p(1).
\end{align*}
The conclusion then follows by the same argument for $b=2$.
When $\hat\Sigma$ is positive semi-definite and converges in probability to $\Sigma$ in Assumption \ref{asum:lrv}, 
the asymptotic distribution of
$N_b (U_b +V_b \hat\theta_b)^\T \hat\Sigma^{-1} (U_b +V_b \hat\theta_b)$
is a direct consequence of the asymptotic normality of
$\hat{W}^{1/2} \sqrt{N_b} (U_b +V_b \hat\theta_b)$ with $\hat{W} = \hat\Sigma^{-1}$; 
see the proof of Theorem 5.1 in \citet{hall2005gmm}.
\hfill$\qed$

\section{\texorpdfstring{Proof of Proposition \ref{prop:formula}}{
		Proof of Proposition 1}} \label{sec:proof-formula} 

To prove by induction, we first consider $b=2$.
Since $\hat\theta'_1 = \hat\theta_1$, we have $\hat{V}'_2 = \hat{V}_2$ and
\[
\hat{U}'_2 
= \frac{1}{N_2} \left\{ N_1 U'_1 +G(\hat\theta'_1; D_2) \right\}
= \frac{1}{N_2} \left\{ G(\hat\theta_1; D_1) +G(\hat\theta_1; D_2) \right\}.
\]
It follows from \eqref{eq:decomp-par-2} that $\hat\theta'_2 \equiv \hat\theta_2$ and 
\[
U'_2 
= \frac{1}{N_2} \left\{ N_1 U'_1 +N_1 V'_1 (\hat\theta'_2 -\hat\theta'_1) +G(\hat\theta'_2; D_2) \right\}
\equiv U_2 +V_2 \hat\theta_2.
\]
Now, suppose $\hat\theta'_i \equiv \hat\theta_i$ and $U'_i \equiv U_i +V_i \hat\theta_i$ for $i = 1, \ldots, b-1$.
Check that $\hat{V}'_b = \hat{V}_b$ and
\begin{align*}
	\hat{U}_b +\hat{V}_b \hat\theta_{b-1}
	&= \frac{1}{N_b} \left\{ N_{b-1} U_{b-1} +N_{b-1} V_{b-1} \hat\theta_{b-1} +G(\hat\theta_{b-1}; D_b) \right\} \\
	&= \frac{1}{N_b} \left\{ N_{b-1} U'_{b-1} +G(\hat\theta'_{b-1}; D_b) \right\}
	= \hat{U}'_b.	
\end{align*}
Therefore,
\begin{align*}
	\hat\theta_b 
	&= -(\hat{V}_b^\T \hat{W} \hat{V}_b)^{-1} \hat{V}_b^\T \hat{W} \hat{U}_b \\
	&= -\{ (\hat{V}'_b)^\T \hat{W} \hat{V}'_b \}^{-1} (\hat{V}'_b)^\T \hat{W} (\hat{U}'_b -\hat{V}'_b \hat\theta'_{b-1}) \\
	&= \hat\theta'_{b-1} -\{ (\hat{V}'_b)^\T \hat{W} \hat{V}'_b \}^{-1} (\hat{V}'_b)^\T \hat{W} \hat{U}'_b
	\equiv \hat\theta'_b,
\end{align*}
and similarly $U'_b \equiv U_b +V_b \hat\theta_b$.
\hfill$\qed$

\section{Proof of lemmas} \label{sec:proof-of-lemmas}
\subsection{\texorpdfstring{Lemma \ref{lem:convergence-uv}}{
		Lemma 1}} \label{sec:proof-convergence-uv} 

\begin{lemma}[Convergence of $\hat{U}_b$ and $\hat{V}_b$] \label{lem:convergence-uv}
	Suppose Assumptions \ref{asum:stationarity}--\ref{asum:moment-dominance} hold.
	\begin{enumerate}
		\item Under Assumptions \ref{asum:gradient-lipschitz}(a) and \ref{asum:gradient-uniform}(a)(i),
		$\hat{V}_b \cin{p} V = \E\{\nabla g(\theta^*, x_{i,j})\}$ and 
		$\hat{U}_b \cin{p} -V \theta^*$ as
		$\min(n_1, \ldots, n_b) \to \infty$.
		\item Under Assumptions \ref{asum:gradient-lipschitz}(b) and \ref{asum:gradient-uniform}(b)(i),
		$\hat{V}_b \cin{p} V$ and 
		$\hat{U}_b \cin{p} -V \theta^*$ as
		$n_1 \to \infty$.
	\end{enumerate}
\end{lemma}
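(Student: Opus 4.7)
The plan is to prove Lemma~\ref{lem:convergence-uv} by induction on the batch index $b$, drawing on the consistency of $\hat\theta_1, \ldots, \hat\theta_{b-1}$ provided by the parallel induction in Theorem~\ref{thm:consistency}. The base case $b=1$ reduces $\hat{V}_1$ and $\hat{U}_1$ to the offline sample quantities $\nabla \bar{g}_1(\hat\theta_1)$ and $\bar{g}_1(\hat\theta_1) - \nabla \bar{g}_1(\hat\theta_1)\hat\theta_1$; their convergence under $\hat\theta_1 \cin{p} \theta^*$ is classical under Assumptions~\ref{asum:stationarity}--\ref{asum:moment-dominance}. For the inductive step at batch $b$, write $\tilde\theta_i = \hat\theta_{\min(i,b-1)}$ so that $\tilde\theta_i \cin{p} \theta^*$ for every $i = 1,\ldots,b$ by the hypothesis.

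First I would treat $\hat{V}_b$ through the split
\begin{align*}
\hat{V}_b - V
= \frac{1}{N_b}\sum_{i=1}^{b}\bigl[\nabla G(\theta^*; D_i) - n_i V\bigr]
+ \frac{1}{N_b}\sum_{i=1}^{b}\bigl[\nabla G(\tilde\theta_i; D_i) - \nabla G(\theta^*; D_i)\bigr].
\end{align*}
The first sum equals $N_b^{-1}\sum_i \nabla G(\theta^*;D_i) - V$ and is $o_p(1)$ by Assumption~\ref{asum:gradient-uniform}(i) evaluated at $\theta^*$. For the second sum, the cases diverge. In case~(a), Assumption~\ref{asum:gradient-uniform}(a)(i) yields a per-batch local uniform LLN, which combines with the expectation-level Lipschitz bound in Assumption~\ref{asum:gradient-lipschitz}(a) to give $n_i^{-1}\|\nabla G(\tilde\theta_i;D_i) - \nabla G(\theta^*;D_i)\|_F = o_p(1)$ once $\tilde\theta_i \in B_\epsilon$, which holds eventually by consistency. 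In case~(b), where small $n_i$ are permitted, the per-batch uniform LLN is unavailable, so the stronger pathwise Lipschitz control in Assumption~\ref{asum:gradient-lipschitz}(b) delivers $n_i^{-1}\|\nabla G(\tilde\theta_i;D_i) - \nabla G(\theta^*;D_i)\|_F \le L\|\tilde\theta_i - \theta^*\|_2 = o_p(1)$ directly. Because $b$ is finite and the weights $n_i/N_b$ are bounded by $1$, summation preserves $o_p(1)$, so $\hat{V}_b \cin{p} V$.

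For $\hat{U}_b$, I would apply the mean value theorem to write $G(\tilde\theta_i; D_i) = G(\theta^*; D_i) + \nabla G(\bar\theta_i; D_i)(\tilde\theta_i - \theta^*)$ for some $\bar\theta_i$ on the segment between $\tilde\theta_i$ and $\theta^*$, and then rearrange to
\begin{align*}
\hat{U}_b + V\theta^*
= \bar{g}_b(\theta^*)
+ \frac{1}{N_b}\sum_{i=1}^{b}\bigl[\nabla G(\bar\theta_i; D_i) - \nabla G(\tilde\theta_i; D_i)\bigr](\tilde\theta_i - \theta^*)
+ (V - \hat{V}_b)\theta^*.
\end{align*}
The first term tends to $\E\{g(\theta^*, x_{i,j})\} = 0$ by the ergodic theorem under Assumptions~\ref{asum:stationarity}, \ref{asum:moment-population}, \ref{asum:ergodicity} and~\ref{asum:moment-dominance}; the third term is $o_p(1)$ by the convergence of $\hat{V}_b$ just established; and the Taylor remainder is bounded in norm by $L N_b^{-1}\sum_{i=1}^{b} n_i \|\tilde\theta_i - \theta^*\|_2^2$ via Assumption~\ref{asum:gradient-lipschitz}, a finite sum of $o_p(1)$ terms. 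Collecting the three pieces yields $\hat{U}_b \cin{p} -V\theta^*$.

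The main obstacle is case~(b). When only $n_1 \to \infty$, no law of large numbers acts within the potentially small batches $D_2, \ldots, D_b$, so the classical route of locally uniform convergence at $\tilde\theta_i$ is unavailable. It is the strengthened pathwise Lipschitz condition of Assumption~\ref{asum:gradient-lipschitz}(b), uniform in $x \in \mathcal{X}$, that allows us to transfer consistency of $\tilde\theta_i$ into a batch-level deviation bound with no further probabilistic cost. Careful bookkeeping of which factors are $O_p(1)$ versus $o_p(1)$, and verifying that the heterogeneous batch sizes in the weighted finite sums still aggregate to $o_p(1)$, is the delicate part of the argument.
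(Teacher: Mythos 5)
Your proposal follows essentially the same route as the paper's proof: a triangle-inequality decomposition of $\hat{V}_b - V$ handled by the per-batch local uniform LLN plus the expectation-level Lipschitz bound in case (a) and by the pathwise Lipschitz bound plus the full-sample LLN at $\theta^*$ in case (b), followed by a mean-value-theorem expansion of $G(\tilde\theta_i; D_i)$ whose remainder is controlled by the same Lipschitz conditions, with $\bar{g}_b(\theta^*) \cin{p} 0$ and the already-established convergence of $\hat{V}_b$ finishing the argument. The only imprecision is that in case (a) the remainder bound $L N_b^{-1}\sum_i n_i \norm{\tilde\theta_i - \theta^*}_2^2$ is not directly available from the expectation-level Assumption \ref{asum:gradient-lipschitz}(a); as in your own treatment of $\hat{V}_b$ (and as in the paper), one must route through the per-batch sup-LLN, which adds a term of order $o_p(n_i \norm{\tilde\theta_i - \theta^*}_2)$ but leaves the $o_p(1)$ conclusion intact.
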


\begin{proof}
	We can prove this lemma under the condition that $\norm{\hat\theta_i -\theta^*} = o_p(1)$ and $\hat\theta_i \in B_\epsilon$ for $i = 1, \ldots, b-1$; 
	see the proof of Theorem \ref{thm:consistency} based on induction with the same assumptions.
	
	(Case a). First, we follow Theorem 4.1 in \citet{newey1994asymptotic} to show that $\hat{V}_b \cin{p} V$.
	By the triangle inequality, Assumptions \ref{asum:gradient-lipschitz}(a) and \ref{asum:gradient-uniform}(a)(i), we have
	\begin{align}
		& \norm{\hat{V}_b - V} \nonumber \\
		={}& \norm{\frac{1}{N_b} \left\{ \sum_{i=1}^{b-1} \nabla G(\hat\theta_i; D_i) +\nabla G(\hat\theta_{b-1}; D_b) \right\} -\E\{\nabla g(\theta^*, x_{i,j})\}} \nonumber \\
		\le{}& \norm{\frac{1}{N_b} \left\{ \sum_{i=1}^{b-1} \nabla G(\hat\theta_i; D_i) +\nabla G(\hat\theta_{b-1}; D_b) \right\} -\sum_{i=1}^{b-1} \frac{n_i}{N_b} \E\{\nabla g(\hat\theta_i, x_{i,j})\} -\frac{n_b}{N_b} \E\{\nabla g(\hat\theta_{b-1}, x_{i,j})\}}  \nonumber \\
		& +\norm{\sum_{i=1}^{b-1} \frac{n_i}{N_b} \E\{\nabla g(\hat\theta_i, x_{i,j})\} +\frac{n_b}{N_b} \E\{\nabla g(\hat\theta_{b-1}, x_{i,j})\} -\sum_{i=1}^b \frac{n_i}{N_b} \E\{\nabla g(\theta^*, x_{i,j})\}} \nonumber \\
		\le{}& \sum_{i=1}^b \frac{n_i}{N_b} \sup_{\theta \in B_\epsilon} \norm{ \frac{1}{n_i} \nabla G(\theta; D_i) -\E\{\nabla g(\theta, x_{i,j})\} }
		+L \left( \frac{n_b}{N_b} \norm{\hat\theta_{b-1} -\theta^*} +\sum_{i=1}^{b-1} \frac{n_i}{N_b} \norm{\hat\theta_i -\theta^*} \right) \nonumber \\
		\cin{p}{}& 0. \label{eq:convergence-v-a}
	\end{align}
	Next, we bound the Taylor approximation error using Assumptions \ref{asum:gradient-lipschitz}(a) and \ref{asum:gradient-uniform}(a)(ii).
	We will point out the difference under Assumption \ref{asum:gradient-uniform}(a)(i) later.
	To illustrate,
	\begin{align*}
		& \norm{\frac{1}{n_i} \nabla G(\bar\theta_i; D_i) -\frac{1}{n_i} \nabla G(\hat\theta_i; D_i)} \\
		\le{}& \norm{ \frac{1}{n_i} \nabla G(\bar\theta_i; D_i) -\E\{\nabla g(\bar\theta_i, x_{i,j})\}}
		+\norm{ \frac{1}{n_i} \nabla G(\hat\theta_i; D_i) -\E\{\nabla g(\hat\theta_i, x_{i,j})\}} \\
		& +\norm{ \E\{\nabla g(\bar\theta_i, x_{i,j})\} -\E\{\nabla g(\hat\theta_i, x_{i,j})\}} \\
		\le{}& 2 \sup_{\theta \in B_\epsilon} \norm{ \frac{1}{n_i} \nabla G(\theta; D_i) -\E\{\nabla g(\theta, x_{i,j})\}}
		+L \norm{\bar\theta_i -\hat\theta_i} \\
		={}& O_p(n_i^{-1/2} +\norm{\hat\theta_i -\theta^*}),
	\end{align*}
	where $\bar\theta_i$ is a point between $\hat\theta_i$ and $\theta^*$.
	By the mean value theorem,
	\[
	G(\theta^*; D_i)
	= G(\hat\theta_i; D_i) +\nabla G(\hat\theta_i; D_i) (\theta^*-\hat\theta_i) 
	+\left\{ \nabla G(\bar\theta_i; D_i) -\nabla G(\hat\theta_i; D_i) \right\} (\theta^*-\hat\theta_i),
	\]
	where the sub-multiplicativity of Frobenius norm gives
	\begin{align}
		\norm{\left\{ \nabla G(\bar\theta_i; D_i) -\nabla G(\hat\theta_i; D_i) \right\} (\theta^*-\hat\theta_i)}
		&\le \norm{ \nabla G(\bar\theta_i; D_i) -\nabla G(\hat\theta_i; D_i) } \norm{\theta^*-\hat\theta_i} \nonumber \\
		&= O_p\left( n_i^{1/2} \norm{\hat\theta_i -\theta^*} +n_i\norm{\hat\theta_i -\theta^*}^2 \right). \label{eq:taylor-error-2nd-a}
	\end{align}
	Repeating this trick, we have
	\begin{align}
		\begin{split}
			\bar{g}_b(\theta^*)
			={}& \frac{1}{N_b} \sum_{i=1}^{b-1} \left\{ G(\hat\theta_i; D_i) +\nabla G(\hat\theta_i; D_i) (\theta^*-\hat\theta_i) \right\} 
			+\frac{1}{N_b} G(\theta^*; D_b) \nonumber \\
			& +O_p\left\{ \sum_{i=1}^{b-1} \bigg( \frac{n_i^{1/2}}{N_b} \norm{\hat\theta_i-\theta^*} +\frac{n_i}{N_b} \norm{\hat\theta_i-\theta^*}^2 \bigg) \right\} \nonumber \\
			={}& \frac{N_{b-1}}{N_b} (U_{b-1} +V_{b-1} \theta^*)
			+\frac{1}{N_b} G(\theta^*; D_b) 
			+O_p\left\{ \sum_{i=1}^{b-1} \bigg( \frac{n_i^{1/2}}{N_b} \norm{\hat\theta_i-\theta^*} +\frac{n_i}{N_b} \norm{\hat\theta_i-\theta^*}^2 \bigg) \right\} \nonumber
		\end{split} \\
		\begin{split}
			={}& \hat{U}_b +\hat{V}_b \theta^*
			+O_p\left( \frac{n_b^{1/2}}{N_b} \norm{\hat\theta_{b-1}-\theta^*} +\sum_{i=1}^{b-1} \frac{n_i^{1/2}}{N_b} \norm{\hat\theta_i-\theta^*} \right) \\
			& +O_p\left( \frac{n_b}{N_b} \norm{\hat\theta_{b-1}-\theta^*}^2 +\sum_{i=1}^{b-1} \frac{n_i}{N_b} \norm{\hat\theta_i-\theta^*}^2 \right). \label{eq:moment-error-a}
		\end{split}	
	\end{align}
	If we use Assumption \ref{asum:gradient-uniform}(a)(i) instead, \eqref{eq:moment-error-a} becomes
	\begin{align*}
		\bar{g}_b(\theta^*) 
		={}& \hat{U}_b +\hat{V}_b \theta^*
		+o_p\left( \frac{n_b}{N_b} \norm{\hat\theta_{b-1}-\theta^*} +\sum_{i=1}^{b-1} \frac{n_i}{N_b} \norm{\hat\theta_i-\theta^*} \right) \\
		& +O_p\left( \frac{n_b}{N_b} \norm{\hat\theta_{b-1}-\theta^*}^2 +\sum_{i=1}^{b-1} \frac{n_i}{N_b} \norm{\hat\theta_i-\theta^*}^2 \right).
	\end{align*}
	Since $\bar{g}_b(\theta^*) \cin{p} 0$ by the uniform law of large numbers (see, e.g., Lemma 2.4 in \citealt{newey1994asymptotic}),
	$\hat{V}_b \cin{p} V$, and 
	$\norm{\hat\theta_i -\theta^*} = o_p(1)$ for $i = 1, \ldots, b-1$, 
	applying Slutsky's theorem gives $\hat{U}_b = \bar{g}_b(\theta^*) -\hat{V}_b \theta^* +o_p(1) \cin{p} -V \theta^*$.
	
	(Case b). Again, we try to show that $\hat{V}_b \cin{p} V$ first.
	By the triangle inequality, Assumptions \ref{asum:gradient-lipschitz}(b) and \ref{asum:gradient-uniform}(b)(i),
	\begin{align}
		& \norm{\hat{V}_b - V} \nonumber \\
		={}& \norm{ \frac{1}{N_b} \left\{ \sum_{i=1}^{b-1} \nabla G(\hat\theta_i; D_i) +\nabla G(\hat\theta_{b-1}; D_b) \right\} -\E\{\nabla g(\theta^*, x_{i,j})\}} \nonumber \\
		\le{}& \norm{ \frac{1}{N_b} \left\{ \sum_{i=1}^{b-1} \nabla G(\hat\theta_i; D_i) +\nabla G(\hat\theta_{b-1}; D_b) \right\} -\frac{1}{N_b} \sum_{i=1}^b \nabla G(\theta^*; D_i) } \nonumber \\
		& +\norm{ \frac{1}{N_b} \sum_{i=1}^b \nabla G(\theta^*; D_i) -\E\{\nabla g(\theta^*, x_{i,j})\} } \nonumber \\
		\le{}& L \left( \frac{n_b}{N_b} \norm{\hat\theta_{b-1}-\theta^*} +\sum_{i=1}^{b-1} \frac{n_i}{N_b} \norm{\hat\theta_i-\theta^*} \right) 
		+\norm{ \frac{1}{N_b} \sum_{i=1}^b \nabla G(\theta^*; D_i) -\E\{\nabla g(\theta^*, x_{i,j})\} } \nonumber \\
		\cin{p}{}& 0. \label{eq:convergence-v-b}
	\end{align}
	Using Assumption \ref{asum:gradient-lipschitz}(b),
	\begin{align}
		\norm{\nabla G(\bar\theta_i; D_i) -\nabla G(\hat\theta_i; D_i)}
		&\le \norm{\nabla G(\bar\theta_i; D_i) -\nabla G(\theta^*; D_i)} +\norm{\nabla G(\hat\theta_i; D_i) -\nabla G(\theta^*; D_i)} \nonumber \\
		&= L \cdot n_i (\norm{\bar\theta_i -\theta^*} +\norm{\hat\theta_i -\theta^*})
		= O_p\left( n_i \norm{\hat\theta_i -\theta^*} \right), \label{eq:taylor-error-2nd-b}
	\end{align}
	where $\bar\theta_i$ is a point between $\hat\theta_i$ and $\theta^*$.
	Repeating the mean value theorem, we have
	\begin{align}
		\begin{split}
			\bar{g}_b(\theta^*)
			={}& \frac{1}{N_b} \sum_{i=1}^{b-1} \left\{ G(\hat\theta_i; D_i) +\nabla G(\hat\theta_i; D_i) (\theta^*-\hat\theta_i) \right\} 
			+\frac{1}{N_b} G(\theta^*; D_b)
			+O_p\left( \sum_{i=1}^{b-1} \frac{n_i}{N_b} \norm{\hat\theta_i-\theta^*}^2 \right) \nonumber \\
			={}& \frac{N_{b-1}}{N_b} (U_{b-1} +V_{b-1} \theta^*)
			+\frac{1}{N_b} G(\theta^*; D_b) 
			+O_p\left( \sum_{i=1}^{b-1} \frac{n_i}{N_b} \norm{\hat\theta_i-\theta^*}^2 \right) \nonumber
		\end{split} \\
		\begin{split}
			={}& \hat{U}_b +\hat{V}_b \theta^*
			+O_p\left( \frac{n_b}{N_b} \norm{\hat\theta_{b-1}-\theta^*}^2 +\sum_{i=1}^{b-1} \frac{n_i}{N_b} \norm{\hat\theta_i-\theta^*}^2 \right). \label{eq:moment-error-b}
		\end{split}	
	\end{align}
	The conclusion then follows by the same argument in case (a). 
\end{proof}

\subsection{\texorpdfstring{Lemma \ref{lem:bound-n}}{
		Lemma 2}} \label{sec:proof-bound-n} 

\begin{lemma}[Bound for sums of sample sizes] \label{lem:bound-n}
	Recall that $n_i$ is the sample size of $D_i$ and $N_i = \sum_{j=1}^i n_j$.
	We have
	\[
	\sum_{i=1}^b \frac{n_i}{N_i} \le 1 +\log \frac{N_b}{n_1}, \quad
	\sum_{i=1}^b \frac{n_i}{\sqrt{N_i}} \le 2 \sqrt{N_b} \quad \text{and} \quad
	\sum_{i=1}^b \frac{\sqrt{n_i}}{\sqrt{N_i}} \le \frac{2\sqrt{N_b}}{\sqrt{\min(n_1, \ldots, n_b)}}.
	\]
\end{lemma}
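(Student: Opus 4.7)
The plan is to prove each of the three bounds by a short integral-comparison or conjugate-expansion argument, treating the index $i=1$ (where there is no $N_0$) with a small bit of care. All three inequalities are elementary and rely only on monotonicity of $x \mapsto x^{-\alpha}$ for suitable $\alpha > 0$, together with the telescoping identity $n_i = N_i - N_{i-1}$.

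For the first bound, I would write $n_i/N_i = (N_i - N_{i-1})/N_i$ and compare against the integral
\[
\frac{N_i - N_{i-1}}{N_i} \le \int_{N_{i-1}}^{N_i} \frac{\di x}{x} = \log\!\bigl(N_i/N_{i-1}\bigr), \quad i \ge 2,
\]
which is valid because $1/x$ is decreasing. Summing from $i=2$ to $b$ telescopes to $\log(N_b/N_1) = \log(N_b/n_1)$, and adding the boundary term $n_1/N_1 = 1$ produces the claimed bound $1 + \log(N_b/n_1)$.

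For the second bound, I would use the conjugate identity
\[
\sqrt{N_i} - \sqrt{N_{i-1}} = \frac{n_i}{\sqrt{N_i} + \sqrt{N_{i-1}}} \ge \frac{n_i}{2\sqrt{N_i}}, \quad i \ge 2,
\]
which rearranges to $n_i/\sqrt{N_i} \le 2(\sqrt{N_i} - \sqrt{N_{i-1}})$. The $i=1$ term satisfies $n_1/\sqrt{N_1} = \sqrt{n_1} \le 2\sqrt{N_1}$, so the full sum telescopes and is dominated by $2\sqrt{N_b}$. The third bound is then an immediate corollary: since $\sqrt{n_i} = n_i/\sqrt{n_i} \le n_i/\sqrt{\min_j n_j}$, factoring out $1/\sqrt{\min_j n_j}$ and invoking the second bound yields $\sum_i \sqrt{n_i}/\sqrt{N_i} \le 2\sqrt{N_b}/\sqrt{\min(n_1,\ldots,n_b)}$.

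There is no genuine obstacle here; the only thing to get right is matching the correct integral or conjugate trick to each power of $N_i$ in the denominator, and verifying that the $i=1$ boundary term (where the telescoping gap $\sqrt{N_i}-\sqrt{N_{i-1}}$ or $\log(N_i/N_{i-1})$ is not defined) can be absorbed into the claimed bound. I would present the proof as three short displayed calculations, one per inequality, with the third pointed out as a direct consequence of the second.
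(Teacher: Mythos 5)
Your proposal is correct, and all three displayed estimates check out: the integral comparison $(N_i-N_{i-1})/N_i \le \int_{N_{i-1}}^{N_i} x^{-1}\,\di x$ is valid because $1/x$ attains its minimum on $[N_{i-1},N_i]$ at the right endpoint, the conjugate bound $n_i/\sqrt{N_i} \le 2(\sqrt{N_i}-\sqrt{N_{i-1}})$ telescopes as claimed, and your handling of the $i=1$ boundary terms is fine (indeed both arguments also go through uniformly if one simply sets $N_0=0$). The main difference from the paper is one of self-containedness rather than substance: the paper does not prove the first two inequalities at all, but restates them from \citet{luo2022qif} and refers the reader to that paper's supplementary material, proving only the third inequality --- and for that third inequality your argument (pull out $1/\sqrt{\min_j n_j}$ and invoke the second bound) is exactly the paper's. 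So your write-up buys a fully self-contained lemma at the cost of two short displays, while the paper's version buys brevity at the cost of an external dependence; the underlying techniques for the first two bounds are the standard ones and are presumably what the cited supplement contains as well.
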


\begin{proof}
	The first two inequalities are restated from \citet{luo2022qif} for ease of reference.
	Their proof can be found in the supplementary material of \citet{luo2022qif}.
	The last inequality follows from
	\[
	\sum_{i=1}^b \frac{\sqrt{n_i}}{\sqrt{N_i}}
	\le \frac{1}{\sqrt{\min(n_1, \ldots, n_b)}} \sum_{i=1}^b \frac{n_i}{\sqrt{N_i}}
	\le \frac{2\sqrt{N_b}}{\sqrt{\min(n_1, \ldots, n_b)}}.
	\qedhere
	\]
\end{proof}

\section{Additional results} \label{sec:additional-results}
\subsection{Online instrumental variables regression} \label{sec:additional-iv}

Figures \ref{fig:iv-het-supp} and \ref{fig:iv-arma-supp} report the additional results of online instrumental variables regression in the independent model \ref{enum:iv-het} and dependent model \ref{enum:iv-arma}, respectively.
The findings are same as those described in \cref{sec:online-instrumental-variables-regression}.

\begin{figure}[!t]
	\centering
	\includegraphics[width=\textwidth]{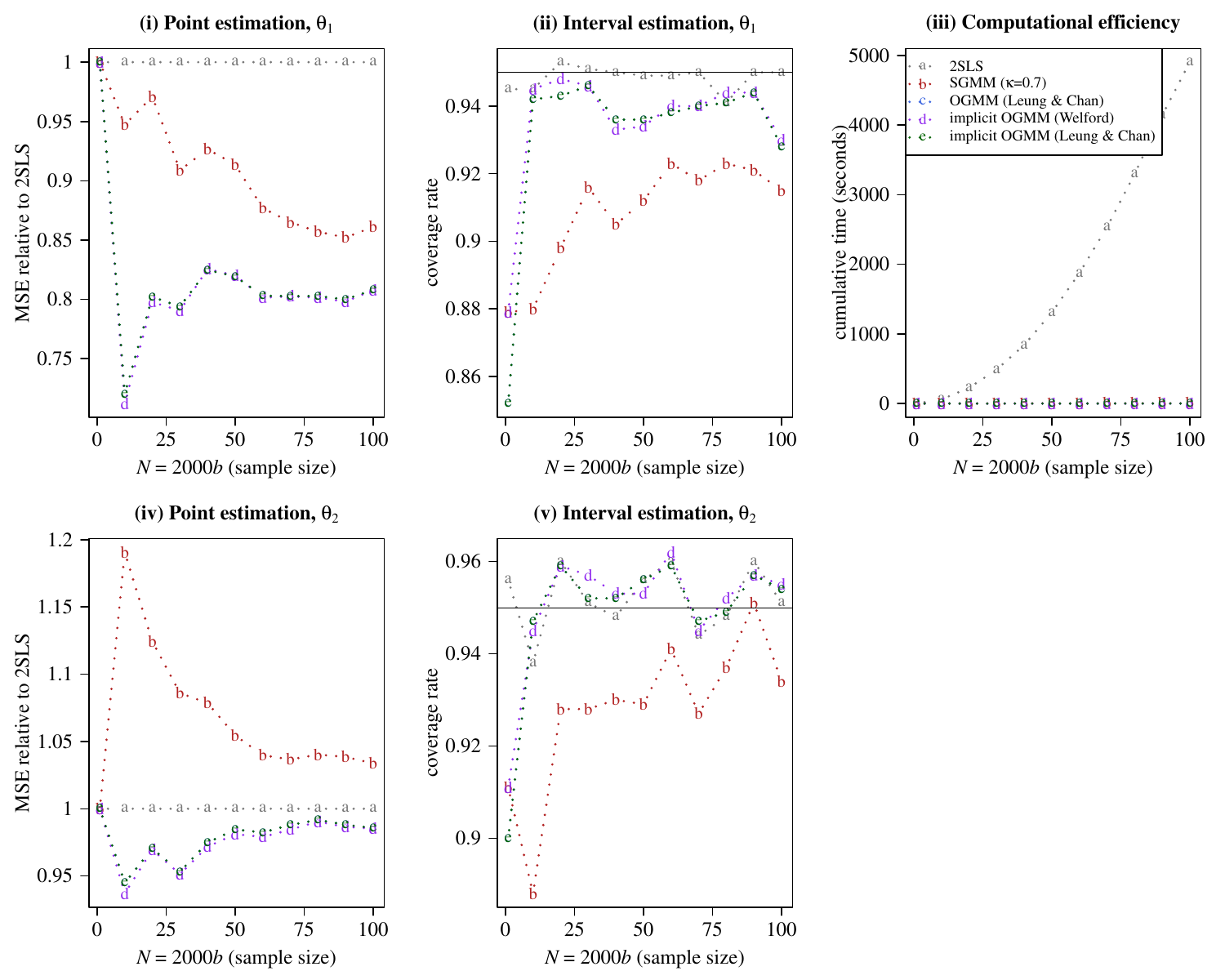}
	\caption{Online instrumental variables estimation of $\theta^*$ in the independent model \ref{enum:iv-het}.
		The caption of Figure \ref{fig:iv-het} also applies here.}
	\label{fig:iv-het-supp}
\end{figure}

\begin{figure}[!t]
	\centering
	\includegraphics[width=\textwidth]{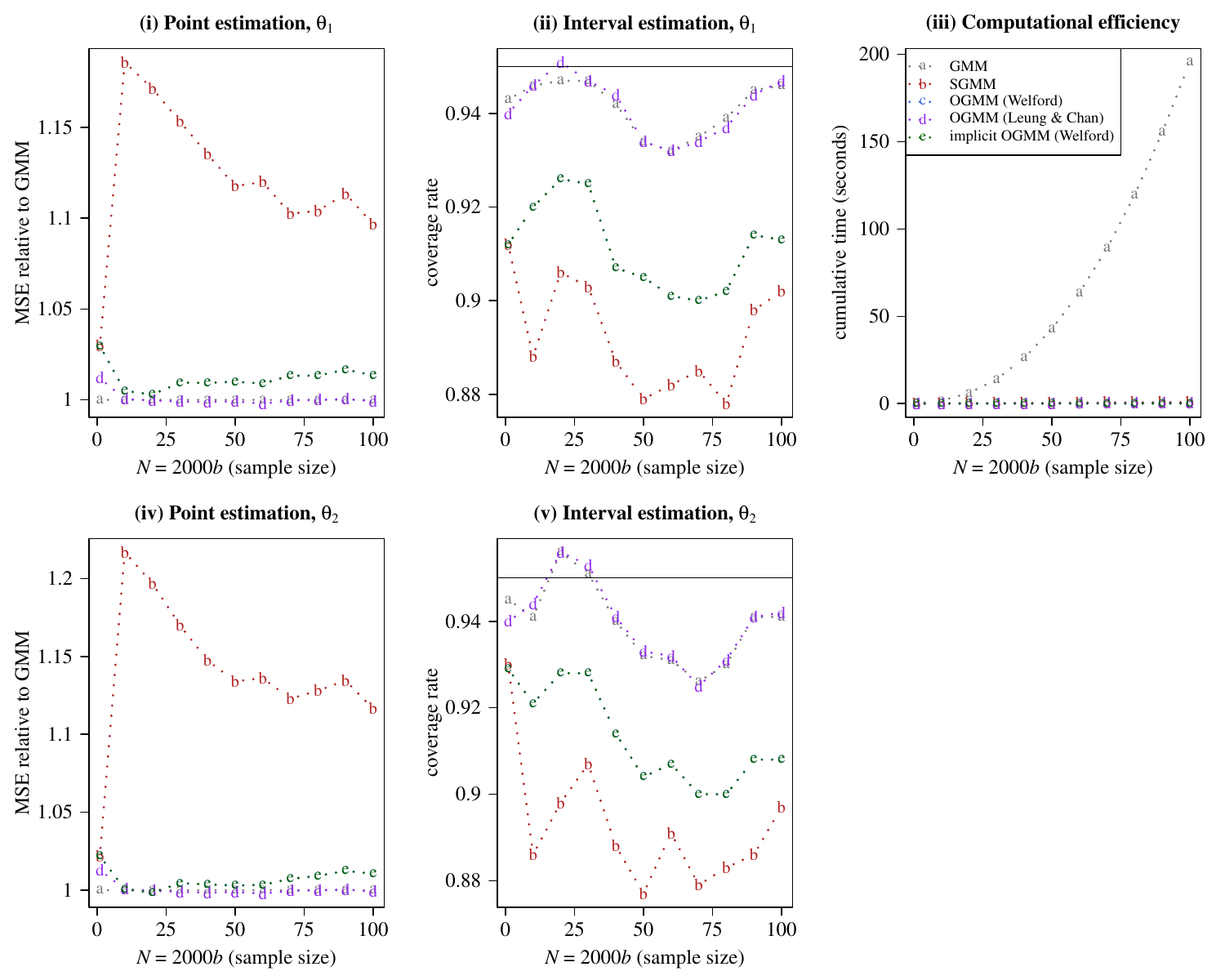}
	\caption{Online instrumental variables estimation of $\theta^*$ in the dependent model \ref{enum:iv-arma}.
		The caption of Figure \ref{fig:iv-het} also applies here.}
	\label{fig:iv-arma-supp}
\end{figure}

\subsection{Online Sargan--Hansen test} \label{sec:additional-overident}

Figure \ref{fig:overident-supp} reports the additional results of online Sargan--Hansen test at $5\%$ nominal level.
It verifies the findings in \cref{sec:online-sargan-hansen-test} for other values of $\theta_2^*$.

\begin{figure}[!t]
	\centering
	\includegraphics[width=\textwidth]{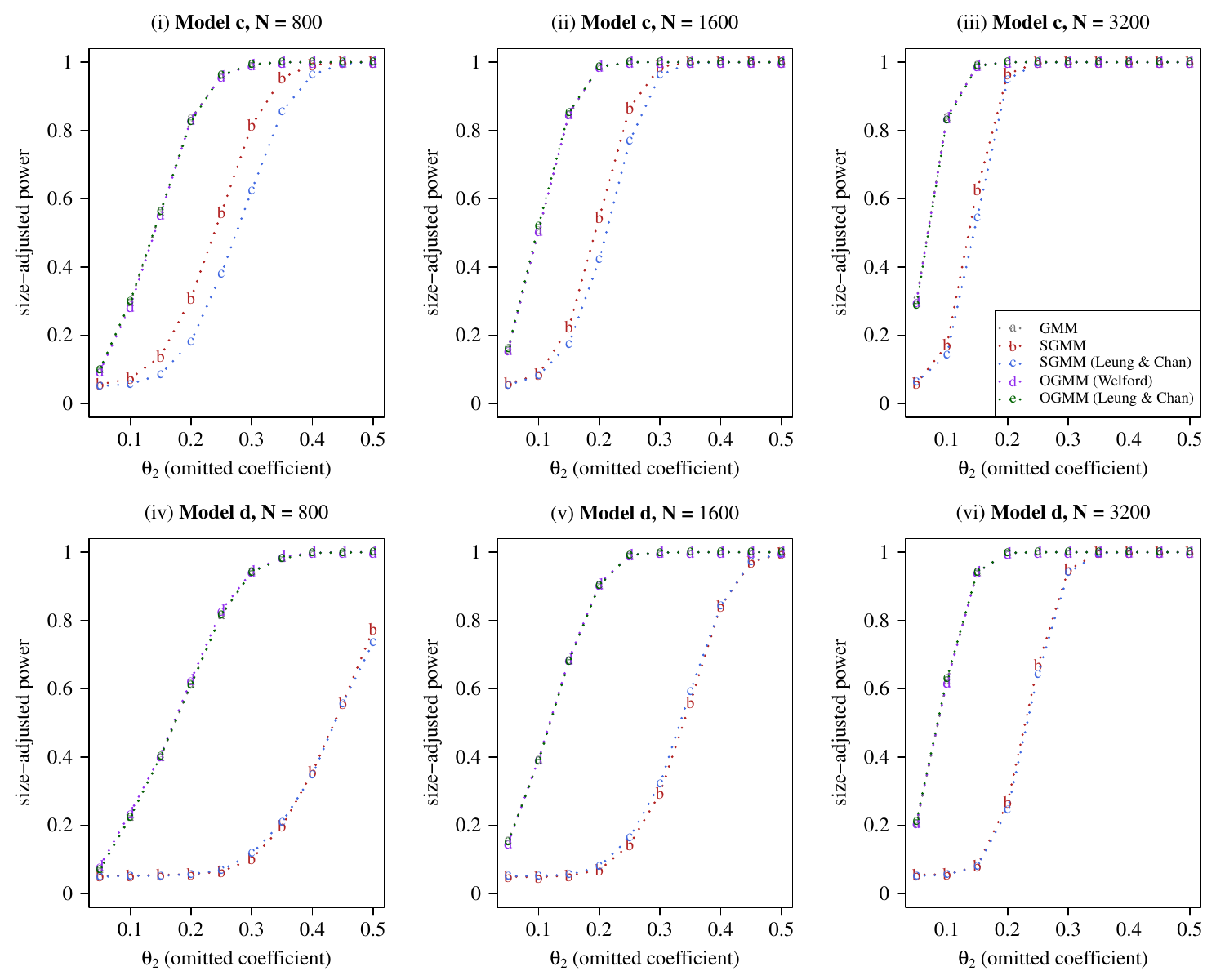}
	\caption{Online Sargan--Hansen test at $5\%$ nominal level under the independent model \ref{enum:overident-ind} (upper panel) and dependent model \ref{enum:overident-dep} (lower panel).
		The case $\theta_2^* = 0$ is omitted because Figure \ref{fig:overident} has already shown the size at all sample sizes.
		Therefore, a higher size-adjusted power means that the test is better here.}
	\label{fig:overident-supp}
\end{figure}

\subsection{Online quantile regression} \label{sec:additional-quant}

Figure \ref{fig:reg-quant-supp} reports the additional results of online quantile regression.
It shows that implicit updates do not improve the statistical efficiency.

\begin{figure}[!t]
	\centering
	\includegraphics[width=\textwidth]{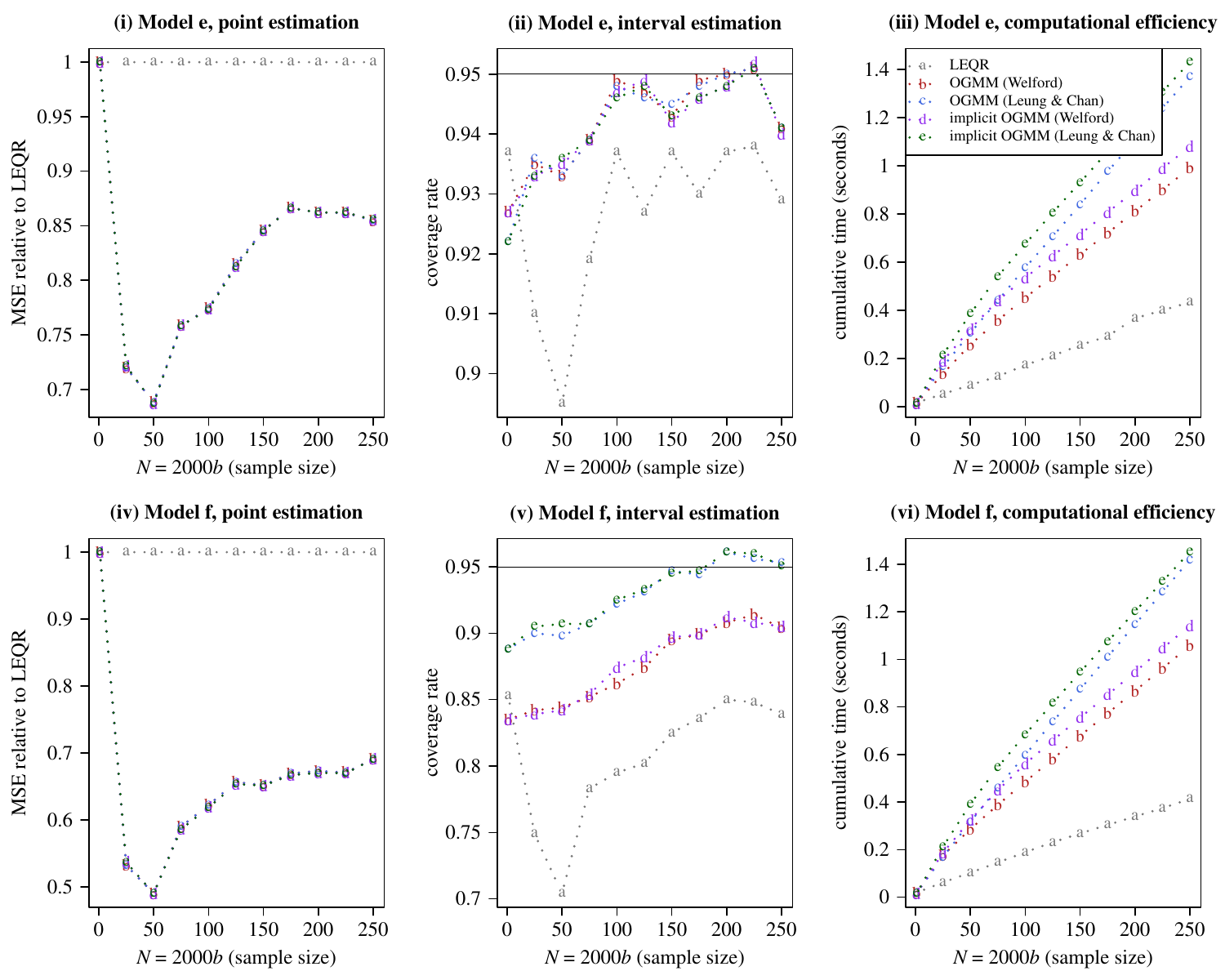}
	\caption{Online quantile regression under the independent model \ref{enum:quant-ind} (upper panel) and dependent model \ref{enum:quant-dep} (lower panel).
		The caption of Figure \ref{fig:iv-het} also applies here.}
	\label{fig:reg-quant-supp}
\end{figure}

\subsection{Online anomaly detection} \label{sec:additional-stable}

Figure \ref{fig:stable-supp} reports the additional results of online anomaly detection at $5\%$ nominal level.
The findings are same as those described in \cref{sec:online-anomaly-detection}.

\begin{figure}[!t]
	\centering
	\includegraphics[width=\textwidth]{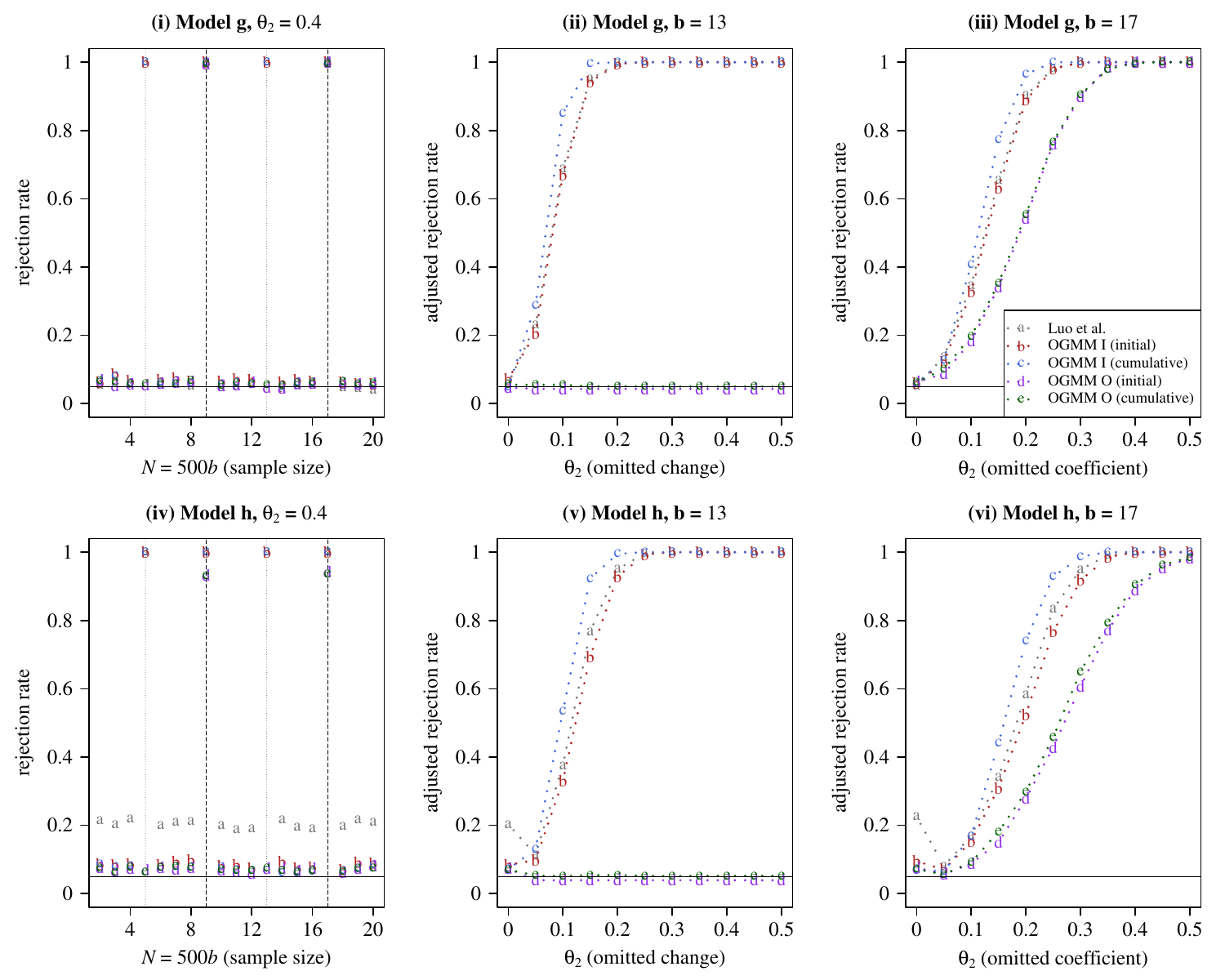}
	\caption{Online anomaly detection at $5\%$ nominal level under the independent model \ref{enum:stable-ind} (upper panel) and dependent model \ref{enum:stable-dep} (lower panel).
		The caption of Figure \ref{fig:stable} also applies here.}
	\label{fig:stable-supp}
\end{figure}

\section{Implementation details} \label{sec:implementation-details}
\subsection{Stochastic volatility modeling} \label{sec:implementation-sv}

Apart from removing observations in the weekend, we remove abnormal exchange rates at 2001-09-12 03:12:00 (due to negativity) and 2001-06-08 06:32:00 (due to an extreme jump that leads to a log 5-minute return of over $700\%$).

\subsection{Inertial sensor calibration} \label{sec:implementation-sensor}

To use $\OGMM$ with wavelet moments, we extend the pyramid algorithm for maximal overlap discrete wavelet transform using the Haar filter to an online setting in Algorithm \ref{algo:omodwt}.
The key idea is to store the necessary data in queues so that they can be retrieved and later deleted in $O(1)$ time.
As a result, Algorithm \ref{algo:omodwt} gives identical wavelets compared with its offline counterpart.
For an incoming batch of size $n_b$, the time and space complexities are $O(n_b q)$ and $O(n_b +2^q)$, respectively.

\begin{algorithm}[!t]
	\caption{Online maximal overlap discrete wavelet transform} \label{algo:omodwt}
	\SetAlgoVlined
	\DontPrintSemicolon
	\SetNlSty{texttt}{[}{]}
	\small
	\textbf{initialization}: \;
	Receive $q$ and $\{y_t\}_{t=1}^{n}$, where $n > 2^q$ \;
	Initialize empty queues $\vec{z}_0, \ldots, \vec{z}_{q-1}$ \;
	\For{$j = 0$ \KwTo $q-1$}{
		Set $\tau = 2^j$ \;
		\For{$t = 2^{j+1}$ \KwTo $n$}{
			Set $w_{t, j} = 0.5 y_t -0.5 y_{t-\tau}$ \tcp{Haar filter}
			Set $v_t = 0.5 y_t +0.5 y_{t-\tau}$ \tcp{pyramid algorithm}
		}
		Store $y_{n-\tau+1}, \ldots, y_n$ in $\vec{z}_j$ \tcp{for online updates after initialization}
		Set $\{y_t\}_{t=1}^{n} = \{v_t\}_{t=1}^{n}$ \;
	}
	\Begin{
		Receive $y_{n+1}$ \;
		\For{$j = 0$ \KwTo $q-1$}{
			Set $\tau = 2^j$ \;
			Pop $y_{n-\tau+1}$ from $\vec{z}_j$ \;
			Set $w_{n+1, j} = 0.5 y_{n+1} -0.5 y_{n-\tau+1}$ \;
			Push $y_{n+1}$ into $\vec{z}_j$ \;
			Set $y_{n+1} = 0.5 y_{n+1} +0.5 y_{n-\tau+1}$ \;
		}
		Set $n = n +1$ \;
	}
\end{algorithm}

In $\OGMM$ estimation, we slightly modify the arguments for long-run variance estimation compared with other examples in our paper.
Specially, we use \citeauthor{rlrv}'s \citeyearpar{rlrv} recursive long-run variance estimator with $\lambda=1$, $\phi=1$, \texttt{pilot=0} and \texttt{warm=TRUE}, which is the default setting in the R-package \texttt{rlaser}.
We also ignore $\{ w_{t, j} \}_{t=1}^{2^q-1}$ for $j = 1, \ldots, q$ since $w_{t, q}$ is only defined for $t \ge 2^q$.
For structural stability testing with \eqref{eq:stable-unrestricted}, we need to perform unrestricted $\GMM$ estimation. 
This is done using \texttt{optim} in the R-package \texttt{stats} with \texttt{method="L-BFGS-B"}, \texttt{lower = c(rep(c(0.0001, 1e-13), 2), 1e-13)} and \texttt{upper = c(rep(c(0.9999, 1e-5), 2), 1e-5)}, where the order of parameter is $\theta^* = (\rho_1, \sigma_1^2, \rho_2, \sigma_2^2, \sigma_3^2)^\T$.

Finally, we remark that the R-package \texttt{gmwm} is slightly modified in our paper; see \url{https://github.com/hemanlmf/gmwm/tree/gmwm2-improvement}.
We add the possibility to specify the number of moment condition $q$ as the original version fixes $q = \lfloor \log_2 N \rfloor$, where $N$ is the sample size.
We also include an argument to change the optimization algorithm, although the default conjugate gradients method is still used at the end.

\end{document}